\newtheorem{proposition}{Proposition}
\newtheorem{theorem}{Theorem}
\newtheorem*{theorem*}{Theorem}
\newtheorem{lemma}{Lemma}
\newtheorem{corollary}{Corollary}
\newtheorem*{corollary*}{Corollary}
\newtheorem{assumption}{Assumption}
\newtheorem{definition}{Definition}
\newcommand{\Ga}{\begin{tikzpicture}[x=0.44cm,y=0.3cm]%
\draw[thick] (0,0) -- (1,0);%
\draw[thick] (0,0.5) -- (1,0.5);%
\draw[thick] (0,1) -- (1,1);%
\end{tikzpicture}}
\newcommand{\Gb}{\begin{tikzpicture}[x=0.44cm,y=0.3cm]%
\draw[thick] (0,0) -- (1,0.5);%
\draw[thick] (0,0.5) -- (1,0);%
\draw[thick] (0,1) -- (1,1);%
\end{tikzpicture}}
\newcommand{\Gc}{\begin{tikzpicture}[x=0.44cm,y=0.3cm]%
\draw[thick] (0,0) -- (1,0);%
\draw[thick] (0,0.5) -- (0.4,0.5) -- (0.4,1) -- (0,1);%
\draw[thick] (1,0.5) -- (0.6,0.5) -- (0.6,1) -- (1,1);%
\end{tikzpicture}}
\newcommand{\Gd}{\begin{tikzpicture}[x=0.44cm,y=0.3cm]%
\draw[thick] (0,0) -- (1,0);%
\draw[thick] (0,0.5) --  (1,1);%
\draw[thick] (0,1) -- (1,0.5);%
\end{tikzpicture}}
\newcommand{\Ge}{\begin{tikzpicture}[x=0.44cm,y=0.3cm]%
\draw[thick] (0,0) -- (1,0);%
\draw[thick] (0.4,0) --  (0.4,1) -- (0,1);%
\draw[thick] (0,0.5) -- (0.4,0.5);%
\draw[thick] (1,0.5) --  (0.6,0.5) -- (0.6,1) -- (1,1);%
\end{tikzpicture}}
\newcommand{\Gf}{\begin{tikzpicture}[x=0.44cm,y=0.3cm]%
\draw[thick] (0,0) -- (1,0);%
\draw[thick] (0.6,0) --  (0.6,1) -- (1,1);%
\draw[thick] (0.6,0.5) -- (1,0.5);%
\draw[thick] (0,0.5) --  (0.4,0.5) -- (0.4,1) -- (0,1);%
\end{tikzpicture}}
\newcommand{\Gg}{\begin{tikzpicture}[x=0.44cm,y=0.3cm]%
\draw[thick] (0,0.5) -- (1,0.5);%
\draw[thick] (1,0) -- (0,1);%
\draw[thick] (0,0) -- (1,1);%
\end{tikzpicture}}
\newcommand{\id}{{\rm{id}}} %identity
\newcommand{\R}{\mathbbm{R}}
\newcommand{\C}{\mathbbm{C}}
\newcommand{\F}{\mathbbm{F}}
\newcommand{\Q}{\mathbbm{Q}}
\newcommand{\cP}{\mathcal{P}}
\newcommand{\cT}{\mathcal{T}}
\newcommand{\cC}{\mathcal{C}}
\newcommand{\cS}{\mathcal{S}}
\newcommand{\cA}{\mathcal{A}}
\newcommand{\cE}{\mathcal{E}}
\newcommand{\cM}{\mathcal{M}}
\newcommand{\1}{\mathbbm{1}}
\newcommand{\DD}{\hat{\Delta}}
\def\>{{\rangle}}
\def\<{{\langle}}
\newcommand{\be}{\begin{equation}}
\newcommand{\ee}{\end{equation}}
\newcommand{\bea}{\begin{eqnarray}}
\newcommand{\eea}{\end{eqnarray}}
\newcommand{\ket}[1]{|#1\rangle} %ket
\newcommand{\bra}[1]{\langle#1|} %bra
\newcommand{\kb}[2]{|#1\rangle\langle#2|} %ketbra
\newcommand{\bk}[2]{\langle #1 | #2 \rangle} %braket
\newcommand{\tr}[1]{\mathrm{tr}\left[#1\right]} %trace
\newcommand{\Tr}{\mathrm{tr}}
\newcommand{\norm}[1]{\left\lVert #1 \right\rVert}
\begin{document}

\title[Information-Disturbance Tradeoff]{Universality and  Optimality in the \vspace*{5pt} \\ Information-Disturbance Tradeoff}

\author[Hashagen]{Anna-Lena K. Hashagen$^1$}
\author[Wolf]{Michael M. Wolf$^{1,2}$}
\address{$^1$ Department of Mathematics, Technical University of Munich}
\address{$^2$ Kavli Institute for Theoretical Physics, University of California, Santa Barbara (Aug - Dec, 2017)}

\begin{abstract}We investigate the tradeoff between the quality of an approximate version of a given measurement and the disturbance it induces in the measured quantum system. We prove that if the target measurement is a non-degenerate von Neumann measurement, then the optimal tradeoff can always be achieved within a two-parameter family of quantum devices that is independent of the chosen distance measures. This form of almost universal optimality holds under mild assumptions on the distance measures such as convexity and basis-independence, which are satisfied for all the usual cases that are based on norms, transport cost functions, relative entropies, fidelities, etc. for both worst-case and average-case analysis. We analyze the case of the cb-norm (or diamond norm) more generally for which we show dimension-independence of the derived optimal tradeoff for general von Neumann measurements. A SDP solution is provided for general POVMs and shown to exist for arbitrary convex semialgebraic distance measures.  
\end{abstract}

\maketitle
\tableofcontents
\newpage

%%%%%%%%%%%%%%%%%%%%%%%%%%%%%%%%%%%%%%%%
\section{Introduction}\label{sec:intro}
%%%%%%%%%%%%%%%%%%%%%%%%%%%%%%%%%%%%%%%%
The idea that measurements inevitably disturb a quantum system is so much folklore and so deeply routed in the foundations of quantum mechanics  that it is difficult to trace back historically. It is certainly present in Heisenberg's original exposition of the uncertainty relation. However, it only became amenable to mathematical analysis after the `projection postulate' was replaced by a more refined theory of the quantum measurement process ~\cite{Davies_Lewis_1970, Lueders_1950}. 
With the emergence of the field of quantum information theory, the interest in a quantitative analysis of the information-disturbance tradeoff has intensified. At the same time, it became an issue of practical significance for many quantum information processing tasks, most notably for quantum cryptography \cite{BB_1984, Ekert_1991, Fusch_Peres_1996, Fuchs_2005}.

In the last two decades numerous papers derived quantitative bounds on the disturbance induced by a quantum measurement. A coarse way to categorize the existing approaches is depending on whether or not there are reference measurements w.r.t. which information gain on one side and disturbance on the other side are quantified. In \cite{Martens_1992, Ozawa_2003, Ozawa_2004, HeinosaariWolf_2010, Watanabe_2011, Ipsen_2013, Busch_Lahti_Werner_2013, Busch_Lahti_Werner_2014, Branciard_2013, Buscemi_2014, Coles_2015, Schwonnek_Reeb_Werner_2016, Renes_2017} disturbance and information gain are both considered w.r.t. reference measurements. In \cite{Banaszek_2001, Barnum_2001, Maccone_2007, Kretschmann_2008, Buscemi_Hayashi_Horodecki_2008, Buscemi_Horodecki_2009, Bisio_Chiribella_DAriano_Perinotti_2010, Shitara_Kuramochi_Ueda_2016}, in contrast, no reference observable is used on either side.  In the present paper, we follow an intermediate route: we consider the performed measurement as an approximation of a given reference measurement, but we quantify the disturbance without specifying a second observable. 

Another way of classifying previous works is in terms of the measures that are used to mathematically formalize and quantify disturbance and information gain: for instance, \cite{Martens_1992, Barnum_2001, Buscemi_Horodecki_2009, Buscemi_2014, Maccone_2007, Coles_2015} use various entropic measures, \cite{Kretschmann_2008, Ipsen_2013, Renes_2017} use norm-based measures, \cite{Banaszek_2001, Barnum_2001, Buscemi_Horodecki_2009, Bisio_Chiribella_DAriano_Perinotti_2010} use fidelities, \cite{Watanabe_2011, Shitara_Kuramochi_Ueda_2016} use Fisher information, and \cite{Busch_Lahti_Werner_2013, Schwonnek_Reeb_Werner_2016} use transport-cost functions. Many other measures are conceivable and most of them come in two flavors: a worst-case and an average-case variant, where the latter again calls for the choice of an underlying distribution. 

A central point of the present work is to show that the information-disturbance problem has a core that is largely independent of the measures chosen. More specifically, we prove the existence of a small set of devices that are (almost) universally optimal independent of the chosen measures, as long as these  exhibit a set of elementary properties that are shared by the vast majority of distance measures found in the literature. Based on this universality result, we then derive optimal tradeoff bounds for specific choices of measures. These include the diamond norm and its classical counterpart the total variation distance. In this case, the reachability of the optimal tradeoff has been demonstrated experimentally in a parallel work \cite{Knips_2018}.
\newpage
\paragraph{\bf Organization of the paper.}

 Sec.~\ref{sec:sum} starts off with introducing the setup and summarizes the paper's main results. In Sec.~\ref{sec:dist}, we discuss distance measures that quantify the measurement error and the disturbance caused to the system. We give a brief overview of  common measures found in the literature that fulfill the assumptions we make, necessary to derive the universality theorem. In Sec.~\ref{sec:vN1}, for the case of a non-degenerate von Neumann target measurement, we derive a universal two-parameter family of optimal devices that yield the best information-disturbance tradeoff. In Sec.~\ref{sec:vN2}, still for the case of a non-degenerate von Neumann target measurement, we use the universal optimal devices derived in the previous section to compute the optimal tradeoff for a variety of distance measures. In the special case where we consider the diamond norm for quantifying disturbance, we derive the optimal tradeoff also for the case of degenerate von Neumann target measurements. In the last section, Sec.~\ref{sec:SDP}, we show that the optimal tradeoff can always be represented as a SDP if the distance measures under consideration are convex semialgebraic. We give the explicit SDP that represents the tradeoff between the diamond norm and the worst-case $l_\infty$-distance and apply it to the special case of qubit as well as qutrit SIC POVMs.
%%%%%%%%%%%%%%%%%%%%%%%%%%%%%%%%%%%%%%%%
\section{Summary}\label{sec:sum}
%%%%%%%%%%%%%%%%%%%%%%%%%%%%%%%%%%%%%%%%
This section will briefly introduce some notation, specify the considered setup, and summarize the main results. More details and proofs will then be given in the following sections.\vspace*{5pt}
\paragraph{\bf Notation.} Throughout we will consider finite dimensional Hilbert spaces $\C^d$, write  $\cM_d$ for the set of complex $d\times d$ matrices and $\cS_d\subseteq \cM_d$ for the subset of density operators, usually denoted by $\rho$. An $m$-outcome measurement on this space will be described by a \emph{positive operator valued measure} (POVM) $E=(E_1,\ldots,E_m)$ whose elements $E_i\in\cM_d$ are positive semidefinite and sum up to the identity operator $\sum_{i=1}^m E_i=\1$. The set of all such POVM's will be denoted by $\cE_{d,m}$ and we will set $\cE_d:=\cE_{d,d}$. We will call $E$ a \emph{von Neumann measurement} if the $E_i$'s are mutually orthogonal projections and further call it \emph{non-degenerate} if those are one-dimensional, i.e., characterized by an orthonormal basis. A completely positive, trace-preserving linear map will be called a \emph{quantum channel} and the set of quantum channels from $\cM_d$ into $\cM_d$ will be denoted by $\cT_d$. \vspace*{5pt}

\paragraph{\bf Setup.} We will fix a \emph{target measurement} $E\in\cE_{d,m}$ and investigate the tradeoff between the quality of an approximate measurement of $E$, say by $E'\in\cE_{d,m}$, and the disturbance the measurement process induces in the system. The evolution of the latter will be described by some channel $T_1\in\cT_d$.  To this end, we will have to choose two suitable functionals $E'\mapsto\delta(E')$ and $T_1\mapsto \Delta(T_1)$ that quantify the deviation of $E'$ and $T_1$ from the target measurement $E$ and the ideal channel $\id$, respectively. 

For a given triple $(E,\delta,\Delta)$ the question will then be: what is the accessible region in the $\delta-\Delta$-plane when running over all possible measurement devices and, in particular, what is the optimal tradeoff curve and how can it be achieved?

Clearly, $E'$ and $T_1$ are not independent. The framework of \emph{instruments} allows to describe all pairs ($E'$, $T_1$) that are compatible within the rules of quantum theory. An \emph{instrument} assigns to each possible outcome $i$ of a measurement a completely positive map $I_i:\cM_d\rightarrow\cM_d$ so that the corresponding POVM element is $E_i':=I_i^*(\1)$ and the evolution of the remaining quantum system is governed by $T_1:=\sum_{i=1}^m I_i$. Normalization requires that this sum is trace-preserving.\vspace*{5pt}

\paragraph{\bf Main results.} There are zillions of possible choices for the  measures $\Delta$ and $\delta$. If one had to choose one pair that stands out for operational significance this would probably be the \emph{diamond norm} and its classical counterpart, the \emph{total variational distance} (defined and discussed in Sec.~\ref{sec:dist} and Sec.~\ref{sec:vN2}). One of our results is the derivation of the optimal tradeoff curve for this pair (Thm.~\ref{thm:TVdiamond} in Sec.~\ref{sec:gvN}): 
\begin{theorem*}[Total variation - diamond norm tradeoff]
If an instrument is considered approximating a (possibly degenerate) von Neumann measurement with $m$ outcomes, then the worst-case total variational distance $\delta_{TV}$ and the diamond norm distance $\Delta_\diamond$ satisfy
\be \delta_{TV}\geq \left\{\begin{array}{ll}\frac{1}{2m}\left(\sqrt{(2-\Delta_\diamond)(m-1)}-\sqrt{\Delta_\diamond} \right)^2&\ \text{if }\ \Delta_\diamond\leq 2-\frac{2}{m},\\ 0 &\ \text{if }\ \Delta_\diamond > 2-\frac{2}{m}.
\end{array}\right.\label{eq:optdiatv}\ee
The inequality is tight in the sense that for every choice of the von Neumann measurement there is an instrument achieving equality.
\end{theorem*}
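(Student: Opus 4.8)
The plan is to dispose of the non-degenerate target first, using the universality machinery of Section~\ref{sec:vN1}, and then to lift the lower bound to degenerate targets by a symmetrisation argument; achievability will come essentially for free from an explicit one-Kraus-per-outcome instrument that makes sense in any dimension. \emph{Non-degenerate case.} Both functionals here are convex and basis-independent: $\delta_{TV}(E')=\max_\rho\tfrac12\sum_i|\tr{(E_i'-P_i)\rho}|$ is a maximum of moduli of affine functionals of $E'$, and $\Delta_\diamond(T_1)=\norm{T_1-\id}_\diamond$ is convex and unitarily invariant, and neither changes when target and device are conjugated by a common unitary. Hence the universal two-parameter family of optimal devices of Section~\ref{sec:vN1} applies and the whole optimal tradeoff curve for a non-degenerate target (so $m=d$) is attained inside it. What remains is a computation: express $\delta_{TV}$ and $\Delta_\diamond$ as explicit functions of the two parameters and minimise $\delta_{TV}$ at fixed $\Delta_\diamond$. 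The $\delta_{TV}$ side reduces to an $\ell_1$-extremisation over probability vectors since the worst-case $\rho$ is a point mass; the $\Delta_\diamond$ side reduces to the diamond distance of $\id$ from the complete dephasing channel, which equals $2(1-\tfrac1m)$. Eliminating one parameter leaves a one-variable problem whose solution -- clearing a square root, i.e.\ solving a quadratic in the squared parameter and keeping the smaller root -- I expect to be exactly the right-hand side of~(\ref{eq:optdiatv}). The minimiser is the single-Kraus instrument $I_i(\rho)=K_i\rho K_i^\dagger$ with $K_i=a(\1-P_i)+cP_i$ and $(m-1)a^2+c^2=1$: for it one checks directly that $E_i'-P_i=a^2(\1-mP_i)$, so $\delta_{TV}=a^2(m-1)$, and $T_1-\id=(c-a)^2(\Phi-\id)$ with $\Phi(\rho)=\sum_iP_i\rho P_i$, so $\Delta_\diamond=(c-a)^2\cdot 2(1-\tfrac1m)$; combining these with the normalisation reproduces~(\ref{eq:optdiatv}) and proves tightness for $\Delta_\diamond\le 2-\tfrac2m$. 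In the complementary range the bound is only $\delta_{TV}\ge0$, and equality is witnessed by the exact von Neumann instrument, optionally post-composed with a unitary to raise the disturbance to the prescribed $\Delta_\diamond$.

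\emph{Degenerate case.} Let $P_1,\dots,P_m$ on $\C^d$ have arbitrary ranks. Given any instrument $\{I_i\}$ I would first twirl it over $G=\prod_i\mathcal U(\operatorname{ran}(P_i))$, replacing $I_i$ by $\bar I_i(\rho)=\int_G U^\dagger I_i(U\rho U^\dagger)U\,dU$. Every $U\in G$ fixes each $P_i$ and fixes $\id$, so by convexity together with the unitary invariance of $\delta_{TV}$ and $\Delta_\diamond$ this twirl increases neither functional, and it suffices to prove the bound for $G$-covariant instruments. Covariance collapses the relevant data onto the $m$-level ``which-block'' system: $\bar E_i'=\sum_j\tfrac{\tr{E_i'P_j}}{\operatorname{rank}(P_j)}P_j$, and $\bar T_1-\id$ splits into a block-stochastic part plus within-block dephasing. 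I would then verify that on this covariant class $(\delta_{TV},\Delta_\diamond)$ is governed by precisely the same pair of scalars as the $m$-dimensional non-degenerate problem -- the decisive input being that the diamond distance of $\id$ from a dephasing onto $m$ blocks is $2(1-\tfrac1m)$ regardless of the block sizes, which is the source of the advertised dimension-independence -- and conclude the lower bound from the non-degenerate case. Achievability is dimension-blind: the instrument $I_i(\rho)=K_i\rho K_i^\dagger$ with $K_i=a(\1-P_i)+cP_i$ is well defined for arbitrary-rank $P_i$, still satisfies $E_i'-P_i=a^2(\1-mP_i)$ and $T_1-\id=(c-a)^2(\Phi-\id)$, and hence retraces the same curve.

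\emph{Main obstacle.} The delicate point is the degenerate reduction: one must genuinely rule out that enlarging the ambient dimension or the ranks of the $P_i$ buys a better tradeoff. The awkwardness is that a $G$-covariant channel need not preserve any fixed $m$-dimensional subspace, so there is no naive ``compress to $\C^m$'' shortcut; one has to argue through the covariant structure (equivalently, the commutant of $\{\bar U\otimes U:U\in G\}$) and keep control of the diamond norm of the resulting block-stochastic-plus-dephasing maps. A secondary recurring technicality, on both the non-degenerate and the degenerate side, is the exact evaluation of the diamond norms involved -- in particular $\norm{\Phi-\id}_\diamond=2(1-\tfrac1m)$ and the diamond norm of the channels in the two-parameter family -- which is explicit but needs the standard Choi/ancilla computation. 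The remaining ingredients -- the convexity/invariance bookkeeping for the twirl, the point-mass evaluation of $\delta_{TV}$, and the quadratic that produces the square-root formula -- are routine.
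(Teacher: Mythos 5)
Your non-degenerate outline runs parallel to the paper: both invoke the universality machinery of Sec.~\ref{sec:vN1}, and your explicit single-Kraus family with $K_i=a(\1-P_i)+cP_i$ is exactly the $z=0$ sub-family of Eq.~(\ref{eq:optinst}), so tightness is fine in principle. But even there you leave the decisive quantities as expectations rather than proofs: to get the \emph{lower} bound you must evaluate $\Delta_\diamond$ over the full two-parameter family (including the depolarizing $z>0$ part) or argue that $z=0$ is optimal, and you must actually establish $\norm{\Phi-\id}_\diamond=2(1-\tfrac1m)$, i.e.\ that entanglement assistance does not help against dephasing. The paper does this by the symmetric-subspace argument of Lemma~\ref{lem:TV1f} plus the Cauchy--Schwarz step in the proof of Thm.~\ref{thm:TVdiamond}, which shows $\Delta_\diamond=2\Delta_{TV}$ on the optimizers and then simply imports Cor.~\ref{cor:TVTV}; your ``standard Choi/ancilla computation'' is precisely this non-trivial step and is not written down.

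The genuine gap is the degenerate case, which is the actual new content of the theorem. Your plan --- twirl over $G=\prod_i\mathcal U(\operatorname{ran}P_i)$ and then claim that covariant instruments are ``governed by precisely the same pair of scalars as the $m$-dimensional non-degenerate problem'' --- is not carried out, and as stated it would not go through easily: the commutant structure for $G$-covariant channels with blocks of unequal rank contains strictly more parameters than the $m$-dimensional symmetric family (within-block identity versus within-block depolarizing components, and a matrix of cross-block coherence/transfer coefficients), and since $G$ contains no permutations exchanging blocks of different dimension you cannot symmetrize those parameters away. You flag this yourself as the ``main obstacle'' but do not resolve it, so the lower bound for degenerate targets is unproven in your proposal. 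The paper avoids the covariant structure entirely via Lemma~\ref{lem:dimind}: a cycle of dimension changes $(1,\ldots,1)\rightarrow(d_m,\ldots,d_m)\rightarrow(d_1,\ldots,d_m)\rightarrow(1,\ldots,1)$ in which each step can only enlarge the accessible region --- the first by tensoring the instrument with $\id_{d_m}$ and using tensor stability of the diamond norm, the others by compressing with explicit isometries $V,W$ and using monotonicity of $\norm{\cdot}_\diamond$ and of $\delta_{TV}$. That embedding/compression argument is the missing idea you would need (or a genuine substitute for it) to make your degenerate reduction rigorous.
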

Note that the tradeoff depends solely on the number $m$ of outcomes and is independent of the dimension of the underlying Hilbert space (apart from $d\geq m$). Also note that the accessible region shrinks with increasing $m$ and in the limit $m\rightarrow\infty$ becomes a triangle, determined by $\delta_{TV}\geq 1-\Delta_{\diamond}/2$.

In Sec.~\ref{sec:vN2} we derive similar results for the worst-case as well as average-case fidelity and trace-norm. In all cases, the bounds are tight and we show how the optimal tradeoff can be achieved. Instead of going through these and more examples one-by-one we follow a different approach. We provide a general tool for obtaining optimal tradeoffs for \emph{all pairs} $(\delta,\Delta)$ that exhibit a set of elementary properties that are shared by the vast majority of distance measures that can be found in the literature. These properties, which are discussed in Sec.~\ref{sec:dist}, are essentially convexity and suitable forms of basis-(in)dependence.  For the case of a non-degenerate von Neumann target measurement Thm.~\ref{thm:universality} in Sec.~\ref{sec:vN1} shows that optimal devices can always be found within a universal two-parameter family, independent of the specific choice of $\delta$ and $\Delta$:

\begin{theorem*}[(Almost universal) optimal instruments] Let $\Delta$ and $\delta$ be distance-measures for quantifying disturbance and measurement-error that satisfy Assumptions~\ref{assum:1} and \ref{assum:2} (cf. Sec.~\ref{sec:dist}), respectively. Then the optimal $\Delta-\delta$-tradeoff w.r.t. a target measurement that is given by an orthonormal basis $\{|i\rangle\in\C^d\}_{i=1}^d$   is attained within the  two-parameter family of instruments defined by
\be\label{eq:optinst} I_i(\rho):= z\langle i|\rho|i\rangle\frac{\1_d-\ket{i}\bra{i}}{d-1}+(1-z)K_i\rho K_i,\quad K_i:=\mu\1_d+\nu\ket{i}\bra{i},
\ee where $z\in[0,1]$ and $\mu,\nu\in\R$ satisfy $d\mu^2+\nu^2+2\mu\nu=1$ (which makes $\sum_i I_i$ trace preserving).
\end{theorem*}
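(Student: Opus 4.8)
\emph{Proof sketch.} My plan is to reduce to a highly symmetric class of instruments and then classify that class explicitly. First I would exploit the symmetry of the target. The POVM $E=(\ket i\bra i)_{i=1}^d$ is invariant, up to relabeling of outcomes, under the monomial group $G=U(1)^d\rtimes S_d$ generated by the diagonal phase unitaries and the coordinate permutations $U_\pi$. For an arbitrary instrument $(I_i)$ with POVM $E'$ and channel $T_1=\sum_iI_i$, and any $g\in G$ with unitary $U_g$ and induced outcome permutation $\pi_g$, the maps $\rho\mapsto U_g^\dagger I_{\pi_g(i)}(U_g\rho U_g^\dagger)U_g$ again form an instrument approximating the same target (up to relabeling); by Assumption~\ref{assum:2} its measurement error $\delta$ is unchanged (this is a symmetry of the target measurement) and by Assumption~\ref{assum:1} its disturbance $\Delta$ is unchanged (unitary invariance). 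Averaging over the Haar measure of $G$ — a finite average over $S_d$ using the outcome relabeling, composed with a twirl over $U(1)^d$ — produces a $G$-covariant instrument whose $\delta$ and $\Delta$ do not exceed the original ones, by convexity of $\delta$ and $\Delta$. Hence the optimal tradeoff is attained among $G$-covariant instruments.

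Next I would classify these. Covariance under the diagonal phases forces, in a Kraus representation adapted to the weight grading of $U(1)^d$, every Kraus operator of each $I_i$ to be either diagonal or a scalar multiple of a single matrix unit $\ket k\bra l$ with $k\ne l$; covariance under the stabilizer $S_{d-1}$ of the distinguished index $i$ then forces the Kraus families to be symmetric orbit sums over the $d-1$ indices $\ne i$. Carrying this out, the general $G$-covariant completely positive $I_i$ is a nonnegative combination of a few elementary blocks: conjugation by a symmetric diagonal $K_i=\mu\1+\nu\ket i\bra i$; the measure-and-reprepare block $\rho\mapsto\langle i|\rho|i\rangle\,\tfrac{\1-\ket i\bra i}{d-1}$ (which is the orbit sum of $\{\ket k\bra i\}_{k\ne i}$); and blocks supported on the orthogonal complement of $\ket i$, such as $\rho\mapsto\Tr\!\big((\1-\ket i\bra i)\rho\big)\ket i\bra i$ and a depolarizing/dephasing action on that complement. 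Imposing complete positivity of each $I_i$ and trace-preservation of $\sum_iI_i$ cuts this to an explicit finite-parameter set.

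Finally — and this is the hard part — I would show that only the two blocks appearing in \eqref{eq:optinst} are ever needed: given any $G$-covariant instrument I want one of the form \eqref{eq:optinst} that is at least as good in \emph{both} $\delta$ and $\Delta$. The guiding observation is that exactly two blocks feed the ``target direction'' of the POVM element — the measure-and-reprepare block, whose dual sends $\1$ to $z\ket i\bra i$, and the $K_i$-conjugation, whose dual sends $\1$ to $(1-z)(\mu^2\1+(\nu^2+2\mu\nu)\ket i\bra i)$ — whereas every complement-supported block contributes to $E_i'$ a term proportional to $\1-\ket i\bra i$, which can only push $E'$ away from $E$ and increase $\delta$. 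The plan is then to argue that discarding those blocks and, if necessary, routing the liberated trace back into the measure-and-reprepare block yields a valid instrument that does not increase $\delta$ and — using the monotonicity/basis properties packaged in Assumption~\ref{assum:1}, e.g.\ that removing a complement-scrambling component is a data-processing step bringing $T_1$ closer to $\id$ — does not increase $\Delta$ either; the constraint $d\mu^2+\nu^2+2\mu\nu=1$ emerges as the trace-preservation condition, since the measure-and-reprepare part of $\sum_iI_i$ is already trace-preserving while $\sum_iK_i^2=(d\mu^2+\nu^2+2\mu\nu)\1$. Establishing that this two-parameter family already exhausts the efficient frontier of the $G$-covariant class — not the routine symmetrization of the first step — is where the two Assumptions genuinely interact and where I expect the real difficulty to lie.
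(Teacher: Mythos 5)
Your first step (symmetrization over the monomial group $G$) is exactly the paper's Prop.~\ref{prop:sym}, and your second step is a Kraus/Choi-level version of Lemma~\ref{lem:commutant}; both are fine in spirit (though note that a general phase- and $S_{d-1}$-covariant block is not a nonnegative combination of conjugations by $K_i=\mu\1+\nu\ket{i}\bra{i}$ alone -- e.g.\ full dephasing corresponds to a symmetric PSD Schur multiplier that is not in the span of $vv^*$ with symmetric $v$ -- so your list of elementary blocks needs care). The genuine gap is in the final step, and it is not merely "the hard part left open": the mechanism you propose for closing it cannot work under the stated hypotheses. You want to discard the complement-supported blocks and argue that $\Delta$ does not increase because "removing a complement-scrambling component is a data-processing step bringing $T_1$ closer to $\id$". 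But Assumption~\ref{assum:1} contains no monotonicity whatsoever -- only $\Delta(\id)=0$, convexity, and unitary basis-independence. The paper's functional $\DD$ (used for Cor.~\ref{cor:z}) satisfies Assumption~\ref{assum:1}, is not faithful, and assigns disturbance zero to the completely depolarizing channel; for such a $\Delta$ a "cleaner" $T_1$ can be strictly worse, so any surgery on the instrument that changes $T_1$ leaves you with no control over $\Delta$. This is precisely why the parameter $z$ cannot be removed in general, and it kills the proposed reduction as stated.

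The paper's route around this is structurally different and is worth noting: it never compares two instruments with different $T_1$. After symmetrization, $T_1$ is determined by $(\alpha_1,\beta_1)$ and $\delta$ is a non-decreasing function of $\alpha_2$ alone (Lemma~\ref{lem:delta=a2}, which uses only Assumption~\ref{assum:2}). The whole problem then becomes: for \emph{fixed} $(\alpha_1,\beta_1)$ -- hence fixed $\Delta$, whatever $\Delta$ is -- what is the minimal jointly achievable $\alpha_2$? This joint-achievability question is the real content, and your sketch does not address it: per-outcome block decompositions of $I_i$ do not by themselves tell you which pairs of marginals $(T_1,T_2)$ a single channel $T$ can have. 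The paper answers it by showing the relevant expectation values generate a seven-dimensional von Neumann algebra isomorphic to $\cM_2\oplus\C^3$ (Lemma~\ref{lem:iso}, Cor.~\ref{cor:reduc}) and then reducing to the envelope of a unit cone (Prop.~\ref{prop:cone}); the envelope states translate back exactly into the two-parameter family (\ref{eq:optinst}). To rescue your approach you would either have to prove a joint-marginal compatibility statement of comparable strength within your block decomposition, or strengthen Assumption~\ref{assum:1} by a monotonicity axiom -- which would prove a weaker theorem than the one stated.
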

While the  parameter $z$ can be eliminated for instance in all cases mentioned above, we show in Cor.~\ref{cor:z} that this is not possible in general. 

If the target measurement itself is not a von Neumann measurement but a general POVM, then closed-form expressions like the ones above should not be expected. For the important case of the diamond norm, we show in Sec.~\ref{sec:SDP} how the optimal tradeoff curve can still be obtained via a semidefinite program (SDP). This is an instance of the following more general fact (Thm.~\ref{thm:SDPalg}):
\begin{theorem*}[SDP solution for arbitrary target measurements] If $\Delta$ and $\delta$ are both convex and semialgebraic, then the accessible region in the $\Delta-\delta
$-plane is the feasible set of a SDP.
\end{theorem*}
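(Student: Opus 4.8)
\medskip

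\noindent\textbf{Proof plan.} The strategy is to recognize the set of instruments as a spectrahedron, the maps producing $E'$ and $T_1$ as linear, and each of the conditions ``$\Delta(T_1)\le a$'' and ``$\delta(E')\le b$'' as membership in a spectrahedral shadow; composing these three ingredients through intersection and projection --- operations under which spectrahedral shadows are stable --- then presents the accessible region as the projection of the feasible set of a single SDP.

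First I would encode instruments by their Choi matrices. Setting $C_i:=(\id\otimes I_i)(\kb{\omega}{\omega})$ with $\ket{\omega}:=\sum_{k=1}^{d}\ket{k}\ket{k}$, complete positivity of $I_i$ amounts to $C_i\succeq0$, and the normalization requirement (that $\sum_i I_i$ be trace preserving) becomes the single linear constraint $\Tr_2(\sum_{i=1}^m C_i)=\1_d$. Hence the $m$-outcome instruments on $\cM_d$ form the spectrahedron $S:=\{(C_1,\dots,C_m): C_i\succeq0\ \forall i,\ \Tr_2(\sum_i C_i)=\1_d\}$ inside $m$ copies of the Hermitian $d^2\times d^2$ matrices. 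The induced POVM element $E_i'=I_i^*(\1)=(\Tr_2 C_i)^{T}$ and the induced channel $T_1=\sum_i I_i$, whose Choi matrix is $\sum_i C_i$, depend on $(C_1,\dots,C_m)$ through fixed linear maps, which I will call the readout maps $L_\delta$ and $L_\Delta$. Thus the assignment $\text{instrument}\mapsto(T_1,E')$ is the restriction to $S$ of a linear map, and the tradeoff value of an instrument equals $(\Delta\circ L_\Delta,\ \delta\circ L_\delta)$ evaluated at the corresponding point of $S$.

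Next I would invoke the hypotheses on the measures. That $\Delta$ is convex and semialgebraic means precisely that its epigraph $\mathrm{epi}\,\Delta=\{(T,a):\Delta(T)\le a\}$, viewed in the Choi picture as a subset of the Hermitian matrices times $\R$, is a convex semialgebraic set; likewise $\mathrm{epi}\,\delta$. By the semidefinite representability of convex semialgebraic sets (Helton--Nie), each epigraph is a spectrahedral shadow: there exist affine Hermitian pencils $\mathcal A,\mathcal B$ in auxiliary lifting variables $u,v$ such that $(T,a)\in\mathrm{epi}\,\Delta\iff\exists u\colon\mathcal A(T,a,u)\succeq0$ and $(E',b)\in\mathrm{epi}\,\delta\iff\exists v\colon\mathcal B(E',b,v)\succeq0$. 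For the diamond norm these pencils are the explicit ones written out in Sec.~\ref{sec:SDP}, which is why that case can be treated by hand.

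Finally I would assemble the pieces. A pair $(a,b)$ lies in the accessible region --- meaning it is componentwise dominated by the tradeoff value $(\Delta(T_1),\delta(E'))$ of some instrument --- exactly when there exist $(C_1,\dots,C_m),u,v$ with
\[
C_i\succeq0,\qquad \Tr_2\big(\textstyle\sum_i C_i\big)=\1_d,\qquad \mathcal A\big(L_\Delta(C),a,u\big)\succeq0,\qquad \mathcal B\big(L_\delta(C),b,v\big)\succeq0.
\]
This is a finite list of linear matrix inequalities in the variables $(C_i),a,b,u,v$, so its solution set is the feasible set of an SDP, and the accessible region is its image under the coordinate projection onto $(a,b)$. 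Since $S$ is compact and $\Delta,\delta$ are continuous on it, the underlying image $\{(\Delta(L_\Delta C),\delta(L_\delta C)):C\in S\}$ is compact, hence the accessible region is closed and convex and no pathology of projected spectrahedra intervenes at the level of the region itself. The one genuinely nontrivial ingredient --- and the step I expect to be the main obstacle --- is the Helton--Nie semidefinite representability of the (arbitrary) convex semialgebraic epigraphs; everything else is Choi calculus together with the standard stability of spectrahedral shadows under linear images, finite intersections, and coordinate projections.
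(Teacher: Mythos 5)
Your plan founders at the step you yourself flag as the main ingredient: you invoke ``the semidefinite representability of convex semialgebraic sets (Helton--Nie)'' to represent the epigraphs of $\Delta$ and $\delta$ as spectrahedral shadows. These epigraphs live in a high-dimensional space (Choi matrices of channels, respectively POVM tuples, times $\R$), and in that generality the Helton--Nie conjecture is not a theorem but is in fact false: there exist convex semialgebraic sets that are not spectrahedral shadows \cite{Scheiderer_2017} (see the footnote accompanying Thm.~\ref{thm:HeltonNie}). Only the two-dimensional case is known to hold, and that is precisely the form in which the paper uses it. The paper's proof therefore proceeds in the opposite order: Lemma~\ref{lem:SemiPlane} first pushes everything down to the plane, using that the set of instruments is semialgebraic, that $I\mapsto\big(\sum_i I_i,(I_i^*(\1))_i\big)$ is algebraic, and that semialgebraicity survives images of semialgebraic maps and Tarski--Seidenberg quantifier elimination; convexity of $\Delta$ and $\delta$ then makes the planar accessible region convex; and only at that point is Scheiderer's two-dimensional theorem (Thm.~\ref{thm:HeltonNie}) applied to a convex semialgebraic subset of $\R^2$. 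Your route requires exactly the high-dimensional representability that fails in general, so the gap is not a technicality that can be patched by Choi calculus or stability of spectrahedral shadows under intersections and projections.

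A secondary discrepancy: the list of LMIs you assemble characterizes the set of pairs $(a,b)$ that are componentwise dominated by some achievable tradeoff value, i.e.\ the upward closure $\{(a,b):\exists I\ \Delta(T_1)\le a,\ \delta(E')\le b\}$, whereas the accessible region in the statement (cf.\ the set $S$ in the proof of Lemma~\ref{lem:SemiPlane}) is the exact image $\{(\Delta(T_1),\delta(E'))\}$, which is bounded by the maximal values of the two measures. So even granting your representability step, you would be representing a different, unbounded set; the closing compactness remark does not reconcile the two.
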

Note that no assumptions on the chosen measures are made other than being convex and semialgebraic.

%%%%%%%%%%%%%%%%%%%%%%%%%%%%%%%%%%%%%%%%
\section{Distance measures}\label{sec:dist}
%%%%%%%%%%%%%%%%%%%%%%%%%%%%%%%%%%%%%%%%
In this section we have a closer look at the functionals $\Delta:\cT_d\rightarrow [0,\infty]$ and $\delta:\cE_{d,m}\rightarrow [0,\infty]$ that quantify how much $E'$ and $T_1$ differ from $E$ and $\id$, respectively. We will not assume that they arise from metrics and use the notion of a `distance' merely in the colloquial sense.  We will state the assumptions that we will use in Sec.~\ref{sec:vN1} and discuss some of the most common measures that appear in the literature.\vspace*{5pt}
\paragraph{\bf Quantifying disturbance}
For the universality theorem (Thm.~\ref{thm:universality}) we will need the following assumption on $\Delta$:\footnote{In fact,  slightly less is required since Eq.~(\ref{eq:assumpDbi}) will only be used for unitaries that are products of diagonal and permutation matrices.}
\begin{assumption}[on the distance measure to the identity channel]\label{assum:1}\ \\
 For $ \Delta:\cT_d\rightarrow[0,\infty]$ we assume that (a)  $\Delta(\id)=0$, (b) $\Delta$ is convex, and (c) $\Delta$ is basis-independent in the sense that for every unitary $U\in\cM_d$ and every channel $\Phi\in\cT_d$:
  \be\Delta\Big(U\Phi(U^*\cdot U)U^*\Big)=\Delta(\Phi).\label{eq:assumpDbi}\ee
\end{assumption}
In the usually considered cases, $\Delta$ arises from a distance measure on the set of density operators $\cS_d\subseteq\cM_d$. In fact, if $\tilde{\Delta}:\cS_d\times\cS_d\rightarrow[0,\infty]$ is convex in its first argument, unitarily invariant and satisfies $\tilde{\Delta}(\rho,\rho)=0$, then considering the worst case as well as the average case w.r.t. the input state both lead to functionals that satisfy Assumption~\ref{assum:1}. More precisely, if $\mu$ is a unitarily invariant measure on $\cS_d$ and $S\subseteq\cS_d$ a unitarily closed subset (e.g., the set of all pure states), then the following two definitions can easily be seen to satisfy Assumption~\ref{assum:1}, see the appendix:
\begin{eqnarray*}
\Delta_\infty(\Phi)&:=& \sup_{\rho\in S} \tilde{\Delta}\big(\Phi(\rho),\rho\big),\\
\Delta_\mu(\Phi)&:=& \int_{\cS_d} \tilde{\Delta}\big(\Phi(\rho),\rho\big)\; \mathrm{d}\mu(\rho).
\end{eqnarray*}
While $\Delta_\infty$ quantifies the distance between $\Phi$ and $\id$ in the worst case in terms of $\tilde{\Delta}$, $\Delta_\mu$ does the same for the average case.
 
Concrete examples for $\tilde{\Delta}$ are (i) $\tilde{\Delta}(\rho,\sigma)=1-F(\rho,\sigma)$, where $F(\rho,\sigma):=||\sqrt{\rho}\sqrt{\sigma}||_1$ is the fidelity, (ii) the relative entropy and many other quantum $f$-divergences ~\cite{Hiai_Mosonyi_Petz_Beny_2011} including the Chernoff- and Hoeffding-distance and (iii) $\tilde{\Delta}(\rho,\sigma)=|||\rho-\sigma|||$, where $|||\cdot|||$ is any unitarily invariant norm such as the Schatten $p$-norms.

The latter can, in a similar vein, be used to define Schatten $p$-to-$q$ norm-distances to the identity channel 
$$\Phi\ \mapsto ||\Phi-\id||_{p\rightarrow q,n}:=\sup_{\rho\in\cS_{dn}}\frac{||(\Phi-\id)\otimes\id_n(\rho)||_q}{||\rho||_p},\quad q,p\in[1,\infty], n\in\mathbbm{N},$$ which also fulfill Assumption~\ref{assum:1}. Special cases are given by the \emph{diamond norm} $||\cdot||_\diamond:=||\cdot||_{1\rightarrow1,d}$, which we discuss in more  detail in Sec.~\ref{sec:gvN},  and its dual, the \emph{cb-norm} (with $p=q=\infty, n=d$).\vspace*{5pt}

\paragraph{\bf Quantifying measurement error}  The following assumptions that we need for the universality theorem on the functional $\delta$ refer to the case of a non-degenerate von Neumann target measurement that is given by an orthonormal basis $(|i\rangle\langle i|)_{i=1}^d$.
\begin{assumption}[on the distance measure to the target measurement]\label{assum:2}\ \\
 For $ \delta:\cE_d\rightarrow[0,\infty]$ we assume that (a)  $\delta\big((|i\rangle\langle i|)_{i=1}^d\big)=0$, (b) $\delta$ is convex, (c) $\delta$ is permutation-invariant in the sense that for every permutation $\pi\in S_d$ and any $M\in\cE_d$
 \be M_i'=U_\pi^* M_{\pi(i)} U_\pi\ \forall i\ \Rightarrow\ \delta(M')=\delta(M),\label{eq:assumpdperm}\ee
 where $U_\pi$ is the permutation matrix that acts as $U_\pi |i\rangle=|\pi(i)\rangle$,
 and (d)  that for every diagonal unitary $D\in\cM_d$ and any $M\in\cE_d$
 \be  M_i'= D^* M_i D \ \forall i\ \Rightarrow\ \delta(M')=\delta(M).\label{eq:assumpdessdiag}\ee
\end{assumption}
Here, the most common cases arise from distance measures $\tilde{\delta}:\cP_d\times\cP_d\rightarrow[0,\infty]$ on the space of probability distributions $\cP_d:=\big\{q\in\R^d|\sum_{i=1}^d q_i=1\wedge \forall i: q_i\geq 0\big\}$ applied to the target distribution $p_i:=\langle i|\rho|i\rangle$ and the actually measured distribution $p_i':=\tr{\rho E_i'}$. Suppose $\tilde{\delta}$ is convex in its second argument, invariant under joint permutations and satisfies $\tilde{\delta}(q,q)=0$. Then the worst-case as well as the average-case construction 
\bea
\delta_{\infty}(E')&:=&\sup_{\rho\in S} \tilde{\delta}(p,p'),\nonumber\\
\delta_{\mu}(E')&:=&\int_{\cS_d} \tilde{\delta}(p,p') \; \mathrm{d}\mu(\rho),\nonumber
\eea
 both satisfy Assumption~\ref{assum:2}, see appendix. Concrete examples for $\tilde{\delta}$ are all $l_p$-norms for $p\in[1,\infty]$ and the Kullback-Leibler divergence as well as  other $f$-divergences. Other examples for $\delta$ that satisfy Assumption~\ref{assum:2} are transport cost functions like the ones used in ~\cite{Schwonnek_Reeb_Werner_2016}. 

Note that convexity of the two measures $\Delta$ and $\delta$ implies that the region in the $\Delta-\delta$-plane that is accessible by quantum instruments is a convex set. The boundary of this set is given by two lines that are parallel to the axes (and correspond to the maximal values of $\Delta$ and $\delta$) and what we call the \emph{optimal tradeoff curve}.   
%%%%%%%%%%%%%%%%%%%%%%%%%%%%%%%%%%%%%%%%
\section{Universal optimal devices}\label{sec:vN1}
%%%%%%%%%%%%%%%%%%%%%%%%%%%%%%%%%%%%%%%%
There are three major steps towards proving the claimed universality theorem: the exploitation of symmetry, the construction of a von Neumann algebra isomorphism to obtain a manageable representation, and the final reduction to the envelope of a unit cone.

Throughout this section, the target measurement will be given by an orthonormal basis $E=(|i\rangle\langle i|)_{i=1}^d$.
In this case, instead of working with instruments it turns out to be slightly more convenient to work with channels. More specifically, we will describe the entire process by a channel $T:\cM_d\rightarrow\cM_d\otimes\cM_d$ with marginals $T_1,T_2\in\cT_d$. $T_1$ will then reflect the evolution of the `disturbed' quantum system, whereas the output of $T_2$ is measured by $E$ leading to $E_i'=T_2^*(E_i)$. This is clearly describable by an instrument and conversely, for every instrument $I$ we can simply construct
$$ T(\rho):=\sum_{i=1}^d I_i(\rho)\otimes |i\rangle\langle i|,$$
which shows that the two viewpoints are equivalent.

\begin{proposition}[Reduction to symmetric channels]\label{prop:sym} Let $G$ be the group generated by all diagonal unitaries and permutation matrices in $\cM_d$. If $\Delta$ and $\delta$ satisfy Assumptions~\ref{assum:1} and \ref{assum:2}, respectively, the optimal tradeoff between them can be attained within the set of channels $T:\cM_d\rightarrow\cM_d\otimes\cM_d$ for which 
\be (U\otimes U)T\big(U^*\rho U\big)(U\otimes U)^*\ =\ T(\rho)\quad \forall U\in G,\rho\in\cS_d.\label{eq:sym0}\ee
\end{proposition}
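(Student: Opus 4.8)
The plan is to run a twirling (symmetrization) argument over the group $G$; the point is that Assumptions~\ref{assum:1} and \ref{assum:2} have been tailored so that averaging over $G$ cannot worsen either figure of merit. Note first that $G=\{D U_\pi:\ D\ \text{diagonal unitary},\ \pi\in S_d\}$: since $U_\pi D U_\pi^{*}$ is again a diagonal unitary, any word in the generators collapses to this form, so $G$ is a compact group — the group of monomial unitaries — and carries a normalized Haar measure $\mathrm{d}U$. Given an arbitrary channel $T:\cM_d\to\cM_d\otimes\cM_d$, I would define its twirl
\be \bar T(\rho):=\int_G (U\otimes U)\,T\big(U^{*}\rho U\big)\,(U\otimes U)^{*}\;\mathrm{d}U .\ee
First one checks the structural facts: the integrand is continuous in $U$, each map $\rho\mapsto(U\otimes U)T(U^{*}\rho U)(U\otimes U)^{*}$ is completely positive and trace preserving (being $T$ conjugated by unitaries), and these properties persist under averaging, so $\bar T$ is again a channel into $\cM_d\otimes\cM_d$; left-invariance of $\mathrm{d}U$ under the substitution $W=VU$ then yields the covariance identity (\ref{eq:sym0}) for $\bar T$.

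Next I would read off the marginals and the induced POVM. From $\Tr_2\big[(U\otimes U)X(U\otimes U)^{*}\big]=U(\Tr_2 X)U^{*}$ one gets $\bar T_1(\rho)=\int_G U\,T_1(U^{*}\rho U)\,U^{*}\,\mathrm{d}U$ and likewise for $\bar T_2$, both again in $\cT_d$. The crucial observation on the measurement side is that a monomial unitary acts as $U|i\rangle=c_i|\pi_U(i)\rangle$ with $|c_i|=1$, hence $U^{*}|i\rangle\langle i|U=|\pi_U^{-1}(i)\rangle\langle\pi_U^{-1}(i)|$ — the phases cancel — so that the twirled POVM is $\bar E_i'=\bar T_2^{*}(|i\rangle\langle i|)=\int_G U\,E'_{\pi_U^{-1}(i)}\,U^{*}\,\mathrm{d}U$. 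Since $\bar T_2$ is trace preserving, $\bar E'\in\cE_d$, and it is exactly the POVM attached to $\bar T$.

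It remains to bound the two measures. For $\Delta$: Assumption~\ref{assum:1}(c) gives $\Delta\big(U\,T_1(U^{*}\cdot U)\,U^{*}\big)=\Delta(T_1)$ for every $U\in G$, so Jensen's inequality for the convex functional $\Delta$ (Assumption~\ref{assum:1}(b)), applied to the average $\bar T_1$, gives $\Delta(\bar T_1)\le\int_G\Delta(T_1)\,\mathrm{d}U=\Delta(T_1)$. For $\delta$: fix $U=D U_\pi\in G$ and write $U E'_{\pi^{-1}(i)}U^{*}=D\big(U_\pi E'_{\pi^{-1}(i)}U_\pi^{*}\big)D^{*}$; Assumption~\ref{assum:2}(d) removes the conjugation by the diagonal unitary $D$ and Assumption~\ref{assum:2}(c) (with the permutation $\pi^{-1}$, using $U_{\pi^{-1}}=U_\pi^{*}$) removes the permutation, so the POVM $\big(U E'_{\pi_U^{-1}(i)}U^{*}\big)_i$ has the same $\delta$-value as $E'$ for every $U\in G$. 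Jensen for the convex $\delta$ (Assumption~\ref{assum:2}(b)) then gives $\delta(\bar E')\le\int_G\delta(E')\,\mathrm{d}U=\delta(E')$. Hence $(\Delta(\bar T_1),\delta(\bar E'))$ is componentwise no larger than $(\Delta(T_1),\delta(E'))$: every achievable tradeoff point is dominated by one achieved by a $G$-covariant channel, so the optimal tradeoff curve is attained within this set.

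The only step deserving care is the passage from finite convexity to Jensen's inequality for the Haar integral. Since in both uses the integrand on the "value" side is in fact constant, it suffices to know that $\Delta$ and $\delta$, being convex on the relevant finite-dimensional compact sets, obey $f\big(\int x\,\mathrm{d}\mu\big)\le\int f(x)\,\mathrm{d}\mu$ for probability measures $\mu$ — which one can obtain by approximating $\mathrm{d}U$ by finitely supported measures together with lower semicontinuity, or directly. I expect this, rather than anything structural, to be the only (minor) technical obstacle.
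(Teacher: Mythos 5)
Your proposal is correct and follows essentially the same route as the paper: twirl $T$ over the Haar measure of the monomial group $G$, check covariance of $\bar T$, and use convexity together with Assumption~\ref{assum:1}(c) for $\Delta$ and Assumptions~\ref{assum:2}(c),(d) (after writing $U$ as a product of a diagonal unitary and a permutation) for $\delta$ to show neither measure increases. Your extra remark on justifying Jensen's inequality for the Haar integral is a fine point the paper glosses over, but it does not change the argument.
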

\begin{proof} We will show that for an arbitrary channel $T$, which does not necessarily satisfy Eq.~(\ref{eq:sym0}), the symmetrization
\be \bar{T}:=\int_G (U\otimes U)T\big(U^*\cdot U\big)(U\otimes U)^*\; \mathrm{d}U\nonumber\ee  w.r.t. the Haar measure of $G$ performs at least as well as $T$. Let $\bar{T}_1$ and $\bar{T}_2$ be the marginals of $\bar{T}$. Then
\begin{eqnarray*}
\Delta\big(\bar{T}_1\big)&=& \Delta\left(\int_G U T_1\big(U^*\cdot U\big) U^*\; \mathrm{d}U\right)\\
&\stackrel{(1b)}{\leq}&  \int_G \Delta\left( U T_1\big(U^*\cdot U\big) U^*\right)\; \mathrm{d}U\ \stackrel{(1c)}{=}\ \Delta(T_1),
\end{eqnarray*}
where the used assumption is indicated above the (in-)equality sign. Similarly, we obtain
\begin{eqnarray*}
\delta\left[\Big(\bar{T}_2^*\big(|i\rangle\langle i|\big)\Big)_{i=1}^d\right]&\stackrel{(2b)}{\leq}& \int_G \delta\left[\Big(U^* T_2^*\big(U|i\rangle\langle i| U^*\big)U\Big)_{i=1}^d\right]\; \mathrm{d}U\\
&\stackrel{(2d)}{=}&\int_G \delta\left[\Big(U_\pi^* T_2^*\big(|\pi(i)\rangle\langle\pi(i)|\big)U_\pi\Big)_{i=1}^d\right]\; \mathrm{d}U\\
&\stackrel{(2c)}{=}& \delta\left[\Big(T_2^*\big(|i\rangle\langle i|\big)\Big)_{i=1}^d\right],
\end{eqnarray*}
where we have used that every $U\in G$ can be written as $U=U_\pi D$, where $U_\pi$ is a permutation and $D$ a diagonal unitary, both depending on $U$. 

Consequently, when replacing $T$ by its symmetrization $\bar{T}$, which satisfies Eq.~(\ref{eq:sym0}) by construction, neither $\Delta$ nor $\delta$ is increasing.
\end{proof}

\begin{lemma}[Structure of marginals of symmetric channels]\label{lem:commutant} Let $G$ be the group generated by all diagonal unitaries and permutation matrices in $\cM_d$ and $\Phi:\cM_d\rightarrow\cM_d$ a quantum channel. Then the following are equivalent:
\begin{enumerate} \item 
$ \Phi(\rho)=U\Phi\big(U^*\rho U\big) U^*\quad \forall U\in G,\rho\in\cS_d$.
\item There are $\alpha,\beta,\gamma\in\R$ with $\alpha+\beta+\gamma=1$ so that 
\be\Phi=\alpha\;\tr{\cdot}\frac{\1}{d}+\beta\;\id+\gamma\sum_{i=1}^d |i\rangle\langle i|\langle i|\cdot|i\rangle.\label{eq:commutant}\ee
\end{enumerate} 
\end{lemma}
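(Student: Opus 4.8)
Throughout, write $\mathcal{D}(\rho):=\sum_{i=1}^d\kb{i}{i}\bra{i}\rho\ket{i}$ for the completely dephasing (``pinching'') channel, so that the asserted normal form reads $\Phi=\tfrac{\alpha}{d}\,\Tr(\cdot)\,\1+\beta\,\id+\gamma\,\mathcal{D}$. The plan is to prove $(2)\Rightarrow(1)$ by direct verification and $(1)\Rightarrow(2)$ by expanding $\Phi$ in matrix units and imposing the $G$-invariance through the two families of generators of $G$ one after the other.

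For $(2)\Rightarrow(1)$ it suffices, by linearity of $X\mapsto UXU^*$, to check that each of the three maps $\Tr(\cdot)\,\1$, $\id$ and $\mathcal{D}$ is left fixed by conjugation with any generator of $G$. For $\Tr(\cdot)\,\1$ and $\id$ this holds for conjugation by an arbitrary unitary. For $\mathcal{D}$: a diagonal unitary $D$ fixes each $\kb{i}{i}$ and leaves each diagonal entry $\bra{i}\rho\ket{i}$ unchanged because the phases cancel, so $D\,\mathcal{D}(D^*\rho D)\,D^*=\mathcal{D}(\rho)$; a permutation matrix $U_\pi$ carries $\kb{i}{i}$ to $\kb{\pi(i)}{\pi(i)}$ and $\bra{i}U_\pi^*\rho U_\pi\ket{i}=\bra{\pi(i)}\rho\ket{\pi(i)}$, so likewise $U_\pi\,\mathcal{D}(U_\pi^*\rho U_\pi)\,U_\pi^*=\mathcal{D}(\rho)$. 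Since $G$ is generated by these two kinds of unitaries, (1) follows.

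For $(1)\Rightarrow(2)$ I would write $\Phi(X)=\sum_{i,j,k,l}M_{ij,kl}\,X_{kl}\,\kb{i}{j}$ and first impose invariance under diagonal unitaries. Conjugating $X$ by $D^*=\mathrm{diag}(e^{-i\theta_1},\dots,e^{-i\theta_d})$ scales $X_{kl}$ by $e^{i(\theta_l-\theta_k)}$, and conjugating $\kb{i}{j}$ by $D$ scales it by $e^{i(\theta_i-\theta_j)}$; equality for all real $\theta$ forces $M_{ij,kl}=0$ unless $e_i-e_j=e_k-e_l$ in $\R^d$, i.e.\ unless either $i=j$ and $k=l$, or $i=k$ and $j=l$. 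Hence $\Phi$ is pinned down by the $d\times d$ matrix $(M_{ii,kk})_{i,k}$ describing how the diagonal gets mixed and by the numbers $(M_{ij,ij})_{i\neq j}$ that rescale each off-diagonal entry. Next I would impose invariance under the permutation matrices $U_\pi$, which act on matrix units by $\pi$ on each index; this gives $M_{ij,kl}=M_{\pi(i)\pi(j),\pi(k)\pi(l)}$ for all $\pi\in S_d$, so $(M_{ii,kk})_{i,k}$ is invariant under simultaneous row/column permutations — hence of the form $a\,\1+b\,(J-\1)$ with $J$ the all-ones matrix — and $(M_{ij,ij})_{i\neq j}$ is a single constant $c$. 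The resulting three-parameter family of maps is precisely $\{\tfrac{\alpha}{d}\,\Tr(\cdot)\,\1+\beta\,\id+\gamma\,\mathcal{D}\}$ (a short check that $\Tr(\cdot)\,\1$, $\id$, $\mathcal{D}$ are linearly independent and realize three independent such patterns), and the coefficients are real because $\Phi$, like each of these three summands, is Hermiticity preserving. Finally, taking the trace of the normal form and using that $\Phi$ is trace preserving gives $(\alpha+\beta+\gamma)\,\Tr X=\Tr X$ for all $X$, hence $\alpha+\beta+\gamma=1$.

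The only slightly delicate point is the character argument for the diagonal unitaries: one must verify that $e_i-e_j=e_k-e_l$ really admits only the two stated index patterns, so that $\Phi$ cannot couple distinct off-diagonal positions or mix diagonal with off-diagonal entries. Everything after that is bookkeeping over the permutation action plus a $3\times 3$ linear-algebra check. A heavier but equivalent alternative would be to rephrase (1) as the statement that the Choi matrix $C_\Phi$ lies in the commutant of $\{U\otimes\bar{U}:U\in G\}\subseteq\cM_d\otimes\cM_d$ and to compute that commutant representation-theoretically; the elementary route above avoids this.
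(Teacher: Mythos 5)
Your proof is correct and is essentially the paper's argument: the paper carries out the same two-step symmetry reduction (the phase/character average forces all couplings to vanish except the diagonal-to-diagonal block and the entrywise-preserving part, permutation invariance then collapses these to three parameters, and hermiticity plus trace preservation give real coefficients summing to one), only phrased on the Jamiolkowski/Choi matrix, whose entries are exactly your coefficients $M_{ij,kl}$. The ``heavier alternative'' you mention at the end is in fact the paper's route, and it amounts to the same computation in different bookkeeping.
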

\begin{proof} (2) $\Rightarrow$ (1) can be seen by direct inspection. In order to prove the converse, we consider the Jamiolkowski-state (= normalized Choi-matrix) $J_\Phi:=\frac1d \sum_{i,j=1}^d \Phi\big(|i\rangle\langle j|\big)\otimes |i\rangle\langle j|$. Then (1) is equivalent to the statement that $J_\Phi$ commutes with all unitaries of the form $U\otimes\bar{U}$, $U\in G$. Considering for the moment only the subgroup of diagonal unitaries, this requires that
$$ \langle ij|J_\Phi|kl\rangle =(2\pi)^{-d}\int_0^{2\pi} \ldots \int_0^{2\pi} e^{i(\varphi_i-\varphi_j-\varphi_k+\varphi_l)}\langle ij|J_\Phi|kl\rangle\; \mathrm{d}\varphi_1\ldots d\varphi_d,
$$ which vanishes unless $(i=j\wedge k=l)\vee(i=k\wedge j=l)$. Hence, there are $A,B\in\cM_d$ such that 
$$J_\Phi=\sum_{i,j=1}^d A_{ij}|i \rangle \langle i|\otimes |j\rangle \langle j| + B_{ij}|i\rangle\langle j| \otimes |i\rangle\langle j|.
$$
Next, we will exploit that $J_\Phi$  commutes in addition with permutations of the form $U_\pi\otimes U_\pi$ for all $\pi\in S_d$. For $i\neq j$ this implies that $A_{i,j}=A_{\pi(i),\pi(j)}$ and $B_{i,j}=B_{\pi(i),\pi(j)}$ so that there is only one independent off-diagonal element for each $A$ and $B$. The case $i=j$ leads to a third parameter that is a coefficient in front of $\sum_i|ii\rangle\langle ii|$. Translating this back to the level of quantum channels then yields Eq.~(\ref{eq:commutant}). The coefficients are real and sum up to one since $\Phi$ preserves hermiticity as well as the trace.
\end{proof}
If $T$ is symmetric as in Prop.~\ref{prop:sym}, then both marginal channels $T_1$ and $T_2$ are of the form derived in the previous Lemma. That is, each $T_i$, $i\in\{1,2\}$, is specified by three parameters $\alpha_i,\beta_i,\gamma_i$ only two of which are independent.

The following Lemma shows that under Assumption~\ref{assum:2}  the error measure $\delta$ depends only on $\alpha_2$ and does so in a non-decreasing way.  
\begin{lemma}\label{lem:delta=a2} Let $\delta$ satisfy Assumption~\ref{assum:2}. There is a non-decreasing function $\hat{\delta}:[0,1]\rightarrow[0,\infty]$ s.t. for all $T_2:\cM_d\rightarrow\cM_d$ of the form in Eq.~(\ref{eq:commutant}) with coefficients $\alpha_2,\beta_2,\gamma_2$ we have  $\delta \big[ \big(T_2^*(|i\rangle\langle i|)\big)_{i=1}^d\big] =\hat{\delta}(\alpha_2)$. 
\end{lemma}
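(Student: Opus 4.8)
The plan has two parts: first I would pin down the POVM $\big(T_2^*(|i\rangle\langle i|)\big)_{i=1}^d$ explicitly as a function of the parameters, and then read off from that both the well-definedness of $\hat\delta$ and its monotonicity.

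\emph{Step 1 (computing the POVM, and well-definedness).} First I would write $T_2$ in the form of Eq.~(\ref{eq:commutant}) with coefficients $\alpha_2,\beta_2,\gamma_2$ and compute its adjoint term by term. Each of the three constituent maps --- the depolarizing map $\tr{\cdot}\,\1/d$, the identity, and the complete dephasing $\sum_i|i\rangle\langle i|\langle i|\cdot|i\rangle$ --- is self-adjoint with respect to the Hilbert--Schmidt inner product; the depolarizing map sends $|i\rangle\langle i|\mapsto\1/d$ (since $\tr{|i\rangle\langle i|}=1$), while the identity and the dephasing both fix $|i\rangle\langle i|$. Hence
\[ T_2^*\big(|i\rangle\langle i|\big)=\frac{\alpha_2}{d}\1+(\beta_2+\gamma_2)|i\rangle\langle i|=\frac{\alpha_2}{d}\1+(1-\alpha_2)|i\rangle\langle i|, \]
where the last equality uses $\alpha_2+\beta_2+\gamma_2=1$. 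The decisive observation is that the resulting POVM depends on $\alpha_2$ \emph{alone}, not on $\beta_2,\gamma_2$ separately. Consequently $\delta\big[(T_2^*(|i\rangle\langle i|))_{i=1}^d\big]$ is automatically a function of $\alpha_2$ only, which already settles well-definedness; I would then simply \emph{define} $\hat\delta(\alpha):=\delta\big[(\tfrac{\alpha}{d}\1+(1-\alpha)|i\rangle\langle i|)_{i=1}^d\big]$ for $\alpha\in[0,1]$, noting that these are genuine POVM elements on this range (eigenvalues $\alpha/d$ and $1-\alpha(d-1)/d$, both nonnegative).

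\emph{Step 2 (monotonicity).} At $\alpha=0$ the POVM is exactly the target $(|i\rangle\langle i|)_{i=1}^d$, so $\hat\delta(0)=0$ by Assumption~\ref{assum:2}(a). For $0<\alpha\le\alpha'\le1$ the key is the element-wise convex identity
\[ \frac{\alpha}{d}\1+(1-\alpha)|i\rangle\langle i|=\frac{\alpha'-\alpha}{\alpha'}\,|i\rangle\langle i|+\frac{\alpha}{\alpha'}\Big(\frac{\alpha'}{d}\1+(1-\alpha')|i\rangle\langle i|\Big), \]
which exhibits the $\alpha$-POVM as a convex combination (the two weights are nonnegative and sum to $1$) of the target POVM and the $\alpha'$-POVM. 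Applying convexity (Assumption~\ref{assum:2}(b)) together with the vanishing of $\delta$ on the target then yields
\[ \hat\delta(\alpha)\le\frac{\alpha'-\alpha}{\alpha'}\cdot0+\frac{\alpha}{\alpha'}\,\hat\delta(\alpha')=\frac{\alpha}{\alpha'}\,\hat\delta(\alpha')\le\hat\delta(\alpha'), \]
the final step using $\alpha/\alpha'\le1$ and $\hat\delta\ge0$; the case $\alpha'=0$ forces $\alpha=0$ and is trivial. This shows $\hat\delta$ is non-decreasing.

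\emph{Main obstacle.} The only substantive work is Step~1: getting the adjoint right and recognizing that $E_i'$ collapses to a one-parameter family indexed by $\alpha_2$. Once that collapse is seen, well-definedness is immediate and monotonicity is a one-line convexity estimate. It is worth stressing the correct choice of decomposition in Step~2: one must interpolate between the \emph{target} ($\alpha=0$) and the $\alpha'$-POVM --- interpolating instead toward the trivial POVM $\1/d$ would produce a bound of the wrong sign. Notably, only parts (a) and (b) of Assumption~\ref{assum:2} are used here; the permutation- and diagonal-invariance (c),(d) play no role in this particular lemma.
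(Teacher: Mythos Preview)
Your proof is correct and follows essentially the same logic as the paper's: both arguments establish that the POVM $\big(T_2^*(|i\rangle\langle i|)\big)_i$ depends on $\alpha_2$ alone, and then deduce monotonicity from convexity together with $\hat\delta(0)=0$ and $\hat\delta\ge 0$. Your direct computation of the adjoint is a touch more transparent than the paper's route (which composes $T_2$ with the diagonal projection to see that $\beta_2,\gamma_2$ enter only through their sum), and your explicit convex interpolation is precisely the standard proof of the fact the paper invokes, so the two arguments are effectively the same.
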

\begin{proof}
The statement  follows from convexity of $\delta$ together with the observation that $\beta$ and $\gamma$ only contribute jointly to $\delta$ and not individually. This is seen by composing $T_2$ with the projection onto the diagonal. This leads to a channel of the same form, but possibly different parameters. On the level of the latter  the composition corresponds to  $(\alpha_2,\beta_2,\gamma_2)\mapsto(\alpha_2,0,\beta_2+\gamma_2)$. The distance measure $\delta$, however, does not change in this process and thus depends only on the sum $\beta_2+\gamma_2$ and not on those two parameters individually. As this sum equals $1-\alpha_2$ we see that $\delta$ can be regarded as a function of $\alpha_2$ only. We formally denote this function by $\hat{\delta}$.  Assumption (2b) then implies that $\hat{\delta}$ is convex. As it is in addition positive and satisfies $\hat{\delta}(0)=0$ by Assumption (2a), we get that $\hat{\delta}$ is non-decreasing. 
\end{proof}

For later investigation, it is useful to decompose the $J_\Phi$ that corresponds to Eq.~(\ref{eq:commutant}) into its spectral projections:
\begin{eqnarray}
J_\Phi &=& aP_a+bP_b+cP_c,\quad\text{where}\quad P_a:=\1-\sum_{i=1}^d |ii\rangle\langle ii| ,\nonumber\\
& &P_b:= \frac1d\sum_{i,j=1}^d |ii\rangle\langle jj|,\label{eq:specproj}\quad P_c:= \sum_{i=1}^d  |ii\rangle\langle ii| -P_b.
\end{eqnarray} The coefficients $a,b,c$ are the eigenvalues of $J_\Phi$ (and thus non-negative) and related to $\alpha,\beta,\gamma$ via $\alpha=d^2 a,\ \beta=b-c,\ \gamma=d(c-a)$. When considering symmetric $T$, we will label the eigenvalues of $J_{T_i}$ with a subscript $i\in\{1,2\}$ to distinguish the two marginals. 

Since the $P$'s are mutually orthogonal projectors, we can obtain the eigenvalues from their expectation values. That is,
\be x_1=\frac{\tr{(P_x\otimes\1) J_T}}{\tr{P_x}}\quad\text{and}\quad x_2=\frac{\tr{(\1\otimes P_x) J_T}}{\tr{P_x}},\quad x\in\{a,b,c\}.\label{eq:eigexp}\ee
If we are aiming at identifying  a subset of optimal channels, we can, according to Lemma~\ref{lem:delta=a2}, w.l.o.g. use  $a_2$ as $\delta$. Due to the monotonic  relation between the two, optimality for one implies optimality for the other. The question we are going to address in the next step of the argumentation is then: which values of $a_1, b_1$ and $c_1$  are consistent with a given value of $a_2$? After all, due to Prop.~\ref{prop:sym}, $\Delta$ and $\delta$ will be functions of those parameters only. Thus, we would like to know which is the accessible region in the space of these parameters, when we vary $J_T$ over the set of all density matrices. 

We tackle this question using an operator algebraic point of view: the operators $\1\otimes P_a,P_x\otimes\1$ together with the identity operator generate a von Neumann algebra $\cA$ on which $J_T$ acts as a state, i.e., as a normalized positive linear functional. This suggests the use of a von Neumann algebra isomorphism that simplifies the representation. To this end, we observe that $\cA$ is generated by the following operators:
\begin{align*}
\1_{d^3} &=:\ \Ga & \1_d\otimes\sum_{i=1}^d |ii\rangle\langle ii| &=:\ \Gb\\
\sum_{i,j=1}^d |ii\rangle\langle jj|\otimes\1_d &=:\ \Gc & \sum_{i=1}^d |ii\rangle\langle ii|\otimes\1_d &=:\ \Gd
\end{align*}
The introduced diagrammatic notation turns out be useful as it reflects that these operators are what one may call \emph{contraction tensors}.\footnote{Please note that these diagrams are not braid diagrams, but rather diagrammatically represent contraction tensors.} If we view an element in $\cM_d\otimes\cM_d\otimes\cM_d$ as a tensor with three left and three right indices, then the diagrammatic notation indicates which of these indices get contracted together---by connecting them.
Taking products of pairs of these four operators generates (up to scalar multiples, which arise from closed loops) three new contraction tensors:
\begin{equation*}
\Ge := \Gb\Gc,\quad \Gf := \Gc\Gb,\quad \Gg :=\Gb\Gd.
\end{equation*}
The set of these seven tensors is, however, closed under multiplication (again ignoring scalar multiples). This is easily verified by using the diagrammatic notation and going through all cases. This observation is the core for constructing a simplifying isomorphism:
\begin{lemma}[Isomorphic representation]\label{lem:iso}
Let $\cA$ be the von Neumann algebra that is generated by the set $\{\1_{d^3},\1_d\otimes P_a,P_a\otimes\1_d ,P_b\otimes\1_d, P_c\otimes\1_d\}$. A unital map  $\iota:\cA\rightarrow\cM_2\oplus\C^3$ defined by
\begin{align}
\iota: \Gc &\mapsto  d|e_1\rangle\langle e_1|& \iota: \Gg &\mapsto  |e_2\rangle\langle e_2|\label{eq:iso1}\\
\iota: \Gd &\mapsto \1_2 \oplus f_2& \iota: \Gb &\mapsto |e_2\rangle\langle e_2| \oplus f_1\label{eq:iso2}
\end{align}
is an isomorphism if $|e_1\rangle, |e_2\rangle$ constitute unit vectors with $|\langle e_1|e_2\rangle|^2=1/{d}$ in the space of the non-abelian part (i.e., the corresponding projections as well as $\1_2$ are in $\cM_2$) and $f_1:=(1,0,0),f_2:=(0,1,0)$ are elements of the abelian part.\footnote{Here we regard $\C^3$ as space $\cM_1\oplus\cM_1\oplus\cM_1$ of diagonal matrices in $\cM_3$.}
\end{lemma}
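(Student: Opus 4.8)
The plan is to verify directly that the assignment $\iota$ extends to a $*$-algebra isomorphism by checking it on generators and then invoking a dimension/structure count. First I would record the structure of $\cA$ as an abstract $*$-algebra. By the closure-under-multiplication observation preceding the lemma, $\cA$ is spanned (over $\C$) by the seven contraction tensors $\Ga,\Gb,\Gc,\Gd,\Ge,\Gf,\Gg$ together with whatever further elements their real-linear combinations produce; one checks that these seven, together with products, already span a finite-dimensional algebra. Since $\cA$ is a finite-dimensional von Neumann algebra it is $*$-isomorphic to a direct sum of matrix algebras, and the diagrammatic multiplication table will show that the only non-abelian block that can occur is a single copy of $\cM_2$ (coming from the fact that $\Gc$ and $\Gg$ are non-commuting rank-one-type projections whose product-relations mimic two unit vectors at fixed overlap), with the remaining part abelian of dimension $3$. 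This pins down $\cA\cong\cM_2\oplus\C^3$ abstractly, so it suffices to exhibit one concrete isomorphism, and the claim is that the explicit $\iota$ of Eqs.~(\ref{eq:iso1})--(\ref{eq:iso2}) is such a map.

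Next I would carry out the verification in three steps. \emph{Step 1: $\iota$ is well defined and unital.} One checks $\iota(\Ga)=\1_2\oplus(1,1,1)$, consistent with sending the identity to the identity, and that the defining relations among $\Gb,\Gc,\Gd,\Gg$ (idempotency up to the explicit scalars from closed loops, e.g. $\Gc^2=d\,\Gc$, $\Gg^2=\Gg$, the mutual products giving $\Ge,\Gf$, and $\Gb\Gc\Gb=\tfrac1d$-type contractions) are matched on the right-hand side. The hypothesis $|\langle e_1|e_2\rangle|^2=1/d$ is exactly what makes $d\,\kb{e_1}{e_1}\cdot\kb{e_2}{e_2}\cdot d\,\kb{e_1}{e_1}$ reproduce $d^2\cdot\tfrac1d\,\kb{e_1}{e_1}$, i.e. it is forced by the closed-loop scalar $d$ in the corresponding diagram identity; I would make this the organizing principle, reading off each required overlap from the number of closed loops in the product diagram. \emph{Step 2: $\iota$ is a $*$-homomorphism.} Each generator on the left is self-adjoint (a contraction tensor symmetric under transpose of the diagram) and each image is self-adjoint, so $*$-compatibility follows once multiplicativity is checked on generators, and multiplicativity on generators follows from Step~1 by matching the two multiplication tables entry by entry (seven-by-seven on the left, and the corresponding products of $\kb{e_1}{e_1}$, $\kb{e_2}{e_2}$, $\1_2$, $f_1$, $f_2$ on the right). \emph{Step 3: bijectivity.} Surjectivity: the images $d\,\kb{e_1}{e_1}$, $\kb{e_2}{e_2}$, $\1_2$ generate all of $\cM_2$ (two distinct rank-one projections plus the identity generate $\cM_2$), and $f_1,f_2,$ and $\1-f_1-f_2$ are hit by suitable combinations of $\iota(\Gb),\iota(\Gd),\iota(\Ga)$; hence $\iota$ is onto $\cM_2\oplus\C^3$. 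Injectivity then follows because $\dim\cA\le 7$-generated but in fact $\dim\cA = \dim(\cM_2\oplus\C^3)=7$: a surjective linear map between finite-dimensional spaces of equal dimension is bijective, and one checks the dimension of $\cA$ is $7$ by exhibiting the seven tensors as linearly independent (e.g. by evaluating on suitable basis vectors $\ket{ijk}$, or by noting the spectral projections used to build $\cA$ already have ranks forcing $\dim\cA\ge 7$ while the multiplication-closure gives $\dim\cA\le 7$).

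The main obstacle I expect is Step~1/Step~2 bookkeeping done correctly: getting all the closed-loop scalar factors right in the seven-by-seven multiplication table and confirming that the right-hand side genuinely lands in $\cM_2\oplus\C^3$ with no hidden extra block. In particular one must be careful that $\iota(\Gb)$ has the "mixed" form $\kb{e_2}{e_2}\oplus f_1$, split across the non-abelian and abelian parts, and verify that this is compatible with $\Gb^2=\Gb$ and with all products $\Gb\Gc$, $\Gb\Gd$, $\Gb\Gg$ — this is the one place where the two summands genuinely interact through the requirement that a single operator on the left map to something with components in both blocks. I would handle this by writing $\Gb$ in the original tensor-product form $\1_d\otimes\sum_i\kb{ii}{ii}$, decomposing $\sum_i\kb{ii}{ii}=P_b+P_c$, and tracking how $P_b$ and $P_c$ behave under multiplication by $P_a\otimes\1$, $\Gc$, $\Gd$; the $P_b$ part will feed the non-abelian block and the $P_c$ part the abelian one, which is precisely the split recorded by $\iota$. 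Everything else is a finite, if tedious, diagrammatic check that the author presumably relegates to "going through all cases."
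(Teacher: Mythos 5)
Your proposal is correct in substance but follows a genuinely different route from the paper. The paper argues top-down: it establishes $\dim\cA=7$ and non-commutativity, invokes the classification of finite-dimensional von Neumann algebras (every $7$-dimensional non-commutative one is $\cM_2\oplus\C^3$) to get the abstract isomorphism, and then shows that the images in Eqs.~(\ref{eq:iso1})--(\ref{eq:iso2}) are essentially \emph{forced}: unitality fixes $\iota(\1_{d^3})$, the fact that $\Gc,\Gg$ are proportional to non-commuting minimal projections fixes Eq.~(\ref{eq:iso1}) with the overlap $|\langle e_1|e_2\rangle|^2=1/d$ forced by $\tr{\Gc\Gg}$, the relations $\Gd\Gc=\Gc$, $\Gd\Gg=\Gg$, $\Gb\Gg=\Gg$ fix the $\cM_2$-parts of $\iota(\Gd),\iota(\Gb)$, and generation plus linear independence force the abelian parts to be distinct one-dimensional projections. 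You instead argue bottom-up: take the explicit map, extend it over the seven-element basis, verify multiplicativity by matching structure constants (your spot-check is the right one --- $\Gg\Gc\Gg=\Gg$ and $\Gc\Gg\Gc=\Gc$ are exactly what force $d\,|\langle e_1|e_2\rangle|^2=1$), note $*$-compatibility from self-adjointness of the four generators and their images, and get bijectivity from surjectivity (two non-commuting rank-one projections plus $\1_2$ generate $\cM_2$, and the abelian components are then reachable by subtraction) together with the dimension count $7=7$. Your route avoids any appeal to the cited classification theorem (the abstract identification of $\cA$ you mention becomes redundant once the direct check is done), at the price of a full seven-by-seven table verification that the paper sidesteps; the paper's route is shorter and additionally establishes that the form of $\iota$ is essentially unique, which is what justifies the residual unitary freedom later exploited in Eq.~(\ref{eq:fixinplane}).

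One small caveat on your closing heuristic: the decomposition of $\Gb=\1_d\otimes\sum_i\kb{ii}{ii}$ that matches the split $\kb{e_2}{e_2}\oplus f_1$ is \emph{not} $\1\otimes P_b+\1\otimes P_c$; the non-abelian component corresponds to $\Gg=\sum_i\kb{iii}{iii}$ and the abelian component to $\Gb-\Gg=\sum_i(\1_d-\kb{i}{i})\otimes\kb{ii}{ii}$ (indeed $\1\otimes P_b$ does not even lie in $\cA$). This does not affect your main three-step argument, which stands, but the bookkeeping in that last paragraph should be organized around $\Gg$ and $\Gb-\Gg$ rather than around $P_b$ and $P_c$.
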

\begin{proof}
$\cA$ is generated by the above set of seven contraction tensors. Since this set is closed under multiplication, $^*$-operation and contains linear independent elements, we have ${\rm dim}(\cA)=7$. Moreover, $\cA$ is non-commutative since $[\Gc,\Gg]\neq 0$. From the representation theory of finite-dimensional von Neumann algebras we known that every $7$-dimensional non-commutative von Neumann algebra is isomorphic to $\cM_2\oplus\C^3$ \cite[Thm. 5.6]{Farenick_2001}. Hence, we can establish an isomorphism $\iota$ by representing a generating set of $\cA$ in $\cM_2\oplus\C^3$. Due to unitality $\iota(\1_{d^3})=\1_2\oplus(1,1,1)$ has to hold. Moreover, since $\Gc,\Gg$ are (proportional to) non-commuting minimal projectors in $\cA$, they need to be the same in $\cM_2\oplus\C^3$. Taking proportionality factors into account, this determines Eq.~(\ref{eq:iso1}) and requires $|\langle e_1|e_2\rangle|^2=1/{d}$ in order to be consistent with the value of the trace $\tr{\Gc\Gg}$. From $\Gd\Gc=\Gc$ and $\Gd\Gg=\Gg$ we see that $\iota(\Gd)$ acts as identity on $\cM_2$. Similarly, $\iota(\Gb)$, when restricted to $\cM_2$, has to be a projector that is not the identity and has $|e_2\rangle$ as eigenvector (due to $\Gb\Gg=\Gg$). This determines Eq.~(\ref{eq:iso2}) when restricted to $\cM_2$. Moreover, since $\cM_2\oplus\C^3$ has to be generated, both $\iota(\Gd)$ and $\iota(\Gb)$ have to have non-zero  parts on the abelian side. Since they are projectors, these parts need to be projectors as well. Finally, they have to be one-dimensional since otherwise the identity operator would become linearly dependent. 
\end{proof}
Using this Lemma we can now express the accessible region within the space of parameters  $\alpha_1,\beta_1,\gamma_1,\alpha_2$ by varying over all states on $\cM_2\oplus\C^3$, instead of over all states $J_T$ on $\cM_{d^3}$. To this end, we just have to unravel the linear maps from the parameters to the eigenvalues $a_1,b_1,c_1,a_2$, to the $P$'s, to the contraction tensors, and finally to their representation in $\cM_2\oplus\C^3$. In this way, we obtain:
\begin{corollary}
\label{cor:reduc}
There exists a channel $T:\cM_d\rightarrow\cM_d\otimes\cM_d$ with corresponding Jamiolkowski state $J_T$ whose marginals give rise to the parameters $\alpha_1,\beta_1$ and $a_2$ iff there exists a state $\varrho$ on $\cM_2\oplus\C^3$ such that
\bea
\alpha_1 &=& \frac{d}{d-1}\Big(1-\tr{\1_2\varrho}-\tr{f_2\varrho}\Big),\label{eq:paramred1}\\
\beta_1 &=& \bra{e_1}\varrho\ket{e_1}-\frac{1}{d-1}\Big(\tr{\1_2\varrho}+\tr{f_2\varrho}-\bra{e_1}\varrho\ket{e_1}\Big),\label{eq:paramred2}\\
a_2 &=& \big(1- \bra{e_2}\varrho\ket{e_2}-\tr{f_1\varrho}\big)/(d^2-d),\label{eq:paramred3}
\eea where $\C^3$ is regarded as space of diagonal $3\times 3$ matrices and $e_1,e_2,f_1,f_2$ are as in Lemma~\ref{lem:iso}.
\end{corollary}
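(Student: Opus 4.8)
The plan is to read off $(\alpha_1,\beta_1,a_2)$ as a linear functional of the restriction of $J_T$ to the algebra $\cA$, and then to transport this functional through the isomorphism $\iota$ of Lemma~\ref{lem:iso}. Since symmetrizing a channel over $G$ leaves a marginal of the form (\ref{eq:commutant}) unchanged and preserves the number $a_2$, we may by Prop.~\ref{prop:sym} assume $T$ is $G$-symmetric; then both marginals are of the form (\ref{eq:commutant}) by Lemma~\ref{lem:commutant}. By Eq.~(\ref{eq:eigexp}), together with the relations $\alpha=d^2a$ and $\beta=b-c$, the triple $(\alpha_1,\beta_1,a_2)$ is a linear function of
\[
 a_1=\frac{\tr{(P_a\otimes\1_d)J_T}}{d^2-d},\quad b_1=\tr{(P_b\otimes\1_d)J_T},\quad c_1=\frac{\tr{(P_c\otimes\1_d)J_T}}{d-1},\quad a_2=\frac{\tr{(\1_d\otimes P_a)J_T}}{d^2-d},
\]
where we used $\tr{P_a}=d^2-d$, $\tr{P_b}=1$, $\tr{P_c}=d-1$. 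All four operators occurring here lie in $\cA$, so $(\alpha_1,\beta_1,a_2)$ depends on $J_T$ only through the state $\omega:=\tr{\,\cdot\,J_T}$ that $J_T$ induces on $\cA$.

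Next I would show that, as $T$ ranges over the $G$-symmetric channels, $\omega$ ranges over \emph{all} states of $\cA$. One inclusion is immediate from positivity and normalization of $J_T$. For the converse, given an arbitrary state $\omega$ on $\cA$, extend it to a state $\tilde\omega$ on $\cM_{d^3}$ — for instance as $\omega$ composed with the trace-preserving conditional expectation onto $\cA$ — let $\sigma$ be its density operator, and symmetrize $\sigma$ over $G$ as in the proof of Prop.~\ref{prop:sym}. The result $\bar\sigma$ is positive with unit trace, and its marginal on the input tensor factor is $G$-invariant, hence equal to $\1_d/d$ (diagonal-unitary invariance forces it diagonal, permutation invariance forces the diagonal constant); therefore $\bar\sigma$ is the Jamiolkowski state of a genuine trace-preserving channel $T$, which is $G$-symmetric by construction. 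Finally, the generators $\1_d\otimes P_a$, $P_a\otimes\1_d$, $P_b\otimes\1_d$, $P_c\otimes\1_d$ of $\cA$ are fixed by conjugation with the symmetry representation of $G$ — a one-line check, since $P_a,P_b,P_c$ commute with every $U\otimes\bar U$, $U\in G$ (cf.\ the proof of Lemma~\ref{lem:commutant}), and $\1_d$ commutes with $U$ — so $\cA$ lies in the commutant of that representation and the symmetrization does not alter the restriction to $\cA$. Hence $\omega=\tr{\,\cdot\,J_T}$ on $\cA$, and $T$ realizes the prescribed $(\alpha_1,\beta_1,a_2)$.

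It then remains to make the linear functional explicit. Let $\varrho$ be the density operator on $\cM_2\oplus\C^3$ that represents the state $\omega\circ\iota^{-1}$, so that $\tr{XJ_T}=\tr{\iota(X)\varrho}$ for every $X\in\cA$. Using Eq.~(\ref{eq:specproj}) one expresses the four generators through the contraction tensors,
\[
 P_a\otimes\1_d=\Ga-\Gd,\qquad P_b\otimes\1_d=\tfrac1d\,\Gc,\qquad P_c\otimes\1_d=\Gd-\tfrac1d\,\Gc,\qquad \1_d\otimes P_a=\Ga-\Gb,
\]
applies $\iota$ via Lemma~\ref{lem:iso} (with $\iota(\Ga)=\1_2\oplus(1,1,1)$ by unitality) to rewrite each as an explicit element of $\cM_2\oplus\C^3$, takes expectations against $\varrho$ with the normalizations above, and substitutes $\alpha_1=d^2a_1$, $\beta_1=b_1-c_1$; after simplifying with $\tr{\varrho}=1$ this yields precisely Eqs.~(\ref{eq:paramred1})--(\ref{eq:paramred3}). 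Sufficiency of these relations is then automatic: given any state $\varrho$ on $\cM_2\oplus\C^3$, the correspondence above produces a channel with exactly those parameters.

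I expect the one genuinely nontrivial step to be the converse half of the correspondence: an abstract state on $\cA$ extends to $\cM_{d^3}$ in many ways, and one must ensure that \emph{some} extension is the Jamiolkowski state of a \emph{trace-preserving} map. The $G$-symmetrization is what forces the input marginal to become maximally mixed, while the fact that $\cA$ sits in the commutant of the $G$-action is what makes this manipulation invisible to $\omega$. Everything after that is routine bookkeeping with the seven contraction tensors and the explicit form of $\iota$.
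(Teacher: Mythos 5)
Your proposal is correct, and its computational core is exactly the paper's own argument: the paper's appendix proof likewise writes $\alpha_1,\beta_1,a_2$ via Eq.~(\ref{eq:eigexp}) as normalized expectations of $P_a\otimes\1$, $P_b\otimes\1$, $P_c\otimes\1$, $\1\otimes P_a$ against $J_T$, re-expresses these through the contraction tensors, and pushes them through $\iota$; your decompositions and normalizations ($\tr{P_a}=d^2-d$, $\tr{P_b}=1$, $\tr{P_c}=d-1$) reproduce Eqs.~(\ref{eq:paramred1})--(\ref{eq:paramred3}) correctly. Where you go beyond the paper is the converse half of the ``iff'': the paper's proof treats it as immediate from the isomorphism, and only later (in the proof of Thm.~\ref{thm:universality}, for the particular states with $z=0$ and $z=1$) does it verify that a state on $\cM_2\oplus\C^3$ really comes from a channel. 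Your argument --- extend the state on $\cA$ to $\cM_{d^3}$ via a conditional expectation, symmetrize over $G$, note that the symmetrized positive operator has maximally mixed input marginal and is hence the Jamiolkowski state of a trace-preserving map, and that the symmetrization is invisible on $\cA$ because the generators commute with the relevant representation $U\otimes\bar U\otimes U$ of $G$ --- is sound (your one-line commutation check does hold with the paper's ordering of the tensor factors, where the $P_x$'s sit on an output--input pair) and supplies this step in full generality. So: same route as the paper for the identities, with an added, correct justification of surjectivity onto the states of $\cM_2\oplus\C^3$ that the paper leaves implicit.
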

The proof of this corollary can be found in the appendix.

There is still unitary freedom in the choice of the vectors $e_1, e_2$. We utilize this and set
\be \langle e_1|\sigma_y|e_1\rangle=\langle e_2|\sigma_y|e_2\rangle=0\quad\text{and}\quad |e_2\rangle\langle e_2|=\frac12(\1_2+\sigma_x)\label{eq:fixinplane},\ee where the $\sigma_i$'s are the usual Pauli matrices. So in particular, we choose the vectors such that the corresponding projectors lie in an equatorial plane of the Bloch sphere that is characterized by density matrices with real entries.

In order to simplify the problem further, we now focus more explicitly on minimizing $a_2$: 
\begin{proposition}[Reduction to the unit cone]\label{prop:cone} Under the constraints given by Eqs.~(\ref{eq:paramred1} -- \ref{eq:fixinplane}), the minimum value for $a_2$ for arbitrary fixed values of $\alpha_1,\beta_1$ that is achievable by varying over all states $\varrho$ is attained for a state of the form 
\be\varrho=\frac{1}{2}\Big((1-z)\1_2+x\sigma_x+y\sigma_z\Big)\oplus (z,0,0),\label{eq:cone}\ee
where $(x,y,z)\in\R^3$ is an element of the envelope of the unit cone, i.e., $z\in[0,1], x^2+y^2=(1-z)^2$.
\end{proposition}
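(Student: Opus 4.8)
The plan is to view this as minimizing the affine functional $a_2$ of~(\ref{eq:paramred3}) over the compact convex set of states $\varrho$ on $\cM_2\oplus\C^3$ obeying the two affine constraints~(\ref{eq:paramred1})--(\ref{eq:paramred2}) that fix $\alpha_1,\beta_1$, and then to peel off the coordinates that are irrelevant for the objective. Parametrize $\varrho=R\oplus(p_1,p_2,p_3)$ with $R=\tfrac12(t\1_2+\vec r\cdot\vec\sigma)$, $t\ge0$, $|\vec r|\le t$, $p_j\ge0$, $t+p_1+p_2+p_3=1$. The gauge choice~(\ref{eq:fixinplane}) together with $|\langle e_1|e_2\rangle|^2=1/d$ gives $|e_2\rangle\langle e_2|=\tfrac12(\1_2+\sigma_x)$ and $|e_1\rangle\langle e_1|=\tfrac12(\1_2+n_x\sigma_x+n_z\sigma_z)$, where $n_x=2/d-1\le0$ and $n_z=\sqrt{1-n_x^2}>0$; hence $\langle e_2|\varrho|e_2\rangle=\tfrac12(t+r_x)$, $\langle e_1|\varrho|e_1\rangle=\tfrac12(t+n_xr_x+n_zr_z)$ and $\tr{f_j\varrho}=p_j$, so by~(\ref{eq:paramred3}) minimizing $a_2$ is equivalent to maximizing $\tfrac12(t+r_x)+p_1$.

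Next I would show that some minimizer has $r_y=0$ and $p_3=0$. Setting $r_y\mapsto0$ keeps $R\ge0$ and alters neither $t$, $p_2$, nor $\langle e_1|\varrho|e_1\rangle$, $\langle e_2|\varrho|e_2\rangle$ (the vectors $|e_1\rangle,|e_2\rangle$ have no $\sigma_y$-component), hence preserves the two constraints and the objective; and the replacement $(p_1,p_2,p_3)\mapsto(p_1+p_3,p_2,0)$ again leaves $\alpha_1,\beta_1$ untouched while not decreasing the objective. Once $p_3=0$, constraint~(\ref{eq:paramred1}) becomes $t+p_2=s$ for the number $s\in[0,1]$ fixed by $\alpha_1$, so $p_1=1-s$ is already determined, and~(\ref{eq:paramred2}) then fixes $u:=\langle e_1|\varrho|e_1\rangle=\tfrac12(t+n_xr_x+n_zr_z)$, which satisfies $0\le u\le t$. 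The problem has reduced to: maximize $t+r_x$ over $t\in[u,s]$ with $(r_x,r_z)$ in the intersection of the line $n_xr_x+n_zr_z=2u-t$ with the disc $r_x^2+r_z^2\le t^2$.

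For fixed $t$ that intersection is a chord, and the linear function $r_x$ is maximal at an endpoint of the chord, i.e.\ on the circle $r_x^2+r_z^2=t^2$, so at the optimum $R$ has rank one. A short computation (via $t^2-(2u-t)^2=4u(t-u)$) gives $r_x^{\max}(t)=(2u-t)n_x+2n_z\sqrt{u(t-u)}$, and therefore $f(t):=t+r_x^{\max}(t)=(1-n_x)\,t+2n_xu+2n_z\sqrt{u(t-u)}$. Since $1-n_x=2(1-1/d)>0$ and $n_z>0$, the function $f$ is strictly increasing on $[u,s]$, so its maximum is attained at $t=s$, which forces $p_2=0$. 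Assembling the pieces, an optimal $\varrho$ has $p_2=p_3=0$, $r_y=0$, $r_x^2+r_z^2=t^2$ and $p_1=1-t$, and writing $z:=1-t\in[0,1]$, $x:=r_x$, $y:=r_z$ puts it precisely in the form~(\ref{eq:cone}) with $(x,y,z)$ on the envelope $x^2+y^2=(1-z)^2$ of the unit cone. The only step that is not pure bookkeeping — and hence the main obstacle — is checking the monotonicity of $f$ (equivalently, that increasing $t$ is never disadvantageous); everything else follows from the chosen parametrization and the elementary geometry of a chord of a disc.
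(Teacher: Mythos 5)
Your proof is correct, and it follows the same overall reduction strategy as the paper---eliminate the irrelevant components of $\varrho$ one by one and push what remains to the boundary of the cone---but it executes the second half differently. The paper handles the $f_2$-component and the boundary condition by two perturbation maps: it shifts the $f_2$-weight onto $|e_1^\perp\rangle\langle e_1^\perp|$ (which leaves $\alpha_1,\beta_1$ invariant and cannot increase $a_2$), and then moves the Bloch vector along a direction orthogonal to $v_1$ in the equatorial plane until it reaches unit norm. You instead keep $p_2$ in play, reduce to an explicit two-variable problem after noting that Eqs.~(\ref{eq:paramred1}) and (\ref{eq:paramred2}) fix $t+p_2=s$ and $u=\langle e_1|\varrho|e_1\rangle$, and solve it in closed form: the chord-endpoint argument replaces the paper's $v_1^\perp$-perturbation (and directly yields rank-one $R$, i.e.\ the envelope condition $x^2+y^2=(1-z)^2$), while the monotonicity of $f(t)=(1-n_x)t+2n_xu+2n_z\sqrt{u(t-u)}$ replaces the paper's $f_2\to e_1^\perp$ shift by showing $p_2=0$ at the optimum. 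Your computation of $r_x^{\max}(t)$ checks out (via $t^2-(2u-t)^2=4u(t-u)$), and $1-n_x=2(1-1/d)>0$, $n_z>0$ indeed give monotonicity; the only implicit choices are harmless ones: taking $n_z>0$ is a residual gauge freedom not fixed by Eq.~(\ref{eq:fixinplane}) but removable by conjugation with $\sigma_x$ (and Eq.~(\ref{eq:cone}) is symmetric under $y\mapsto-y$ anyway), and attainment of the minimum follows from compactness of the feasible set, which you tacitly use. The trade-off is that the paper's argument is shorter and purely structural, whereas yours is more computational but also more explicit---it exhibits the optimizer and essentially pre-solves the quadratic problem that reappears later in the fidelity tradeoff.
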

\begin{proof}
We simplify the structure of $\varrho$ in four steps, each of which eliminates one parameter. First, note that we can assume $\tr{f_3\varrho}=0$, where $f_3$ is the diagonal matrix $(0,0,1)$. This is seen by considering the map $\varrho\mapsto\varrho+\tr{f_3\varrho}(f_1-f_3)$, which decreases $a_2$, sets the $f_3$-component to zero, but leaves $\alpha_1$ and $\beta_1$ unchanged. 

Second, we claim that the $f_2$-component can be set to zero, as well. To this end, consider the map $\varrho\mapsto\varrho+\tr{f_2\varrho}(|e_1^\perp\rangle\langle e_1^\perp|-f_2)$ where $e_1^\perp$ is a unit vector in $\C^2$ that is orthogonal to $e_1$. By construction, this sets the $f_2$-component to zero, decreases $a_2$ and leaves $\alpha_1$ and $\beta_1$ invariant. Taken together with the first step, this already shows that the abelian part of $\varrho$ can be assumed to be of the form $(z,0,0)$ for some $z\in[0,1]$. 

Third, observe that the $\sigma_y$-component of the non-abelian part of $\varrho$ does not enter any of the equations so that we can as well set it to zero and thus assume that, restricted to $\cM_2$, $\varrho$ lies in the 'real' equatorial plane of the Bloch sphere. 

Taking positivity and normalization into account, Eq.~(\ref{eq:cone}) summarizes these findings, so far with $x^2+y^2\leq (1-z)^2$. What remains to show is that equality can be assumed, here. Let $v_1,v_2,w\in{\R^3}$ be the Bloch vectors of $e_1$, $e_2$ and $\varrho$, respectively. Suppose $||w||_2<1$, which corresponds to a point that does not lie on the envelope of the cone and let $v_1^\perp\in\R^3$ be a unit vector in the equatorial plane that is orthogonal to $v_1$. Then the map $w\mapsto w+\epsilon v_1^\perp$, for sufficiently small $\epsilon$ of the right sign, leaves $\alpha_1$ and $\beta_2$ unchanged, but decreases $a_2$. Hence, we can choose $\epsilon$ so that the Bloch vector reaches unit norm, which completes the proof of the proposition. 
\end{proof}
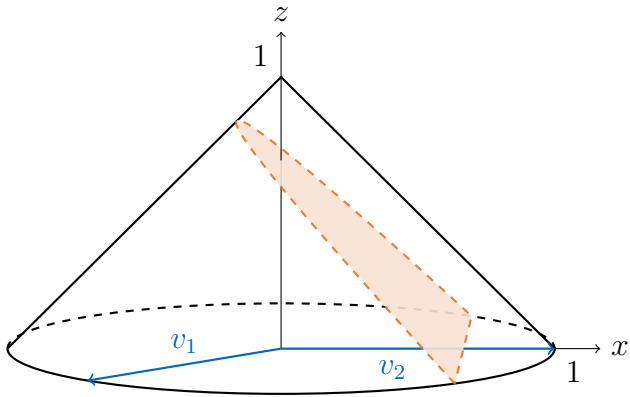
\begin{figure}[ht!]
\centering
\begin{tikzpicture}[scale=0.6]
\draw[->] (0,0) -- (7,0)node[right]{$x$};
\draw[<-, name path=y axis] (0,7)node[above]{$z$} -- (0,0);
\draw (6,1pt) -- (6,-1pt) node[below right]{$1$};
\draw (1pt,6) -- (-1pt,6) node[above left]{$1$};
\draw[name path=hat, thick] (-6,0) -- (0,6) -- (6,0);
\draw[dashed, name path=upper arc, thick] (-6,0) arc (180:0:6cm and 1cm);
\draw[name path=lower arc, thick] (-6,0) arc (180:360:6cm and 1cm);
\draw[tumblue, thick,->] (0,0) -- node[below left]{$v_2$} (6,0);
%v1
\path[name path=P1] (0,0) -- (-6,-1); %Hilflinie P1
\path[name intersections={of=P1 and lower arc, by=A1}]; %Intersection von Arc mit Hilfslinie 
\draw[tumblue, thick, ->] (0,0) -- node[above]{$v_1$} (A1);
%parabola
\path[name path=P2] (4,0) -- (-2,6); %Hilfslinie P2
\draw[name intersections={of=P2 and hat, by=A2}]; 
  %\filldraw[fill=black!20!white] (A2) circle (0.1cm)  let \p2=(A2) in node[above]{$A2(\x2,\y2)$};
%
\path[name path=P3] (4,0) -- (3.75,-1); %Hilfslinie P3
\draw[name intersections={of=P3 and lower arc, by=A3}]; 
	%\filldraw[fill=black!20!white] (A3) circle (0.1cm) let \p3=(A3) in node[above]{$A3(\x3,\y3)$};
\path[name path=P4] (5,4) -- (4,0); %Hilfslinie P4
\draw[name intersections={of=P4 and upper arc, by=A4}];
	%\filldraw[fill=black!20!white] (A4) circle (0.1cm) let \p4=(A4) in node[above]{$A4(\x4,\y4)$};
\filldraw[fill=tumorange!20!white, draw= tumorange, opacity=0.9, dashed, name path=para, thick] (A3) .. controls (-2.7,6.4) and (-2.6,7.0) .. (A4) -- (A3); %control points were calculated brute force
\path[name intersections={of=para and y axis}];
\coordinate (I1) at (intersection-1);
\coordinate (I2) at (intersection-2);
	%\filldraw[fill=black!20!white] (I1) circle (0.1cm)  node[above]{$I1$};
	%\filldraw[fill=black!20!white] (I2) circle (0.1cm)  node[above]{$I2$};
\node (I3) at ($(I1)!0.4!(I2)$) {};
	%\filldraw[fill=black!20!white] (I3) circle (0.1cm)  node[above]{$I3$};
\draw (I3) -- (I2);
\end{tikzpicture}
 \caption{Sketch of the unit cone used in the construction of the proof in Prop.~\ref{prop:cone}. The orange parabola corresponds to a fixed value of $\delta$ and the optimal device is contained within its boundary; its location depends on the chosen disturbance distance measure $\Delta$.}%
 \label{fig:plot_cone}%
 \end{figure}
This completes the list of ingredients that are needed for the main theorem of this section:
\begin{theorem}[(Almost universal) optimal devices]\label{thm:universality} Let $\Delta$ and $\delta$ be distance-measures for quantifying disturbance and measurement-error that satisfy Assumptions~\ref{assum:1} and \ref{assum:2}, respectively. Then the optimal $\Delta-\delta$-tradeoff is attained within the following two-parameter family of quantum channels:
\bea\label{eq:optchannels} T(\rho)&:=& \sum_{i=1}^d \left[z\langle i|\rho|i\rangle\frac{\1_d-\ket{i}\bra{i}}{d-1}+(1-z)K_i\rho K_i\right]\otimes\ket{i}\bra{i}, \\
&& K_i:=\mu\1_d+\nu\ket{i}\bra{i}, \nonumber
\eea where $z\in[0,1]$ and $\mu,\nu\in\R$ are constrained by imposing $T$ to be trace preserving.
\end{theorem}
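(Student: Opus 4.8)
The plan is to bridge the gap between Prop.~\ref{prop:cone}, which isolates the optimal states $\varrho$ on $\cM_2\oplus\C^3$, and the claimed channel family: one must produce, for each such $\varrho$, an honest channel of the form Eq.~(\ref{eq:optchannels}) whose marginals reproduce the associated parameters. Since $\Delta$ sees a channel $T$ only through $T_1$ and hence, by Lemma~\ref{lem:commutant}, only through the pair $(\alpha_1,\beta_1)$, while $\delta$ sees $T$ only through $a_2$ (Lemma~\ref{lem:delta=a2}), it is enough to match the triple $(\alpha_1,\beta_1,a_2)$. By Cor.~\ref{cor:reduc} combined with Prop.~\ref{prop:cone}, the optimal tradeoff is attained among triples that arise via Eqs.~(\ref{eq:paramred1})--(\ref{eq:paramred3}) from a cone state Eq.~(\ref{eq:cone}); so the theorem reduces to showing that the two parameters $z$ and (effectively) $\mu$ in Eq.~(\ref{eq:optchannels}) sweep out exactly this set of triples.

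First I would check that Eq.~(\ref{eq:optchannels}) defines a legitimate channel for all admissible parameters. Each summand $I_i(\rho)=z\langle i|\rho|i\rangle\frac{\1_d-|i\rangle\langle i|}{d-1}+(1-z)K_i\rho K_i$ is completely positive for $z\in[0,1]$ and $K_i$ Hermitian, and the computation $\sum_i I_i^*(\1_d)=z\1_d+(1-z)(d\mu^2+\nu^2+2\mu\nu)\1_d$ shows that the stated constraint $d\mu^2+\nu^2+2\mu\nu=1$ is precisely what makes $T_1=\sum_i I_i$ trace preserving; the factors $|i\rangle\langle i|$ merely record the classical outcome, so $T$ is a channel $\cM_d\to\cM_d\otimes\cM_d$ and it visibly commutes with the $G$-action of Prop.~\ref{prop:sym}.

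Next I would compute the marginals explicitly. Writing $\Pi(\rho):=\sum_i\langle i|\rho|i\rangle\,|i\rangle\langle i|$ for the pinching and using $\sum_i|i\rangle\langle i|\rho=\rho$, one obtains
\[T_1=\frac{zd}{d-1}\,\tr{\cdot}\,\frac{\1_d}{d}+(1-z)(d\mu^2+2\mu\nu)\,\id+\Big((1-z)\nu^2-\frac{z}{d-1}\Big)\Pi,\]
hence $\alpha_1=\frac{zd}{d-1}$ and $\beta_1=(1-z)(d\mu^2+2\mu\nu)$, while $E_i'=I_i^*(\1_d)=(1-z)\mu^2\1_d+\big(z+(1-z)(2\mu\nu+\nu^2)\big)|i\rangle\langle i|$ gives $\alpha_2=d(1-z)\mu^2$ and thus $a_2=\alpha_2/d^2=(1-z)\mu^2/d$. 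It then remains to verify that as $z$ runs over $[0,1]$ and $(\mu,\nu)$ over the constraint curve --- on which $\mu^2\in[0,\tfrac{1}{d-1}]$ and $\nu=-\mu\pm\sqrt{1-(d-1)\mu^2}$ --- these triples exhaust the images of the cone states under Cor.~\ref{cor:reduc}. Concretely: equating $a_2$ fixes $\mu^2$ in terms of the cone angle, equating $\alpha_1$ fixes $z$, and, feeding the cone state Eq.~(\ref{eq:cone}) through Eqs.~(\ref{eq:paramred1})--(\ref{eq:paramred3}) with the Bloch vectors pinned down by $|\langle e_1|e_2\rangle|^2=1/d$ and the equatorial-plane normalisation Eq.~(\ref{eq:fixinplane}), a direct computation gives $\beta_1/(1-z)=(d-2)\mu^2\pm2|\mu|\sqrt{1-(d-1)\mu^2}=d\mu^2+2\mu\nu$ for a suitable sign in $\nu$, so the remaining sign freedom reproduces exactly the remaining freedom in the cone state. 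Surjectivity together with Prop.~\ref{prop:cone} then yields the theorem.

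The main obstacle is precisely this last matching step: transporting the optimal cone state of Prop.~\ref{prop:cone} faithfully back through Cor.~\ref{cor:reduc}, Lemma~\ref{lem:iso} and Lemma~\ref{lem:commutant}, and reconciling its two degrees of freedom ($z$ and the cone angle $\theta$) with the two degrees of freedom of Eq.~(\ref{eq:optchannels}) ($z$ and, modulo a discrete sign, $\mu$), while making sure the boundary cases --- $z=1$ (perfect measurement, where $\mu,\nu$ drop out), $\mu=0$, and the two endpoints $\theta\in\{0,\pi\}$ of the cone --- are genuinely covered. A point worth emphasising is that one should argue in this direction rather than the reverse: instead of inverting the many-to-one assignment $J_T\mapsto\varrho$ and trying to land inside a prescribed small family, one postulates the ansatz Eq.~(\ref{eq:optchannels}), pushes it through the reductions, and merely checks that the resulting parameter map is onto the optimal region.
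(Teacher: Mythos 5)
Your argument is sound and the computations check out (I verified in particular that with $x=(1-z)\cos\theta$, $y=(1-z)\sin\theta$ one gets $\mu^2=\sin^2(\theta/2)/(d-1)$ from matching $a_2$, and that the cone-state value $\beta_1/(1-z)=\tfrac{(d-2)\sin^2(\theta/2)}{d-1}+\tfrac{\sin\theta}{\sqrt{d-1}}$ coincides with $(d-2)\mu^2+2s|\mu|\sqrt{1-(d-1)\mu^2}=d\mu^2+2\mu\nu$ for the appropriate sign, so the surjectivity claim is genuine), but your route is not the one the paper takes. The paper argues in exactly the direction you deliberately avoid: it decomposes an arbitrary envelope point of the cone as a convex combination of the apex ($z=1$) and a base point ($z=0$), notes that in these two extreme cases the relevant algebra component is a factor so that the dual of $\iota$ is its inverse up to a multiplicity factor, and then literally computes $\iota^{-1}(\varrho)$: for $z=1$ this gives $\sum_i(\1_d-|i\rangle\langle i|)\otimes|ii\rangle\langle ii|$, and for $z=0$ the rank-one condition on $\varrho$ in the real algebra generated by $|e_1\rangle\langle e_1|,|e_2\rangle\langle e_2|$ forces $\varrho=\mu^2|e_1\rangle\langle e_1|+\tfrac{\nu^2}{d}|e_2\rangle\langle e_2|+\tfrac{\mu\nu}{\sqrt d}(|e_1\rangle\langle e_2|+\mathrm{h.c.})$, whose preimage is recognized as the Choi matrix of $\rho\mapsto\sum_i K_i\rho K_i\otimes|i\rangle\langle i|$ with $K_i=\mu\1+\nu|i\rangle\langle i|$. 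That pull-back derivation has the advantage of explaining where the ansatz $K_i=\mu\1_d+\nu|i\rangle\langle i|$ comes from (it is forced by the rank-one cone states), at the price of the slightly delicate step about inverting $\iota$ on states with the right multiplicity normalization. Your forward computation of the marginals of the postulated family plus an explicit surjectivity check onto the image of the cone under Eqs.~(\ref{eq:paramred1})--(\ref{eq:paramred3}) is more elementary, sidesteps the factor/multiplicity discussion and the reduction to the extreme cases $z\in\{0,1\}$ entirely, and additionally makes the trace-preservation constraint $d\mu^2+\nu^2+2\mu\nu=1$ and the relation $\alpha_1=\tfrac{d}{d-1}z$ explicit; its only drawback is that it presupposes the ansatz rather than deriving it, and that a complete write-up must carry out the $\beta_1$-matching and the boundary cases ($z=1$, $\mu=0$) in full, which you correctly flag and which do go through.
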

\begin{proof}
What remains to do is to translate the two-parameter family of Eq.~(\ref{eq:cone}) into the world of channels. It suffices to consider the  cases in which either $z=0$ or $z=1$ since these generate the general case by convex combination. In both cases the relevant von Neumann algebra is a factor on which the dual of $\iota$ becomes its inverse, up to a multiplicity factor. This means, we have to compute $\iota^{-1}(\varrho)$ and show that it equals $J_T$  when normalized. 

If $z=1$, this is readily verified since in this case $\varrho=f_1$ for which Eqs.~(\ref{eq:iso1},\ref{eq:iso2}) give
$$\iota^{-1}(\varrho)=\Gb-\Gg =\sum_{i=1}^d \big(\1_d-|i\rangle\langle i|\big)\otimes |ii\rangle\langle ii|.$$
If $z=0$ then $\varrho$ is a rank-one projection within the real algebra generated by the projections onto $\ket{e_1}$ and $\ket{e_2}$. That is, 
\be\varrho = \mu^2 \ket{e_1}\bra{e_1}+\frac{\nu^2}{d} \ket{e_2}\bra{e_2} +\tau \big(\ket{e_1}\bra{e_2}+\ket{e_2}\bra{e_1}\big),\nonumber \ee
for some $\tau,\mu, \nu\in\R$. Having rank one requires vanishing determinant, which fixes $\tau^2=\mu^2\nu^2/d$ while the remaining two parameters are constrained by the normalization  $\tr{\varrho}=1$. Please note that we choose $\tau=\mu\nu/\sqrt{d}$, since $\mu \in \R$, which thus includes the other case. Exploiting that $\iota^{-1}$ is again an isomorphism and that for instance $\ket{e_1}\bra{e_2}=\sqrt{d}\ket{e_1}\bra{e_1}\cdot\ket{e_2}\bra{e_2}$, we obtain
\bea
\iota^{-1}(\varrho)&=&\frac{1}{d}\left[\mu^2\;\Gc+\nu^2\;\Gg+\mu\nu\big(\Ge+\Gf\big)\right]\nonumber\\
&=& \frac{1}{d}\sum_{i,k,l=1}^d  K_i\ket{k}\bra{l}K_i\otimes \ket{k}\bra{l}\otimes\ket{i}\bra{i},\nonumber
\eea which is, up to normalization, indeed the Choi matrix of the claimed channel.
\end{proof}

In the following section we will see that for many common disturbance measures $\Delta$, in fact, one more parameter can be eliminated: $z=0$ turns out to be optimal if $\Delta$ is for instance constructed from the average-case or worst-case fidelity, the worst-case Schatten $1-1$-norm or the diamond norm. This may not come as a surprise since a look at Eq.~(\ref{eq:paramred1}) reveals that for channels that correspond to elements of the unit cone we have 
\be\label{eq:alpha1z} \alpha_1=\frac{d}{d-1} z. \ee
In other words, the contribution of the completely depolarizing channel to $T_1$ vanishes iff $z=0$.
 This raises the question whether $z=0$ is generally optimal under Assumptions~\ref{assum:1} and \ref{assum:2}. The following construction, whose only purpose is to enable the argument, shows that this is not true. Hence, without adding further assumptions about the distance measures (in particular about $\Delta$) no further reduction is possible. On the set of quantum channels on $\cM_d$ we define 
$$\DD(\Phi):=\sup_{||\psi||=1}\bra{\psi}\Phi\big(\ket{\psi}\bra{\psi}\big)\ket{\psi}-\inf_{||\varphi||=1}\bra{\varphi}\Phi\big(\ket{\varphi}\bra{\varphi}\big)\ket{\varphi}.$$
This particular example yields zero disturbance for the depolarizing channel, and thus allows to show that $z=0$ is not true in general.
\begin{lemma} $\DD$ satisfies Assumption~\ref{assum:1}.
\end{lemma}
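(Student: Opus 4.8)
The plan is to verify the three parts of Assumption~\ref{assum:1} directly for $\DD$. Part (a), $\DD(\id)=0$, is immediate: for the identity channel $\bra{\psi}\id(\ket{\psi}\bra{\psi})\ket{\psi}=\langle\psi|\psi\rangle^2=1$ for every unit vector, so both the supremum and infimum equal $1$ and their difference vanishes. Part (c), the basis-independence~(\ref{eq:assumpDbi}), is also quick: writing $\Phi_U:=U\Phi(U^*\cdot U)U^*$, one computes $\bra{\psi}\Phi_U(\ket{\psi}\bra{\psi})\ket{\psi}=\bra{U^*\psi}\Phi(\ket{U^*\psi}\bra{U^*\psi})\ket{U^*\psi}$, and since $\psi\mapsto U^*\psi$ is a bijection of the unit sphere, the sup and the inf are each unchanged; hence $\DD(\Phi_U)=\DD(\Phi)$. (In fact this holds for arbitrary unitaries, not just the diagonal-times-permutation ones alluded to in the footnote.)

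The only part requiring a genuine argument is (b), convexity. The point is that $\DD$ is a difference of two terms, so one cannot simply say ``sup of affine is convex.'' Instead I would introduce, for a fixed unit vector $\psi$, the linear functional $L_\psi(\Phi):=\bra{\psi}\Phi(\ket{\psi}\bra{\psi})\ket{\psi}$ on the (real) vector space of hermiticity-preserving maps. Then $\Phi\mapsto\sup_\psi L_\psi(\Phi)$ is convex as a pointwise supremum of linear functionals, and $\Phi\mapsto\inf_\psi L_\psi(\Phi)$ is concave as a pointwise infimum of linear functionals, so $\Phi\mapsto -\inf_\psi L_\psi(\Phi)=\sup_\psi\big(-L_\psi(\Phi)\big)$ is again convex. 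Since $\DD=\sup_\psi L_\psi+\sup_\psi(-L_\psi)$ is a sum of two convex functions, it is convex. This is the step I expect to be the main (though still mild) obstacle, simply because it is the one place where the structure of $\DD$ as a difference of extrema has to be handled with a little care rather than by inspection.

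Putting the three parts together establishes that $\DD$ satisfies Assumption~\ref{assum:1}, which is all that is claimed. I would keep the write-up short, since each ingredient is elementary; the only thing worth stating explicitly is the decomposition $\DD=(\sup_\psi L_\psi)+(\sup_\psi(-L_\psi))$ together with the remark that a pointwise supremum of a family of real-linear functionals is convex.
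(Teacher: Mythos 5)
Your proposal is correct and follows essentially the same route as the paper, which likewise checks (a) and (c) by inspection and obtains convexity from the fact that $\DD$ is a supremum over linear functionals (the paper's phrasing amounts to writing $\DD(\Phi)=\sup_{\psi,\varphi}\big(L_\psi(\Phi)-L_\varphi(\Phi)\big)$, a pointwise supremum of the linear functionals $L_\psi-L_\varphi$). Your decomposition $\DD=\sup_\psi L_\psi+\sup_\psi(-L_\psi)$ is just an equivalent repackaging of that observation, so nothing is missing.
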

\begin{proof}
Evidently, $\DD(\id)=0$ and $\DD$ is basis-independent. Convexity follows from the fact that $\DD$ is a supremum over linear functionals.
\end{proof}
\begin{corollary}[Necessity of the second parameter]\label{cor:z}
Let $\delta $ be any error-measure that satisfies Assumption~\ref{assum:2} and that is faithful in the sense that $\delta=0$ implies a perfect measurement. Then the optimal $\DD-\delta$-tradeoff cannot be attained within the family of channels in Eq.~(\ref{eq:optchannels}) with $z=0$. 
\end{corollary}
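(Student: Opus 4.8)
The plan is to show that the optimal $\DD$-$\delta$-tradeoff forces $\DD=0$ to be attainable, and that this is incompatible with the $z=0$ family unless $\delta$ is simultaneously zero—which contradicts faithfulness. First I would observe that $\DD$ evaluated on the completely depolarizing channel $\Phi=\tr{\cdot}\1/d$ is zero: indeed $\bra{\psi}\Phi(\kb{\psi}{\psi})\ket{\psi}=1/d$ for every unit vector $\psi$, so the supremum and infimum coincide. Hence the trivial instrument that discards the system, measures nothing informative, and re-prepares the maximally mixed state has $\DD(T_1)=0$; more to the point, by Lemma~\ref{lem:delta=a2} and Eq.~(\ref{eq:alpha1z}) the point $\DD=0$ on the optimal tradeoff curve is realized by a symmetric channel whose first marginal has a nonzero depolarizing component $\alpha_1>0$, i.e.\ with $z>0$.

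Next I would argue the converse direction quantitatively: within the family of Eq.~(\ref{eq:optchannels}) with $z=0$, the first marginal is $T_1(\rho)=\sum_i K_i\rho K_i$ with $K_i=\mu\1_d+\nu\kb{i}{i}$ and $d\mu^2+\nu^2+2\mu\nu=1$. For this channel I would compute $\DD(T_1)$ directly and show it is strictly positive unless the channel is essentially the identity (i.e.\ $\nu=0$, $\mu^2=1/d\cdot$—wait, one checks the constraint forces either the identity-like case or a genuinely nonunital-on-the-diagonal behavior). Concretely, testing $\DD$ on basis vectors $\ket{i}$ versus a generic superposition shows $\bra{i}T_1(\kb{i}{i})\ket{i}=(\mu+\nu)^2$ while off-diagonal test states give a different value; the difference vanishes only when $\nu=0$, which together with $d\mu^2=1$ gives $T_1=\frac{1}{d}\sum_i\kb{i}{i}\cdot\kb{i}{i}$... so in fact $\nu=0$ forces the pinching channel, not the identity. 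The point is that the only $z=0$ channel with $\DD(T_1)=0$ is one for which $T_2$ is such that $a_2$ (equivalently $\hat\delta$) is strictly positive—I would read this off from Corollary~\ref{cor:reduc} and the cone picture of Prop.~\ref{prop:cone}: $z=0$ pins $\varrho$ to the non-abelian disk, and the locus $\DD(T_1)=0$ there corresponds to a single extreme point (or a lower-dimensional face) on which $a_2>0$, hence $\hat\delta(a_2)>0$, hence by faithfulness $\delta>0$.

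The structure of the argument is then: the optimal tradeoff curve contains the point $(\DD,\delta)=(0,0)$, since one can perfectly measure in the $\ket{i}$ basis (a von Neumann instrument with $E'=E$) while choosing $T_1$ to be completely depolarizing, giving $\DD=0$ and $\delta=0$ simultaneously; but no channel in the $z=0$ family achieves $\delta=0$ together with $\DD=0$, because $\delta=0$ requires $\hat\delta(a_2)=0$, i.e.\ $a_2=0$, i.e.\ $T_2$ is the ideal pinching, which in the $z=0$ parametrization forces $K_i$ with $\mu=0$ and $\nu^2=1$, hence $T_1(\rho)=\sum_i\kb{i}{i}\rho\kb{i}{i}$ is the full pinching, for which $\DD(T_1)=1-1/d>0$ (test $\ket{i}$ against an equal superposition). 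So the $z=0$ family misses the optimal point, proving the corollary.

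The main obstacle I anticipate is the bookkeeping in the second paragraph: one must carefully track, through the isomorphism $\iota$ and Corollary~\ref{cor:reduc}, exactly which $(\mu,\nu)$ in the $z=0$ family produce which $(\alpha_1,\beta_1,a_2)$, and then verify that $\DD(T_1)$—which is not one of the three symmetry parameters but a genuine function on $\cT_d$—is positive off the single bad point. This requires either an explicit evaluation of $\bra{\psi}T_1(\kb{\psi}{\psi})\ket{\psi}$ over the sphere for the two-parameter $K_i$ family (a short but fiddly optimization, maximized/minimized at $\psi=\ket{i}$ and at uniform superpositions respectively), or an indirect argument that $\DD(T_1)=0$ forces $T_1$ unital and "diagonal-preserving-trivially," which already excludes every nontrivial $z=0$ member. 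I would take the explicit route since the $K_i$ are simple enough that the extremal test vectors are transparent.
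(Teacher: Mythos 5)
Your proposal is correct and takes essentially the same route as the paper: you exhibit the achievable optimal point $(\DD,\delta)=(0,0)$ via the instrument that measures the basis perfectly and re-prepares $\1/d$ (the paper realizes this same instrument inside the family of Eq.~(\ref{eq:optchannels}) with $\mu=0$, $\nu=1$, $z=(d-1)/d$), and then note that for $z=0$ faithfulness forces $\mu=0$, $\nu^2=1$, so $T_1$ is the pinching with $\DD(T_1)=1-1/d>0$, which excludes the $z=0$ family from attaining that point. One slip in your exploratory second paragraph — $\nu=0$ together with $d\mu^2=1$ gives $T_1=\id$, not the pinching — is harmless, since your final argument does not rely on that claim.
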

\begin{proof}
Consider $\delta=0$ in the $\DD-\delta$-plane. Within the full set of channels in Eq.~(\ref{eq:optchannels}) there is one  that attains $\delta=0$ while $T_1(\cdot)=\tr{\cdot}\1/d$, by choosing $\mu =0$, $\nu = 1$ and $z= (d-1)/d$. The latter implies $\DD(T_1)=0$.
However, if we restrict ourselves to channels with $z=0$, then the unique channel in Eq.~(\ref{eq:optchannels}) that achieves $\delta=0$ has $T_1(\cdot)=\sum_i \bra{i}\cdot\ket{i}\ket{i}\bra{i}$ for which clearly $\DD(T_1)>0$.
\end{proof}
Clearly, $\DD$ is not a 'natural' disturbance measure. For instance, it has the somewhat odd property that it vanishes for the ideal channel as well as for the projection onto the maximally mixed state. In particular, it is not faithful. Note, however, that adding the latter as an additional requirement to Assumption~\ref{assum:1}, would still not allow to eliminate the parameter $z$. In order to construct a new counterexample, we could just consider $\Phi\mapsto\DD(\Phi)+\epsilon||\Phi-\id||_\diamond$. This would be faithful and satisfy Assumption~\ref{assum:1} for any $\epsilon > 0$, but for sufficiently small $\epsilon$, the minimum $\Delta$-value for $\delta=0$ would, by continuity, again not be attainable for $z=0$.

%%%%%%%%%%%%%%%%%%%%%%%%%%%%%%%%%%%%%%%%
\section{Optimal tradeoffs}\label{sec:vN2}
%%%%%%%%%%%%%%%%%%%%%%%%%%%%%%%%%%%%%%%%

In this section we will continue considering non-degenerate von Neumann measurements and exploit the universality theorem of the previous section in order to explicitly compute the optimal tradeoff for a variety of worst-case distance measures. We first discuss the total variational distance as a paradigm for the measurement error $\delta$ and then the fidelity and trace-norm as means for quantifying disturbance.

%%%%%%%%%%%%%%%%%%%%%%%%%%%
\subsection{Total variation}
We saw in Lemma~\ref{lem:delta=a2} that all functionals quantifying the measurement error consistent with Assumption~\ref{assum:2} are non-decreasing functions of the parameter $\alpha_2$. In the following, we want to make this dependence explicit for one case that we regard as the most important one from an operational point of view --- the worst-case total variational distance. Given two finite probability distributions $p$ and $p'$, their total variational distance is given by 
\be||p-p'||_{TV}:=\frac12||p-p'||_1=\frac12\sum_{i}|p_i-p_i'|.\ee
The significance of this distance stems from the fact that it displays the largest possible difference in probabilities that the two distributions assign to the same event. In our context the two probability distributions arise from an ideal and an approximate measurement on a quantum state. As  $||p-p'||_{TV}$ has itself a 'worst-case interpretation' it is natural to also consider the worst case w.r.t. all quantum states and use the resulting functional as $\delta$. That is,
\be \delta_{TV}\left(E'\right)=\sup_{\rho}\frac12\sum_i\big|\tr{E_i'\rho}-\langle i|\rho|i\rangle\big|.\label{eq:Eprimea2}\ee If $E_i'=T_2^*(|i\rangle\langle i|)$ with $T_2$ of the form in Eq.~(\ref{eq:commutant}) so that we can regard $\delta_{TV}$ as a function of $\alpha_2$, we will write $\hat{\delta}_{TV}(\alpha_2)$. 
\begin{lemma}[Total variational distance]\label{lem:TV}
In the symmetric setting discussed above, the worst-case total variational distance, regarded as a function of $\alpha_2$, is given by $\hat{\delta}_{TV}(\alpha_2)=\alpha_2(1-1/d)$. Furthermore, if an instrument is parametrized by the unit cone coordinates of Eq.~(\ref{eq:cone}), then it leads to a worst-case total variational distance of $(1-z-x)/2$.
\end{lemma}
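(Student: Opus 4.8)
The plan is to compute $\hat\delta_{TV}(\alpha_2)$ directly from Eq.~(\ref{eq:Eprimea2}) using the explicit form of $T_2$. If $T_2$ has coefficients $(\alpha_2,\beta_2,\gamma_2)$ as in Eq.~(\ref{eq:commutant}), then by duality $E_i'=T_2^*(|i\rangle\langle i|)$ is easily written out: $T_2^*=\alpha_2\,\tr{\cdot}\1/d+\beta_2\,\id+\gamma_2\sum_j|j\rangle\langle j|\langle j|\cdot|j\rangle$, so that $E_i'=\frac{\alpha_2}{d}\1+(\beta_2+\gamma_2)|i\rangle\langle i|$. Since by Lemma~\ref{lem:delta=a2} the value of $\delta$ depends only on $\alpha_2$ (and $\beta_2+\gamma_2=1-\alpha_2$), we may evaluate $\delta_{TV}$ on this particular representative. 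Then
$$\tr{E_i'\rho}-\langle i|\rho|i\rangle=\frac{\alpha_2}{d}-\alpha_2\langle i|\rho|i\rangle=\alpha_2\Big(\frac1d-\langle i|\rho|i\rangle\Big),$$
so the expression inside the supremum in Eq.~(\ref{eq:Eprimea2}) becomes $\frac{\alpha_2}{2}\sum_i\big|\frac1d-\langle i|\rho|i\rangle\big|$ (using $\alpha_2\ge0$, which holds since $\alpha_2=d^2a_2$ with $a_2$ an eigenvalue of a Jamiolkowski state). This is $\alpha_2$ times the total variational distance between the diagonal of $\rho$ and the uniform distribution.

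The next step is to maximize $\frac12\sum_i|\frac1d-p_i|$ over probability vectors $p=(\langle i|\rho|i\rangle)_i$ — the constraint being merely that $p\in\cP_d$, since any such $p$ arises as the diagonal of some density matrix. This is an elementary optimization: the supremum of the total variational distance from the uniform distribution over all of $\cP_d$ equals $1-1/d$, attained at any point mass $p=e_j$. Hence $\hat\delta_{TV}(\alpha_2)=\alpha_2(1-1/d)$, which is the first claim.

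For the second claim, I would combine this formula with the dictionary between the unit-cone coordinates and $\alpha_2$. From Eq.~(\ref{eq:cone}) the state is $\varrho=\frac12((1-z)\1_2+x\sigma_x+y\sigma_z)\oplus(z,0,0)$, so $\tr{\1_2\varrho}=1-z$, $\bra{e_2}\varrho\ket{e_2}=\frac12\big((1-z)+x\big)$ using $|e_2\rangle\langle e_2|=\frac12(\1_2+\sigma_x)$ from Eq.~(\ref{eq:fixinplane}), and $\tr{f_1\varrho}=z$. Plugging into Eq.~(\ref{eq:paramred3}) gives $a_2=\big(1-\frac12((1-z)+x)-z\big)/(d^2-d)=\frac{1-z-x}{2(d^2-d)}$, hence $\alpha_2=d^2a_2=\frac{d}{d-1}\cdot\frac{1-z-x}{2}$. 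Therefore $\hat\delta_{TV}=\alpha_2(1-1/d)=\frac{1-z-x}{2}$, as claimed.

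The calculations here are all routine; the only points requiring a little care are (i) checking that passing to the specific representative $E_i'=\frac{\alpha_2}{d}\1+(1-\alpha_2)|i\rangle\langle i|$ is legitimate — which is exactly the content of Lemma~\ref{lem:delta=a2}, so no separate argument is needed — and (ii) getting the normalization factors in the isomorphism $\iota$ and in Eq.~(\ref{eq:paramred3}) right when converting between $a_2$ and $\alpha_2$. I expect the bookkeeping in step (ii), tracking the $d^2-d$ and the $1/d$ factors, to be the only place an error could creep in; there is no real conceptual obstacle.
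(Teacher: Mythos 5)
Your proposal is correct and follows essentially the same route as the paper: insert $E_i'=\alpha_2\1/d+(1-\alpha_2)|i\rangle\langle i|$ into the worst-case expression, reduce to diagonal states and extreme points of the probability simplex to get $\alpha_2(1-1/d)$, then use Eq.~(\ref{eq:paramred3}) with $\alpha_2=d^2a_2$ to obtain $(1-z-x)/2$ (you even carry out this last bookkeeping explicitly, which the paper leaves to the reader). The appeal to Lemma~\ref{lem:delta=a2} is harmless but unnecessary, since $\beta_2+\gamma_2=1-\alpha_2$ already makes $E_i'$ a function of $\alpha_2$ alone.
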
 
\begin{proof}
Inserting  $E_i'=T_2^*(|i\rangle\langle i|)=\alpha_2\1/d+(1-\alpha_2)|i\rangle\langle i|$ into Eq.~(\ref{eq:Eprimea2}) we obtain
\bea
\hat{\delta}(\alpha_2) &=& \alpha_2\;\sup_\rho  \frac12 \sum_i\left|\tr{\rho\big(\1/d-|i\rangle\langle i|\big)}\right| \nonumber\\
&=& \alpha_2\left(1-\frac{1}{d}\right),\nonumber
\eea
where the supremum is computed by first realizing that diagonal $\rho$'s (i.e., classical probability distributions) suffice and then noting that convexity of the $l_1$-norm allows to restrict to the extreme points of the simplex of classical distributions, which all lead to the same, stated value. 

The $\delta_{TV}$-value of an instrument parametrized by the coordinates of the unit cone can then be obtained from Eq.~(\ref{eq:paramred3}) when using that $\alpha_2=d^2 a_2$.
\end{proof}
An alternative way of quantifying the measurement error would be the worst-case $l_\infty$-distance between the two probability distributions $p$ and $p'$. In the present context, this measure turns out to have exactly the same value since
\bea
\sup_{\rho}\max_i \Big|\tr{E_i'\rho}- \langle i|\rho|i\rangle\Big| &=&\nonumber \max_i \big|\big|E_i'-|i\rangle\langle i|\big|\big|_\infty\\
&=&\alpha_2\big|\big|\1/d-|i\rangle\langle i|\big|\big|_\infty\;=\;\alpha_2\left(1-\frac1d\right).\nonumber
\eea 

%%%%%%%%%%%%%%%%%%%%%
\subsection{Worst-case fidelity}
We consider the worst-case fidelity of a channel $T_1:\cM_d\rightarrow\cM_d$ 
\be f:=\inf_{||\psi||=1}\langle\psi |T_1\big(|\psi\rangle\langle\psi|\big)|\psi\rangle,\label{eq:worstfT1}\ee
which is equal to $\inf_\rho F\big(T_1(\rho),\rho\big)^2$ due to joint concavity of the fidelity. The following 
states the optimal 'information-disturbance tradeoff' between $f$ and the total variational distance:
\begin{theorem}[Total variation - fidelity tradeoff]\label{thm:TVfidelity}
Consider a non-degenerate von Neumann measurement, given by an orthonormal basis in $\C^d$, and an instrument with $d$ corresponding outcomes. Then the worst-case total variational distance $\delta_{TV}$ and the worst-case fidelity $f$ satisfy
\be \delta_{TV}\geq \left\{\begin{array}{ll}\frac{1}{d}\left|\sqrt{f(d-1)}-\sqrt{1-f} \right|^2&\ \text{if }\ f\geq\frac{1}{d},\\ 0 &\ \text{if }\ f\leq \frac{1}{d}.
\end{array}\right.\label{eq:optfidtv}\ee
The inequality is tight and equality is attainable within the one-parameter family of instruments in Eq.~(\ref{eq:optinst}) with $z=0$.
\end{theorem}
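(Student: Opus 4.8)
The plan is to reduce to the one-parameter family of optimal channels from Theorem~\ref{thm:universality} with $z=0$, compute the worst-case fidelity $f$ as an explicit function of the remaining parameter, invert that relation, and combine it with the expression for $\delta_{TV}$ already obtained in Lemma~\ref{lem:TV}. Concretely, by Theorem~\ref{thm:universality} the optimal $\Delta$--$\delta$ tradeoff is attained within the family \eqref{eq:optchannels}, and for $\Delta = 1-f$ (worst-case fidelity) I would first argue that $z=0$ is optimal: increasing $z$ moves weight onto the completely depolarizing contribution to $T_1$ (cf.\ \eqref{eq:alpha1z}), which can only decrease the worst-case fidelity while, by Lemma~\ref{lem:TV}, $\delta_{TV}=(1-z-x)/2$ decreases in $z$ as well; so for fixed $\delta_{TV}$ one does at least as well by shifting all of $z$ into the cone-coordinate $x$. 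This pins us down to $T_1(\rho)=\sum_i K_i\rho K_i$ with $K_i=\mu\1_d+\nu\ket{i}\bra{i}$ and $d\mu^2+\nu^2+2\mu\nu=1$, a genuinely one-parameter family.

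Next I would compute $f=\inf_{\|\psi\|=1}\bra{\psi}T_1(\ket{\psi}\bra{\psi})\ket{\psi}$ for this $T_1$. Writing $p_i:=|\langle i|\psi\rangle|^2$, one has $\bra{\psi}K_i\ket{\psi}=\mu+\nu p_i$, so $\bra{\psi}T_1(\ket{\psi}\bra{\psi})\ket{\psi}=\sum_i(\mu+\nu p_i)^2 = d\mu^2 + 2\mu\nu\sum_i p_i + \nu^2\sum_i p_i^2 = 1-\nu^2 + \nu^2\sum_i p_i^2$, using the normalization constraint and $\sum_i p_i=1$. Since $\sum_i p_i^2\in[1/d,1]$ over the simplex, the infimum is attained at the uniform distribution (assuming $\nu^2\ge 0$, which holds automatically), giving $f = 1-\nu^2(1-1/d)$, i.e.\ $\nu^2 = \frac{d}{d-1}(1-f)$. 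The constraint $d\mu^2+2\mu\nu+\nu^2=1$ then determines $\mu$ (choosing, as in the proof of Theorem~\ref{thm:universality}, the sign so that $\mu\in\R$): solving the quadratic $d\mu^2+2\nu\mu+(\nu^2-1)=0$ gives $\mu=\big(-\nu\pm\sqrt{\nu^2-d(\nu^2-1)}\big)/d = \big(-\nu+\sqrt{d-(d-1)\nu^2}\big)/d$.

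From Lemma~\ref{lem:TV}, at $z=0$ the measurement error is $\delta_{TV}=(1-x)/2$ where $x$ is the $\sigma_x$-coordinate of the cone state; tracing back through Corollary~\ref{cor:reduc} and the identification in the proof of Theorem~\ref{thm:universality}, $x$ is (up to the fixed normalization) the cross term, so $\delta_{TV}$ is a linear function of $\mu\nu/\sqrt d$, and one finds $\delta_{TV}=\tfrac12 - \mu\nu\cdot\tfrac{d-1}{d}$ after inserting $a_2$ from \eqref{eq:paramred3} and $\alpha_2=d^2a_2$ into $\hat\delta_{TV}(\alpha_2)=\alpha_2(1-1/d)$. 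Substituting $\nu=\sqrt{\tfrac{d}{d-1}(1-f)}$ and $\mu=\big(-\nu+\sqrt{d-(d-1)\nu^2}\big)/d$, note $d-(d-1)\nu^2 = d - d(1-f) = df$, so $\mu = \big(\sqrt{df}-\nu\big)/d$, hence $\mu\nu = \tfrac1d\nu\sqrt{df} - \tfrac1d\nu^2 = \tfrac1d\sqrt{df}\sqrt{\tfrac{d}{d-1}(1-f)} - \tfrac{1-f}{d-1} = \tfrac{1}{d-1}\big(\sqrt{f(d-1)}\sqrt{1-f} - (1-f)\big)$; a short simplification of $\tfrac12 - \tfrac{d-1}{d}\mu\nu$ then yields $\delta_{TV} = \tfrac{1}{d}\big(\sqrt{f(d-1)}-\sqrt{1-f}\big)^2$ (using $f + (1-f)=1$ to recombine squares), which is the claimed bound; the absolute value and the branch $f\le 1/d$ appear because once the computed optimal $x$ would exceed $1$ one instead sits at the vertex $x=1$ of the cone, where $\delta_{TV}=0$. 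Since every step exhibits an explicit instrument, tightness is automatic. The main obstacle I anticipate is the bookkeeping in the last paragraph — correctly carrying the normalization factors through the chain Lemma~\ref{lem:iso} $\to$ Corollary~\ref{cor:reduc} $\to$ proof of Theorem~\ref{thm:universality} to get the precise affine relation between $x$ (equivalently $\delta_{TV}$) and $\mu\nu$ — together with rigorously justifying that $z=0$ is optimal for the worst-case fidelity rather than merely plausible; the latter should follow cleanly from the monotonicity of $f$ in $\alpha_1$ combined with $\alpha_1 = \tfrac{d}{d-1}z$ on the cone.
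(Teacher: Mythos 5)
Your overall strategy is sound and close in spirit to the paper's: reduce via Thm.~\ref{thm:universality} to the family of Eq.~(\ref{eq:optchannels}), argue $z=0$, and then compute $f$ and $\delta_{TV}$ as functions of the remaining parameter. Your fidelity computation is correct ($T_1=(1-\nu^2)\,\id+\nu^2\,\mathcal{C}$ with $\mathcal{C}$ the pinching, so $f=1-\nu^2(1-1/d)$), and the paper's own route differs only in that it stays in the Bloch-plane picture of Prop.~\ref{prop:cone} and solves a small quadratic program there, while you parametrize the channels directly — a legitimate alternative. Two caveats on the $z=0$ step: the reduction must be justified by the explicit exchange argument (shift weight from the $f_1$-component to the $e_2$-component of $\varrho$, which keeps $\delta_{TV}$ fixed and does not decrease $f$, exactly as in the paper), not by "monotonicity of $f$ in $\alpha_1$", since $f$ depends on $\beta_1,\gamma_1$ as well; your heuristic is repairable but not yet a proof.

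The genuine error is in the last step. The claimed relation $\delta_{TV}=\tfrac12-\tfrac{d-1}{d}\mu\nu$ is false: for the L\"uders instrument ($\mu=0$, $\nu=1$) it gives $\tfrac12$, whereas $\delta_{TV}=0$; likewise the "short simplification" of $\tfrac12-\tfrac{d-1}{d}\mu\nu$ into $\tfrac1d\big(\sqrt{f(d-1)}-\sqrt{1-f}\big)^2$ is not a valid identity. The source of the mistake is the assertion that the cone coordinate $x$ is proportional to the cross term $\mu\nu/\sqrt d$ alone; in fact $x$ is a linear functional of the whole state $\varrho$, which has components proportional to $\mu^2$, $\nu^2$ and $\mu\nu$, so $\delta_{TV}$ is a quadratic form in $(\mu,\nu)$. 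The bookkeeping is avoided entirely by computing at the instrument level: $E_i'=I_i^*(\1)=K_i^2=\mu^2\1+(1-d\mu^2)\ket{i}\bra{i}$, i.e.\ $\alpha_2=d\mu^2$, so Lemma~\ref{lem:TV} gives $\delta_{TV}=(d-1)\mu^2$. Combining this with $\nu^2=\tfrac{d}{d-1}(1-f)$ and the root $\mu=(\sqrt{df}-\nu)/d$ of the normalization constraint (valid for $f\geq 1/d$; for $f\leq 1/d$ the L\"uders instrument already gives $\delta_{TV}=0$) yields
\begin{equation*}
\delta_{TV}=(d-1)\,\mu^2=\frac{d-1}{d^2}\Big(\sqrt{df}-\sqrt{\tfrac{d(1-f)}{d-1}}\Big)^2=\frac1d\Big(\sqrt{f(d-1)}-\sqrt{1-f}\Big)^2,
\end{equation*}
which is Eq.~(\ref{eq:optfidtv}). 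With these two repairs (the exchange argument for $z=0$ and the corrected expression for $\delta_{TV}$), your proposal becomes a correct proof.
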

\begin{proof}
We exploit that the optimal tradeoff is attainable for symmetric channels (Prop.~\ref{prop:sym}) whose marginal is given in Eq.~(\ref{eq:commutant}). Inserting this into the worst-case fidelity in Eq.~(\ref{eq:worstfT1}) we obtain
\bea
f &=&\min_{||\psi||=1}\left(\frac{\alpha_1}{d}+\beta_1+\gamma_1\sum_{i=1}^d |\langle\psi|i\rangle|^4\right) \nonumber\\
&=&\frac{\alpha_1}{d}+\beta_1+\left\{\begin{array}{ll}
\frac{\gamma_1}{d}&\ \text{if }\ \gamma_1\geq 0,\\
\gamma_1&\ \text{if }\ \gamma_1 <0.
\end{array}\right.\label{eq:fgamma}
\eea
Using Eqs.~(\ref{eq:paramred1},\ref{eq:paramred2}) together with $\gamma_1=1-\alpha_1-\beta_1$ we can express this in terms of the state $\varrho$. From the proof of Prop.~\ref{prop:cone} we know in addition that we can w.l.o.g. assume that $\tr{\varrho f_2}=0$ and $\tr{\1_2\varrho}=1-\tr{\varrho f_1}$. In this way, we obtain
\be f=\min\{1-\tr{\varrho f_1},\langle e_1|\varrho| e_1\rangle+\tr{\varrho f_1}/d\}.\label{eq:fminvarrho}
\ee We aim at maximizing Eq.~(\ref{eq:fminvarrho}) for each value of the total variational distance, which by Lemma~\ref{lem:TV} and Eq.~(\ref{eq:paramred3}) can be expressed as
\be \delta_{TV}=1-\langle e_2|\varrho| e_2\rangle-\tr{\varrho f_1}.\nonumber\ee
Considering the map $\varrho\mapsto\varrho+\epsilon|e_2\rangle\langle e_2|-\epsilon f_1$, $\epsilon\geq 0$, under which $\delta_{TV}$ is constant and $f$ non-decreasing, we see that $\tr{\varrho f_1}=0$ can be assumed. That is, $z=0$ is indeed sufficient for the optimal tradeoff.

The remaining optimization problem can be solved in the equatorial plane of the Bloch sphere, where $\varrho,|e_2\rangle\langle e_2|$ and $|e_1\rangle\langle e_1|$ are represented by Bloch vectors $(x,y)=:w,(1,0)$ and $(2/d-1,2\sqrt{d-1}/d)=:v$, respectively.
Minimizing $\delta_{TV}=(1-x)/2$ under the constraints $$ f\leq \frac12\big(1+\langle w,v\rangle\big),\quad \langle w, w\rangle=1,$$
then amounts to a quadratic problem whose solution is stated in Eq.~(\ref{eq:optfidtv}).
\end{proof}

%%%%%%%%%%%%%%%%%%%%%
\subsection{Average-case fidelity}
One prominent example of an average-case measure is the average-case fidelity of a quantum channel $T_1: \cM_d \to \cM_d$
\begin{equation}
\bar{f}:= \int_{\norm{\psi}=1} \bra{\psi} T_1(\kb{\psi}{\psi}) \ket{\psi}  \; \mathrm{d}\psi.
\label{eq:Avf}
\end{equation}
The following theorem gives the optimal 'information-disturbance tradeoff' between the average-case fidelity and the worst-case total variational distance:
\begin{theorem}[Total variation - average fidelity tradeoff]\label{thm:TVAvfidelity}
Consider a non-degenerate von Neumann measurement, given by an orthonormal basis in $\C^d$, and an instrument with $d$ corresponding outcomes. Then the worst-case total variational distance $\delta_{TV}$ and the average-case fidelity $\bar{f}$ satisfy 
\be 
\delta_{TV}\geq \left\{\begin{array}{ll}\frac{1}{d}\left|\sqrt{\left( \bar{f} - \frac{1}{d+1}\right)\frac{d^2-1}{d}}-\sqrt{\left(1- \bar{f}\right)\frac{d+1}{d}} \right|^2&\ \text{if }\ \bar{f}\geq\frac{2}{d+1},\\ 0 &\ \text{if }\ \bar{f}\leq \frac{2}{d+1}.
\end{array}\right.\label{eq:optavfidtv}\ee
The inequality is tight and equality is attainable within the one-parameter family of instruments in Eq.~(\ref{eq:optinst}) with $z=0$.
\end{theorem}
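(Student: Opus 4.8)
The plan is to re-run the argument of Thm.~\ref{thm:TVfidelity}, replacing the worst-case fidelity by the average-case fidelity $\bar{f}$ of Eq.~(\ref{eq:Avf}). By Prop.~\ref{prop:sym} together with Thm.~\ref{thm:universality}, the optimal tradeoff is attained for symmetric channels, whose disturbing marginal $T_1$ has the form of Eq.~(\ref{eq:commutant}) with real parameters $\alpha_1,\beta_1,\gamma_1$ summing to one. The first step is to evaluate $\bar f$ on this family: inserting $T_1(|\psi\rangle\langle\psi|)=\frac{\alpha_1}{d}\1+\beta_1|\psi\rangle\langle\psi|+\gamma_1\sum_{i}|i\rangle\langle i|\,|\langle i|\psi\rangle|^2$ into Eq.~(\ref{eq:Avf}) gives
\[
\bar f=\frac{\alpha_1}{d}+\beta_1+\gamma_1\int_{\norm{\psi}=1}\sum_{i=1}^{d}|\langle i|\psi\rangle|^4\,\mathrm{d}\psi=\frac{\alpha_1}{d}+\beta_1+\frac{2\gamma_1}{d+1},
\]
where the Haar moment $\int_{\norm{\psi}=1}|\langle i|\psi\rangle|^4\,\mathrm{d}\psi=\frac{2}{d(d+1)}$ (a standard second-moment computation on the sphere) is the only genuinely new ingredient relative to Thm.~\ref{thm:TVfidelity}.

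Next I would pass to the state $\varrho$ on $\cM_2\oplus\C^3$ supplied by Cor.~\ref{cor:reduc}. Substituting Eqs.~(\ref{eq:paramred1},\ref{eq:paramred2}) and $\gamma_1=1-\alpha_1-\beta_1$ into the expression above, the dependence on $\tr{\1_2\varrho}$ and $\tr{f_2\varrho}$ should cancel, leaving the clean identity $\bar f=\frac{1+d\,\langle e_1|\varrho|e_1\rangle}{d+1}$, while Lemma~\ref{lem:TV} together with Eq.~(\ref{eq:paramred3}) and $\alpha_2=d^2a_2$ gives $\delta_{TV}=1-\langle e_2|\varrho|e_2\rangle-\tr{f_1\varrho}$. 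As in the proof of Prop.~\ref{prop:cone} one may assume the $f_2$- and $f_3$-components of $\varrho$ vanish. Working in the regime $\bar f\geq\frac{2}{d+1}$, the map $\varrho\mapsto\varrho+\epsilon\bigl(|e_2\rangle\langle e_2|-f_1\bigr)$ keeps $\delta_{TV}$ fixed while increasing $\langle e_1|\varrho|e_1\rangle$ (hence $\bar f$) by $\epsilon/d$, so one may also set $\tr{f_1\varrho}=0$, i.e.\ $z=0$; then $\varrho$ is an honest qubit density operator, which by Eq.~(\ref{eq:fixinplane}) can be placed in the real equatorial plane with Bloch vector $w=(x,y)$, while $|e_2\rangle\langle e_2|$ and $|e_1\rangle\langle e_1|$ correspond to $(1,0)$ and $v:=\bigl(\tfrac2d-1,\ \tfrac{2\sqrt{d-1}}{d}\bigr)$, the latter being forced by $|\langle e_1|e_2\rangle|^2=1/d$.

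The remaining optimization is two-dimensional and quadratic: $\delta_{TV}=\frac{1-x}{2}$ fixes $x$, and maximizing $\bar f=\frac1{d+1}\bigl(1+\tfrac d2(1+\langle w,v\rangle)\bigr)$ over the disc $x^2+y^2\leq1$ is achieved on the boundary with $y=\sqrt{1-x^2}=2\sqrt{\delta_{TV}(1-\delta_{TV})}$ and the sign of $y$ aligned with $v$. Substituting $x=1-2\delta_{TV}$ and simplifying, one arrives at the two perfect-square identities
\[
\bar f-\frac1{d+1}=\frac1{d+1}\bigl(\sqrt{(d-1)\delta_{TV}}+\sqrt{1-\delta_{TV}}\bigr)^2,\qquad 1-\bar f=\frac1{d+1}\bigl(\sqrt{(d-1)(1-\delta_{TV})}-\sqrt{\delta_{TV}}\bigr)^2,
\]
and feeding these into the right-hand side of Eq.~(\ref{eq:optavfidtv}) collapses it to exactly $\delta_{TV}$, proving both the bound and its tightness (the optimizer lies on the boundary circle and, having $z=0$, maps via the channel construction in the proof of Thm.~\ref{thm:universality} to an instrument of the form in Eq.~(\ref{eq:optinst})). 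For the complementary regime $\frac1{d+1}\leq\bar f\leq\frac2{d+1}$ the stated bound $\delta_{TV}\geq0$ is already tight: there $\delta_{TV}=0$ is achievable within the full family of Thm.~\ref{thm:universality}, now with $z=2-(d+1)\bar f\in(0,1]$, which is another manifestation of the necessity of the second parameter flagged in Cor.~\ref{cor:z}.

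I expect the main obstacle to be bookkeeping rather than anything conceptual: carrying the $\varrho$-reduction through exactly as in Prop.~\ref{prop:cone}/Thm.~\ref{thm:TVfidelity}, re-verifying that the $z=0$ reduction remains valid for $\bar f$, and recognizing the two perfect-square factorizations that produce the closed form. The genuinely new computational content is just the Haar average $\int_{\norm{\psi}=1}|\langle i|\psi\rangle|^4\,\mathrm d\psi=\frac2{d(d+1)}$.
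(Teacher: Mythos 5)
Your proposal is correct and follows essentially the same route as the paper: reduction to symmetric channels, the Haar moment $\int|\langle i|\psi\rangle|^4\,\mathrm{d}\psi=2/(d(d+1))$, the identity $\bar f=(1+d\langle e_1|\varrho|e_1\rangle)/(d+1)$, the $\epsilon$-shift argument giving $z=0$, and the same planar quadratic optimization. The only (harmless) difference is that the paper simply reuses the already-solved quadratic problem of Thm.~\ref{thm:TVfidelity} via the substitution $(\bar f(d+1)-1)/d=\langle e_1|\varrho|e_1\rangle$, whereas you re-derive its solution explicitly through the perfect-square identities, and you add a correct side remark about attaining $\delta_{TV}=0$ with $z>0$ in the trivial regime.
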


\begin{proof}
We again use the fact that the optimal tradeoff is attainable for symmetric channels by Prop.~\ref{prop:sym} and its marginal is given in Eq.~(\ref{eq:commutant}).  The average-case fidelity given in Eq.~(\ref{eq:Avf}) therefore yields
\bea
\bar{f} &=& \int_{\norm{\psi}=1} \bra{\psi} \left(\alpha_1 \frac{\1}{d} + \beta_1 \kb{\psi}{\psi} + \gamma_1 \sum_{i=1}^d \kb{i}{i} \bk{i}{\psi} \bk{\psi}{i} \right) \ket{\psi}  \; \mathrm{d} \psi \nonumber \\
&=& \frac{\alpha_1}{d} + \beta_1 + \gamma_1 \sum_{i=1}^d \int_{\norm{\psi}=1} \bk{\psi}{i}\bk{i}{\psi} \bk{i}{\psi}\bk{\psi}{i}\; \mathrm{d}\psi. \nonumber
\eea
The integral can be rewritten to give
\bea
&& \int_{\norm{\psi}=1} \bra{\psi \otimes \psi} \left( \kb{i}{i} \otimes \kb{i}{i} \right) \ket{\psi\otimes \psi} \; \mathrm{d}\psi \nonumber \\
&=& \int_{U(d)} \bra{00} \left( U \otimes U \right) \left( \kb{i}{i} \otimes \kb{i}{i} \right) \left( U \otimes U \right)^\ast \ket{00} \; \mathrm{d}U \nonumber \\
&=& \bra{00} \frac{\1+\F}{d(d+1)}\ket{00} \nonumber \\
&=& \frac{2}{d(d+1)}, \nonumber
\eea 
where $\F$ is the flip operator defined as $\F \ket{ij}= \ket{ji}$ and $dU$ denotes the normalized Haar measure on the unitary group $U(d)$ acting on $\C^d$. Together with $\gamma_1 = 1- \alpha_1 - \beta_1$, this gives an average fidelity 
\begin{equation}
\bar{f} = \frac{2}{d+1} - \alpha_1 \frac{d-1}{d(d+1)} + \beta_1 \frac{d-1}{d+1}. \nonumber
\end{equation}
Using Eqs.~(\ref{eq:paramred1},\ref{eq:paramred2}) we can express this in terms of the state $\varrho$.  We can again w.l.o.g. assume that $\tr{\varrho f_2}=0$ and $\tr{\1_2\varrho}=1-\tr{\varrho f_1}$ from the proof of Prop.~\ref{prop:cone}. Therefore, we obtain
\begin{equation}
\bar{f} = \frac{1}{d+1}\left( 1 + d\bra{e_1} \varrho \ket{e_1} \right). 
\label{eq:Avfminvarrho}
\end{equation}
We would like to maximize Eq.~(\ref{eq:Avfminvarrho}) for each value of the worst-case total variational distance, which by Lemma~\ref{lem:TV} and Eq.~(\ref{eq:paramred3}) is 
\be \delta_{TV}=1-\langle e_2|\varrho| e_2\rangle-\tr{\varrho f_1}.\nonumber \ee
Similarly to the worst-case fidelity, we can again consider the map $\varrho\mapsto\varrho+\epsilon|e_2\rangle\langle e_2|-\epsilon f_1$, $\epsilon\geq 0$, under which $\delta_{TV}$ is constant and $\bar{f}$ non-decreasing, such that $\tr{\varrho f_1}=0$ can be assumed. That is, $z=0$ is sufficient for the optimal tradeoff.

The remaining optimization problem can be solved by realizing that $(\bar{f}(d+1)-1)/d = \bra{e_1} \varrho \ket{e_1}$ and using the solution to the quadratic problem stated and solved in the worst-case fidelity tradeoff. This yields the solution stated in  Eq.~(\ref{eq:optavfidtv}).
\end{proof}

%%%%%%%%%%%%%%%%%%%%%%
\subsection{Trace norm}
The analogue of the total variational distance for density operators is  (up to a factor of $2$) the trace norm distance. The corresponding distance between a channel $T_1$ and the identity map is then given by half of the $1$-to-$1$-norm distance
\be \Delta_{TV}(T_1):=\frac12\sup_{\rho}||T_1(\rho)-\rho||_1,\label{eq:Delta11def}\ee
where the supremum is taken over all density operators. $\Delta_{TV}$ quantifies how well $T_1$ can be distinguished from $\id$ in a statistical experiment, if no ancillary system is allowed. For the two-parameter family of channels in Eq.~(\ref{eq:commutant}) $ \Delta_{TV}$ turns out to be a function of the worst-case fidelity $f$, which was defined in Eq.~(\ref{eq:worstfT1}). This is in contrast to the case of general channels, which merely satisfy the Fuchs-van de Graaf inequalities \be\label{eq:FuchsGraaf}1-f\leq \Delta_{TV}\leq\sqrt{1-f}.\ee
\begin{lemma} For every channel of the form in Eq.~(\ref{eq:commutant}), we have $\Delta_{TV}=1-f$.\label{lem:TV1f}
\end{lemma}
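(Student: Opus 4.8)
\emph{Plan.} The plan is to reduce to pure input states, rewrite the trace-norm distance as a single largest eigenvalue, and then split into two cases according to the sign of $\gamma$. Concretely: since $\rho\mapsto\tfrac12\norm{\Phi(\rho)-\rho}_1$ is convex (a convex function precomposed with the affine map $\rho\mapsto\Phi(\rho)-\rho$), the supremum in~\eqref{eq:Delta11def} is attained at an extreme point of $\cS_d$, i.e.\ at a pure state $\rho=\kb{\psi}{\psi}$. So I would first reduce to estimating $g(\psi):=\tfrac12\norm{\Phi(\kb{\psi}{\psi})-\kb{\psi}{\psi}}_1$ for unit vectors $\psi$.

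\emph{Reduction to a largest eigenvalue.} The key structural fact is that $X:=\Phi(\kb{\psi}{\psi})-\kb{\psi}{\psi}$ has at most one negative eigenvalue: on the codimension-$1$ subspace $\psi^{\perp}$ one has $\bra{\phi}X\ket{\phi}=\bra{\phi}\Phi(\kb{\psi}{\psi})\ket{\phi}\geq 0$ because $\Phi$ is positive. Combined with $\tr{X}=0$, this gives $g(\psi)=\lambda_{\max}(-X)=\lambda_{\max}\!\big(\kb{\psi}{\psi}-\Phi(\kb{\psi}{\psi})\big)$. Inserting~\eqref{eq:commutant} and using $1-\beta=\alpha+\gamma$, a one-line computation yields, for any unit vector $\phi$,
\[
\bra{\phi}\big(\kb{\psi}{\psi}-\Phi(\kb{\psi}{\psi})\big)\ket{\phi}=(\alpha+\gamma)\,s-\tfrac{\alpha}{d}-\gamma\,t ,
\]
with $s:=|\bk{\phi}{\psi}|^{2}\in[0,1]$ and $t:=\sum_{i=1}^{d}|\bk{i}{\phi}|^{2}|\bk{i}{\psi}|^{2}\in[0,1]$, hence $\Delta_{TV}=\sup_{\phi,\psi}\big[(\alpha+\gamma)s-\tfrac{\alpha}{d}-\gamma t\big]$. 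From~\eqref{eq:fgamma} (with $1-\beta=\alpha+\gamma$) one reads off $1-f=(\alpha+\gamma)\tfrac{d-1}{d}$ for $\gamma\geq 0$ and $1-f=\alpha\tfrac{d-1}{d}$ for $\gamma<0$; evaluating the displayed quantity at $\phi=\psi=\tfrac{1}{\sqrt{d}}\sum_i\ket{i}$ (so $s=1,\ t=1/d$) when $\gamma\geq 0$ and at $\phi=\psi=\ket{1}$ (so $s=t=1$) when $\gamma<0$ already gives $\Delta_{TV}\geq 1-f$, so only the matching upper bound is left.

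\emph{Upper bound and main obstacle.} For $\gamma\geq 0$ I would use the elementary inequality $s-t\leq\tfrac{d-1}{d}$: writing $c_i:=|\bk{i}{\phi}|\,|\bk{i}{\psi}|\geq 0$ and $C:=\sum_i c_i$, one has $s\leq C^{2}$ and, by Cauchy--Schwarz, $C\leq 1$ together with $t=\sum_i c_i^{2}\geq C^{2}/d$, so $s-t\leq C^{2}(1-1/d)\leq\tfrac{d-1}{d}$. Then $(\alpha+\gamma)s-\gamma t=\alpha s+\gamma(s-t)\leq\alpha+\gamma\tfrac{d-1}{d}$, whence $\Delta_{TV}\leq(\alpha+\gamma)\tfrac{d-1}{d}=1-f$. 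The case $\gamma<0$ is the step I expect to be the real obstacle, since applying the triangle inequality to~\eqref{eq:commutant} is genuinely too lossy there; instead I would exploit complete positivity. The Jamiolkowski state $J_{\Phi}$ is positive semidefinite, so its eigenvalues $a,b,c$ are nonnegative, and $\alpha=d^{2}a$, $\gamma=d(c-a)$ then force $\alpha\geq|\gamma|$. Hence both coefficients in $(\alpha+\gamma)s-\gamma t=(\alpha-|\gamma|)\,s+|\gamma|\,t$ are nonnegative, so this is $\leq(\alpha-|\gamma|)+|\gamma|=\alpha$ and $\Delta_{TV}\leq\alpha-\tfrac{\alpha}{d}=\alpha\tfrac{d-1}{d}=1-f$. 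Combining both cases gives $\Delta_{TV}=1-f$.
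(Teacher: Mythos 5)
Your proof is correct, and after the shared first half (reduction to pure inputs, the single-negative-eigenvalue observation, and the rewriting of $\Delta_{TV}$ as $\max_{\psi,\phi}\big[(\alpha+\gamma)s-\tfrac{\alpha}{d}-\gamma t\big]$, which is exactly the expression in Eq.~\eqref{eq:tnfid1}) it takes a genuinely different route to evaluate this maximum. The paper lifts the problem to the two-copy operator $R=(1-\beta_1)\F-\gamma_1\sum_i|ii\rangle\langle ii|$ and uses a symmetry argument (block-diagonality w.r.t.\ the symmetric/antisymmetric splitting, $R\leq P_+RP_+$, and the extreme points of $P_+\cS P_+$) to show that the maximum is attained at $\phi=\psi$; then $\Delta_{TV}=1-f$ follows straight from the definition of $f$, with no case distinction and no need for the closed form of $f$. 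You instead bound the scalar expression directly via Cauchy--Schwarz in the variables $(s,t)$, splitting on the sign of $\gamma$ and invoking positivity of the Choi eigenvalues to get $\alpha\geq|\gamma|$ when $\gamma<0$ (note that in the case $\gamma\geq 0$ you also use $\alpha\geq 0$, which comes from the same source), and you exhibit the optimizing states explicitly; the matching lower bound could alternatively be quoted from Eq.~\eqref{eq:FuchsGraaf}. What each approach buys: the paper's argument establishes the structurally stronger fact that the optimal distinguishing witness $\phi$ can be taken equal to the input $\psi$, which links $\Delta_{TV}$ to $f$ without computing either, while your argument is more elementary and self-contained, makes the complete-positivity input explicit, identifies the extremal states (maximally coherent for $\gamma\geq0$, a basis state for $\gamma<0$), and sidesteps the somewhat delicate extreme-point claim for $P_+\cS P_+$ --- at the mild cost of a case analysis and of importing the formula~\eqref{eq:fgamma} for $f$ from the proof of Thm.~\ref{thm:TVfidelity}.
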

\begin{proof}
Due to convexity of the norm we can restrict the supremum in Eq.~(\ref{eq:Delta11def}) to pure state density operators. The resulting operator $T_1\big(|\psi\rangle\langle\psi |\big)-|\psi\rangle\langle\psi |$  then has a single negative eigenvalue and vanishing trace. Hence, the trace-norm is twice the operator norm and we can write
\bea
\Delta_{TV}(T_1) &=&  \max_{||\psi||=||\phi||=1} \langle\phi| \big[|\psi\rangle\langle\psi |-T_1\big(|\psi\rangle\langle\psi |\big)\big]|\phi\rangle\label{eq:tnfid1}\\
&=&  \max_{||\psi||=||\phi||=1}\left[(1-\beta_1)|\langle\psi|\phi\rangle|^2-\frac{\alpha_1}{d}-\gamma_1\sum_{i=1}^d|\langle\phi|i\rangle|^2|\langle\psi|i\rangle|^2\right]\nonumber\\
&=&  \max_{||\psi||=||\phi||=1} \langle\psi\otimes\phi|R|\psi\otimes\phi\rangle-\frac{\alpha_1}{d},\nonumber \\
&&  R:=(1-\beta_1)\mathbbm{F}-\gamma_1\sum_{i=1}^d|ii\rangle\langle ii|. \nonumber
\eea
Our aim is to prove that the maximum in Eq.~(\ref{eq:tnfid1}) is attained for $\psi=\phi$ since then the Lemma follows from the definition of the worst-case fidelity $f$. In order to achieve this, we exploit the symmetry properties of $R$, which is block-diagonal w.r.t. the decomposition of $\C^d\otimes\C^d$ into symmetric and anti-symmetric subspace. Moreover, if we denote by $P_+:=(\1+\mathbbm{F})/2$ the projector onto the symmetric subspace, then $R\leq P_+ R P_+$. Defining $\cS$ as the set of separable density operators and utilizing its convexity, we obtain
\bea \max_{||\psi||=||\phi||=1}\langle\psi\otimes\phi|R|\psi\otimes\phi\rangle & =  &\nonumber\max_{\rho\in\cS}\tr{R\rho} \leq \max_{\rho\in\cS}\tr{RP_+\rho P_+}\\ &=&\max_{\rho\in P_+ \cS P_+}\tr{R\rho}\nonumber\\
&=&\max_{||\psi||=1}\langle\psi\otimes\psi|R|\psi\otimes\psi\rangle,\nonumber
\eea
where the last step follows from the fact that the extreme points of the convex set $P_+\cS P_+$ are pure, symmetric product states.
\end{proof}
Due to Prop.~\ref{prop:sym} we can now plug the previous Lemma into Thm.~\ref{thm:TVfidelity} and obtain:
\begin{corollary}[Total variation - trace norm tradeoff]\label{cor:TVTV}
Consider a non-degenerate von Neumann measurement, given by an orthonormal basis in $\C^d$, and an instrument with $d$ corresponding outcomes. Then the worst-case total variational distance $\delta_{TV}$ and its trace-norm analogue $\Delta_{TV}$ satisfy
\be \delta_{TV}\geq \left\{\begin{array}{ll}\frac{1}{d}\left|\sqrt{(1-\Delta_{TV})(d-1)}-\sqrt{\Delta_{TV}} \right|^2&\ \text{if }\ \Delta_{TV}\leq1-\frac{1}{d},\\ 0 &\ \text{if }\ \Delta_{TV}\geq 1-\frac{1}{d}.
\end{array}\right.\label{eq:opttracetv}\ee
The inequality is tight and equality is attainable within the one-parameter family of instruments in Eq.~(\ref{eq:optinst}) with $z=0$.
\end{corollary}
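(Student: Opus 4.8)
The plan is to obtain the statement as a direct consequence of Theorem~\ref{thm:TVfidelity}, by substituting the identity $\Delta_{TV}=1-f$ that Lemma~\ref{lem:TV1f} supplies on the class of channels carrying an optimizer.

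First I would record that $\Delta_{TV}$, being half a worst-case Schatten $1\!\to\!1$ distance to the identity channel, satisfies Assumption~\ref{assum:1}, and that $\delta_{TV}$ satisfies Assumption~\ref{assum:2} (both facts are established in Sec.~\ref{sec:dist}). Proposition~\ref{prop:sym} then guarantees that the optimal $\Delta_{TV}$--$\delta_{TV}$ tradeoff is already attained within symmetrized channels $T$, whose marginal $T_1$ has the commutant form~\eqref{eq:commutant}. On exactly this class, Lemma~\ref{lem:TV1f} gives $\Delta_{TV}(T_1)=1-f(T_1)$, with $f$ the worst-case fidelity of Eq.~\eqref{eq:worstfT1}. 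Hence prescribing $\Delta_{TV}$ is the same as prescribing $f=1-\Delta_{TV}$; inserting $f=1-\Delta_{TV}$ into the bound of Theorem~\ref{thm:TVfidelity} turns the hypothesis $f\geq 1/d$ into $\Delta_{TV}\leq 1-1/d$, turns $f\leq 1/d$ into $\Delta_{TV}\geq 1-1/d$, and turns $\delta_{TV}\geq\frac{1}{d}\bigl|\sqrt{f(d-1)}-\sqrt{1-f}\bigr|^2$ into precisely Eq.~\eqref{eq:opttracetv}.

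For tightness I would note that the instruments that saturate Theorem~\ref{thm:TVfidelity} are those of Eq.~\eqref{eq:optinst} with $z=0$, and that their marginals $T_1$ are again of the commutant form, so Lemma~\ref{lem:TV1f} applies to them and gives $\Delta_{TV}=1-f$; hence the same one-parameter family saturates Eq.~\eqref{eq:opttracetv}, and as its free parameter varies $f$ runs through $[1/d,1]$, so the whole curve is traced out. (If one only wants the lower bound for arbitrary, not necessarily symmetric, instruments, the Fuchs--van de Graaf inequality~\eqref{eq:FuchsGraaf} gives $\Delta_{TV}\geq 1-f$, and since $f\mapsto\frac{1}{d}\bigl(\sqrt{f(d-1)}-\sqrt{1-f}\bigr)^2$ is nondecreasing on $[1/d,1]$ this already yields Eq.~\eqref{eq:opttracetv} from Theorem~\ref{thm:TVfidelity} by monotonicity.)

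This is a corollary in the literal sense, so I do not expect a genuine obstacle; the only point deserving a moment's care is that $\Delta_{TV}=1-f$ is not a universal identity but holds exactly on the commutant-form (i.e.\ symmetrized) channels, which is precisely the class in which both the optimizer guaranteed by Proposition~\ref{prop:sym} and the explicit optimal instruments of Theorem~\ref{thm:TVfidelity} reside --- so the change of variables is legitimate simultaneously for the inequality and for its attainability.
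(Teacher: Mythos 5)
Your proposal is correct and follows essentially the same route as the paper, whose proof of Cor.~\ref{cor:TVTV} is exactly the one-line combination of Prop.~\ref{prop:sym}, Lemma~\ref{lem:TV1f} and Thm.~\ref{thm:TVfidelity} that you spell out; your explicit check that the change of variables $f=1-\Delta_{TV}$ is legitimate both for the bound and for its attainability (since the saturating $z=0$ instruments have commutant-form marginals) is precisely the point the paper leaves implicit. The added remark that Fuchs--van de Graaf plus monotonicity of the bound would alternatively give the inequality for arbitrary instruments is a nice, correct aside but not needed.
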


%%%%%%%%%%%%%%%%%%%%%%%%%%%%%%%%%%%%%%%%
\subsection{Diamond norm}\label{sec:gvN}
We treat the diamond norm separately, not only because it might be the operationally most relevant measure, but also because the corresponding tradeoff result will be proven in a more general setting: we will allow the target measurement to be a von Neumann measurement that may be degenerate. We will see that degeneracy, even if it varies among the measurement outcomes, does not affect the optimal tradeoff curve if the diamond norm is considered. For general distance measures $\Delta$ that satisfy Assumption~\ref{assum:1} we do not expect this result to be true since, loosely speaking, they typically behave less benign w.r.t. extending the system than the diamond norm. Hence, assigning different dimensions to different measurement outcomes may, in general, affect the optimal information-disturbance relation. Before we prove that this is not the case for the tradeoff between the diamond norm and its classical counterpart, the total variational distance, let us recall its definition and basic properties.

For a hermiticity-preserving map $\Phi:\cM_d\rightarrow\cM_{d'}$ we define 
\be ||\Phi||_\diamond:=\sup_{\rho}||(\Phi\otimes\id_d)(\rho)||_1,\ee
where the supremum is taken over all density operators in $\cM_{d^2}$, which by convexity may be assumed to be pure. For a
 quantum channel $T_1:\cM_d\rightarrow\cM_d$ we then define 
\be \Delta_\diamond(T_1):=||T_1-\id_d||_\diamond.\ee
 $\Delta_\diamond(T_1)$ quantifies how well $T_1$ can be distinguished from the identity channel $\id$ in a statistical experiment, when arbitrary preparations, measurements and ancillary systems are allowed. There are two crucial properties of the diamond norm that we will exploit: 1) Monotonicity:  for any quantum channel $\Psi$, neither $||\Psi\circ\Phi||_\diamond$  nor $||\Phi\circ\Psi||_\diamond$ can be larger than $||\Phi||_\diamond$. 2) Tensor stability: in particular, $||\Phi\otimes\id||_\diamond=||\Phi||_\diamond$.
 
 \begin{lemma}[Dimension-independence of optimal tradeoff curve]\label{lem:dimind}
 Consider a von Neumann measurement with $m$ outcomes,  corresponding to $m$ mutually orthogonal, non-zero projections of possibly different dimensions, as target. Then the optimal $\Delta_\diamond - \delta_{TV}$-tradeoff depends only on $m$ and is independent of the dimensions of the projections.
 \end{lemma}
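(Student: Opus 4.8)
\emph{Plan.} The assertion is equivalent to saying that the accessible region in the $\Delta_\diamond$--$\delta_{TV}$--plane for the target $E=(P_1,\dots,P_m)$ with $\dim P_k=d_k$ and $d=\sum_k d_k$ is the \emph{same} as the one for the non-degenerate $m$-outcome von Neumann measurement on $\C^m$. Writing $c_{\vec d}$ and $c_m$ for the corresponding optimal tradeoff curves, I will establish $c_{\vec d}\ge c_m$ (``degeneracy cannot help'') and $c_{\vec d}\le c_m$ (``degeneracy cannot hurt''), each by an explicit transformation of instruments that does not worsen the pair $(\Delta_\diamond,\delta_{TV})$; since $c_m$ depends only on $m$, this proves the lemma. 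The only ingredients are: monotonicity of $\|\cdot\|_\diamond$ under composition with channels, tensor stability of $\|\cdot\|_\diamond$, the fact that the optimal tradeoff curve is non-increasing (convexity of the region, cf.~Sec.~\ref{sec:dist}), the fact that the worst case in $\delta_{TV}$ is a deterministic input distribution (Lemma~\ref{lem:TV}), and the fact that the $\C^m$-optimum for $\Delta_\diamond$ already lies within the sub-family of Thm.~\ref{thm:universality} with $z=0$ (cf.~Eq.~(\ref{eq:alpha1z})).

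\emph{Degeneracy cannot help.} Choose a unit vector $|v_k\rangle$ in the range of $P_k$ and let $V\colon\C^m\to\C^d$ be the isometry with $V|k\rangle=|v_k\rangle$, so $V^*P_kV=|k\rangle\langle k|$. Put $\mathcal N(\tau):=V\tau V^*$ and $\mathcal M(\rho):=V^*\rho V+\tr{(\1_d-VV^*)\rho}\,\omega$ for a fixed state $\omega$; both are channels, $\mathcal M$ is unital, and $\mathcal M\circ\mathcal N=\mathrm{id}_m$. Given an arbitrary instrument $\{I_k'\}$ approximating $(P_k)$, set $I_k:=\mathcal M\circ I_k'\circ\mathcal N$: this is an instrument approximating the non-degenerate target, with marginal $\mathcal M\circ T_1'\circ\mathcal N$ and POVM elements $\mathcal N^*\!\big(I_k'^*(\mathcal M^*(\1))\big)=V^*E_k'V$. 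Monotonicity of $\|\cdot\|_\diamond$ under pre- and post-composition with $\mathcal N$ and $\mathcal M$ gives $\Delta_\diamond(\mathcal M T_1'\mathcal N)=\|\mathcal M(T_1'-\mathrm{id}_d)\mathcal N\|_\diamond\le\Delta_\diamond(T_1')$, while $V^*E_k'V-|k\rangle\langle k|=V^*(E_k'-P_k)V$ together with $\rho\mapsto V\rho V^*$ sending states on $\C^m$ into states on $\C^d$ yields $\delta_{TV}\le\delta_{TV}(E')$. Hence every accessible point for the degenerate target dominates one for the $\C^m$ target, and since $c_m$ is non-increasing this forces $c_{\vec d}\ge c_m$.

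\emph{Degeneracy cannot hurt.} Take an optimal instrument for the non-degenerate $\C^m$ target; by the remark above it may be assumed of the form $I_k(\sigma)=K_k\sigma K_k$ with $K_k=\mu\1_m+\nu|k\rangle\langle k|$ (and $\C^m$-marginal $S_1:=\sum_k I_k$). On $\C^m\otimes\C^d$ consider the (degenerate) target $|k\rangle\langle k|\otimes\1_d$ with the instrument $I_k\otimes\mathrm{id}_d$; by tensor stability this realizes exactly the same $\Delta_\diamond$ and $\delta_{TV}$. The crucial point is that each $K_k\otimes\1_d$ only rescales the matrix blocks of the $\C^m$-register, so $I_k\otimes\mathrm{id}_d$ maps operators supported on the ``staircase'' subspace $\mathcal S:=\bigoplus_k\big(|k\rangle\otimes\mathcal R_k\big)$ to operators supported on $\mathcal S$, for \emph{any} subspaces $\mathcal R_k\subseteq\C^d$. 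Choosing $\dim\mathcal R_k=d_k$ identifies $\mathcal S$ with $\C^d$ carrying the block structure $\vec d$, and the restriction of $|k\rangle\langle k|\otimes\1_d$ to $\mathcal S$ is precisely $(P_k)$. Restricting $I_k\otimes\mathrm{id}_d$ to $\mathcal S$ therefore defines an instrument approximating $(P_k)$ whose $\delta_{TV}$ is unchanged, because the worst-case input is a point mass $|k_0\rangle\langle k_0|$ (Lemma~\ref{lem:TV}), which is realized inside $\mathcal S$ by $|k_0\rangle\langle k_0|\otimes\pi$ for any state $\pi$ on $\mathcal R_{k_0}$; and whose $\Delta_\diamond$ is not larger, since on $\mathcal S$ the restricted marginal minus the identity equals the compression onto $\mathcal S$ applied to $\big((S_1\otimes\mathrm{id}_d)-\mathrm{id}\big)$ precomposed with the isometric inclusion of $\mathcal S$, and the inclusion is a channel while compression is trace-norm contractive, so by tensor stability the diamond distance stays $\le\Delta_\diamond(S_1)$. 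Hence $c_{\vec d}\le c_m$, and combining the two inequalities gives $c_{\vec d}=c_m$.

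\emph{Main obstacle.} The delicate half is ``degeneracy cannot hurt'': one must produce an instrument on $\C^d$ that \emph{matches} the $\C^m$-performance, and the naive guesses (replacing $|k\rangle\langle k|$ by $P_k$ in the universal family) produce trace constraints governed by $d=\sum_k d_k$ rather than by $m$, i.e.\ a dimension-dependent answer. The construction above circumvents this precisely because the diamond-optimal $\C^m$ instrument can be taken with $z=0$, for which the Kraus operators $K_k$ merely scale matrix blocks, which is what makes $I_k\otimes\mathrm{id}_d$ preserve the staircase subspace $\mathcal S$ and hence be restrictable to an arbitrary block pattern; checking that this restriction does not increase $\Delta_\diamond$ (it cannot decrease it below the value forced by the first inequality) is then a routine use of contractivity of compression and tensor stability of the diamond norm.
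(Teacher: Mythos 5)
Your proposal is correct in substance, but it takes a genuinely different route from the paper in the harder direction, and it leans on a fact you have not justified at the point where you use it. The ``degeneracy cannot help'' half is essentially the paper's own compression argument in different clothing: the paper uses isometries $V,W$ and a partial trace, you use the pair $\mathcal{N},\mathcal{M}$ (note $\mathcal{M}$ is trace-preserving rather than unital; it is $\mathcal{M}^*(\1)=\1$ that you actually need for $I_k^*(\1)=V^*E_k'V$). The divergence is in ``degeneracy cannot hurt''. The paper never needs any information about optimizers: it tensors an \emph{arbitrary} instrument with $\id_{d_m}$ to pass from $(1,\dots,1)$ to the uniformly degenerate target $(d_m,\dots,d_m)$, then compresses pointwise to $(d_1,\dots,d_m)$, and closes the cycle of Eq.~(\ref{eq:dimchanges}), so the accessible region can only grow at each step and hence is constant; only tensor stability and monotonicity of $\|\cdot\|_\diamond$ enter. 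You instead restrict a tensored \emph{optimal} instrument to a staircase subspace, and this works only because the Kraus operators $K_k=\mu\1+\nu|k\rangle\langle k|$ preserve that subspace, i.e., because the non-degenerate optimum can be taken with $z=0$; for $z>0$ the depolarizing-type term does not preserve the staircase, so this input is genuinely needed. That $z=0$ suffices for the $\Delta_\diamond$--$\delta_{TV}$ optimum is \emph{not} delivered by Thm.~\ref{thm:universality} or Eq.~(\ref{eq:alpha1z}), which you cite: the universality theorem leaves $z$ free, and eliminating it for the diamond norm requires Cor.~\ref{cor:TVTV} together with the Cauchy--Schwarz computation showing $\Delta_\diamond=2\Delta_{TV}$ on the $\alpha_1=0$ symmetric channels, which in the paper appears only inside the proof of Thm.~\ref{thm:TVdiamond}. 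Since that computation does not itself use Lemma~\ref{lem:dimind}, your argument is not circular, but it requires reordering (prove the non-degenerate $z=0$ optimality first) and inherits the full universality machinery, whereas the paper's proof of the lemma is self-contained and measure-structure-free. What your route buys is an explicit instrument for the degenerate target matching (not merely dominating) the $\C^m$ performance; also note that the claimed equality of $\delta_{TV}$ under restriction is not needed---the inequality suffices for your dominance argument.
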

\begin{proof}
Let $(d_1,\ldots,d_m)\in\mathbbm{N}^m$ be the dimensions of the projections  (i.e., the dimensions of their ranges) and assume w.l.o.g. that $d_m$ is the largest among them. We will consider three changes of those dimensions, namely
\be (1,\ldots, 1)\rightarrow(d_m\ldots,d_m)\rightarrow (d_1,\ldots, d_m)\rightarrow (1,\ldots, 1),\label{eq:dimchanges} 
\ee 
and show that in each of those three steps the accessible region in the $\Delta_{\diamond}-\delta_{TV}$-plane can only grow or stay the same. Since Eq.~(\ref{eq:dimchanges}) describes a full circle, this means that the region, indeed, stays the same, which proves the claim of the Lemma.

For the starting point in Eq.~(\ref{eq:dimchanges}) we consider an arbitrary instrument $\big(I_i:\cM_m\rightarrow\cM_m\big)_{i=1}^m$ that is supposed to approximate a von Neumann measurement given by $(|i\rangle\langle i|)_{i=1}^m$. From here, we construct an instrument that approximates $(|i\rangle\langle i|\otimes\1_{d_m})_{i=1}^m$ simply by taking $ I_i\otimes{\id_{d_m}}=: \tilde{I}_i $. Then $\Delta_\diamond\big(\sum_i\tilde{I}_i\big)=\Delta_\diamond\big(\sum_i I_i\big)$ holds due to the tensor stability of the diamond norm and 
\bea && \sup_\rho \sum_{i=1}^m \Big|\tr{\rho\big(\tilde{I}^*_i(\1)- |i\rangle\langle i|\otimes\1_{d_m}\big)}\Big| \nonumber \\
&=& \sup_\rho \sum_{i=1}^m \Big|\tr{\rho\left( \big(I^*_i(\1)- |i\rangle\langle i|\big)\otimes\1_{d_m}\right)}\Big|  \nonumber \\ 
&=& \sup_\rho\sum_{i=1}^m \Big|\tr{\rho\big(I^*_i(\1)- |i\rangle\langle i|\big)}\Big| \nonumber
\eea shows that the value of $\delta_{TV}$ is preserved, as well.

Second and third step in Eq.~(\ref{eq:dimchanges}) can be treated at once by realizing that in both cases the dimensions are pointwise non-increasing. So let us consider this scenario in general. Denote the projections corresponding to two von Neumann measurements by $Q_i\in\cM_D$ and $\tilde{Q}_i\in\cM_{\tilde{D}}$ and assume that $\tr{Q_i}=:d_i\geq \tilde{d}_i:=\tr{\tilde{Q}_i}$. Let $I_i:\cM_D\rightarrow\cM_D$ be the elements of an instrument that approximates the measurement in the larger space. In order to construct an instrument in the smaller space that is at least as good w.r.t. $\Delta_\diamond$ and $\delta$, we introduce two isometries $V$ and $W$ as
\bea V:\C^{\tilde{D}}\rightarrow\C^D &\text{ s.t. }& V^*Q_i V=\tilde{Q}_i \nonumber \\
W:\C^D\rightarrow\C^k\otimes\C^{\tilde{D}} &\text{ s.t. }& \forall i\in\{1,\ldots,\tilde{D}\}:\ WV|i\rangle=|1\rangle\otimes |i\rangle, \nonumber
\eea
where $\{|i\rangle\}_{i}$ is an orthonormal basis in $\C^{\tilde{D}}$ and $k\in\mathbbm{N}$ is sufficiently large so that $W$ can be an isometry. The sought instrument in the smaller space can then be defined as
\be \tilde{I}_i(\rho):={\rm tr}_{\C^k} \left[W I_i\big(V\rho V^*\big)W^*\right], \nonumber
\ee
where ${\rm tr}_{\C^k}$ means the partial trace w.r.t. the first tensor factor.
For the value of $\Delta_\diamond$ we obtain
\bea \Big|\!\Big| \id-\sum_i  \tilde{I}_i \Big|\!\Big|_\diamond &=& \Big|\!\Big|{\rm tr}_{\C^k}\big[WV\cdot V^* W^*\big]-{\rm tr}_{\C^k}\Big[W\Big(\sum_i I_i\big(V\cdot V^*\big)\Big)W^*\Big] \Big|\!\Big|_\diamond\nonumber\\
&\leq &\Big|\!\Big| V\cdot V^* -\sum_i I_i\big(V\cdot V^*\big)\Big|\!\Big|_\diamond\ \leq \ \Big|\!\Big|\id-\sum_i I_i\Big|\!\Big|_\diamond\nonumber ,
\eea
where we have used the monotonicity property of the diamond norm twice.
Finally, using that $\tilde{I}_i^*(\1)=V^* I_i^*(\1)V$ we can show that also  $\delta_{TV}$ is non-increasing when moving to the smaller space since 
\bea
\sup_{\rho} \sum_i\left|\tr{\rho\big(\tilde{I}_i^*(\1)-\tilde{Q}_i\big)}\right| &=& \sup_{\rho} \sum_i\left|\tr{V \rho V^*\big(I_i^*(\1)-Q_i\big)}\right| \nonumber\\
&\leq& \sup_{\rho} \sum_i\left|\tr{ \rho \big(I_i^*(\1)-Q_i\big)}\right| ,\nonumber
\eea
where the supremum in the first (second) line is taken over all density operators in the smaller (larger) space.
\end{proof} 
 
 \begin{theorem}[Total variation - diamond norm tradeoff]\label{thm:TVdiamond}
If an instrument is considered approximating a (possibly degenerate) von Neumann measurement with $m$ outcomes, then the worst-case total variational distance $\delta_{TV}$ and the diamond norm distance $\Delta_\diamond$ satisfy
\be \delta_{TV}\geq \left\{\begin{array}{ll}\frac{1}{2m}\left(\sqrt{(2-\Delta_\diamond)(m-1)}-\sqrt{\Delta_\diamond} \right)^2&\ \text{if }\ \Delta_\diamond\leq 2-\frac{2}{m},\\ 0 &\ \text{if }\ \Delta_\diamond > 2-\frac{2}{m}.
\end{array}\right.\label{eq:optdiamondtv}\ee
The inequality is tight in the sense that for every choice of the von Neumann measurement there is an instrument achieving equality.
\end{theorem}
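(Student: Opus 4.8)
The idea is to reduce everything to the non-degenerate target already handled in Cor.~\ref{cor:TVTV}, and then to pin down the diamond norm by comparing it with the ancilla-free trace-norm $\Delta_{TV}$ appearing there. By Lemma~\ref{lem:dimind} the optimal $\Delta_\diamond-\delta_{TV}$ tradeoff for a von Neumann measurement with $m$ outcomes does not depend on the dimensions of the $m$ projections; in particular it coincides with the one for the non-degenerate target $E=(\ket i\bra i)_{i=1}^m$ on $\C^m$. So from now on $d=m$ and the target is non-degenerate, and it suffices to establish both the inequality and its tightness in this setting, then transport the optimizers back via the construction in the proof of Lemma~\ref{lem:dimind}.

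\textbf{Step 2 (the inequality).} For every channel $T_1\in\cT_m$ the diamond norm dominates its ancilla-free version, so $2\Delta_{TV}(T_1)=\sup_\rho\|T_1(\rho)-\rho\|_1\le\|T_1-\id\|_\diamond=\Delta_\diamond(T_1)$, the supremum on the left being over density operators on $\C^m$ and the inequality being just the inclusion of inputs. By Cor.~\ref{cor:TVTV} every $m$-outcome instrument approximating $E$ obeys $\delta_{TV}\ge g(\Delta_{TV})$, where $g(t):=\tfrac1m\big(\sqrt{(1-t)(m-1)}-\sqrt t\big)^2$ for $t\le 1-\tfrac1m$ and $g\equiv 0$ otherwise. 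On $[0,1-\tfrac1m]$ the inner expression $\sqrt{(1-t)(m-1)}-\sqrt t$ is non-negative and strictly decreasing, hence $g$ is non-increasing there. Therefore, if $\Delta_\diamond\le 2-\tfrac2m$ then $\Delta_{TV}\le \Delta_\diamond/2\le 1-\tfrac1m$ and
\[
\delta_{TV}\ \ge\ g(\Delta_{TV})\ \ge\ g(\Delta_\diamond/2)\ =\ \tfrac{1}{2m}\big(\sqrt{(2-\Delta_\diamond)(m-1)}-\sqrt{\Delta_\diamond}\big)^2,
\]
while for $\Delta_\diamond>2-\tfrac2m$ the asserted bound $\delta_{TV}\ge 0$ is trivial. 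This is exactly Eq.~(\ref{eq:optdiamondtv}).

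\textbf{Step 3 (tightness).} For the saturating instruments I would take the $z=0$ members of Eq.~(\ref{eq:optinst}) that already saturate Cor.~\ref{cor:TVTV}, namely $I_i(\rho)=K_i\rho K_i$ with $K_i=\mu\1+\nu\ket i\bra i$ and $m\mu^2+\nu^2+2\mu\nu=1$. Using $\sum_i\ket i\bra i=\1$ a direct calculation gives $T_1=\sum_i K_i\cdot K_i=(1-\nu^2)\id+\nu^2\mathcal Z$, where $\mathcal Z(\rho):=\sum_i\ket i\bra i\rho\ket i\bra i$ is the pinching onto the diagonal; in particular $T_1-\id=\nu^2(\mathcal Z-\id)$. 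Writing $\mathcal Z=\tfrac1m\sum_{j=0}^{m-1}Z^j(\cdot)Z^{-j}$ with $Z=\mathrm{diag}(1,\omega,\dots,\omega^{m-1})$, $\omega=e^{2\pi i/m}$, one gets $\mathcal Z-\id=\tfrac1m\sum_{j=1}^{m-1}\big(Z^j(\cdot)Z^{-j}-\id\big)$, whence $\|\mathcal Z-\id\|_\diamond\le\tfrac1m\sum_{j=1}^{m-1}2=2-\tfrac2m$; evaluating on the maximally entangled state $\ket\Omega=\tfrac1{\sqrt m}\sum_i\ket{ii}$ gives $(\mathcal Z-\id)\otimes\id(\ket\Omega\bra\Omega)=\tfrac1m\sum_i\ket{ii}\bra{ii}-\ket\Omega\bra\Omega$, whose eigenvalues are $\tfrac1m-1$ (once) and $\tfrac1m$ ($m-1$ times), so its trace norm equals $2-\tfrac2m$ and thus $\|\mathcal Z-\id\|_\diamond=2-\tfrac2m$. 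Hence $\Delta_\diamond(T_1)=\nu^2(2-\tfrac2m)=2\nu^2(1-\tfrac1m)=2\Delta_{TV}(T_1)$ for precisely this family (the last equality being $\Delta_{TV}=1-f$ from Lemma~\ref{lem:TV1f} with $f=1-\nu^2(1-\tfrac1m)$). Substituting $\Delta_{TV}=\Delta_\diamond/2$ into the tight trace-norm bound of Cor.~\ref{cor:TVTV} then shows these instruments attain equality in Eq.~(\ref{eq:optdiamondtv}) for every $\Delta_\diamond\in[0,2-\tfrac2m]$ (with $\nu=0$ giving $T_1=\id$, and $\mu=0,\nu=1$ the Lüders instrument at the endpoint $\Delta_\diamond=2-\tfrac2m$, $\delta_{TV}=0$); for $\Delta_\diamond>2-\tfrac2m$ one achieves $\delta_{TV}=0$ by following the Lüders instrument with enough additional disturbance. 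Pulling these $\C^m$-instruments back through the tensoring/compression construction of Lemma~\ref{lem:dimind} yields, for every (possibly degenerate) von Neumann target, an instrument achieving equality.

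\textbf{Main obstacle.} The one genuinely non-routine ingredient is the exact identity $\|\mathcal Z-\id\|_\diamond=2-\tfrac2m$ — equivalently, the assertion that the maximally entangled state is an optimal diamond-norm input for the relevant channels, so that $\Delta_\diamond=2\Delta_{TV}$ on them rather than merely $\Delta_\diamond\ge 2\Delta_{TV}$; this is what makes the tightness line up with the inequality. (An alternative to Step~3 would be to compute $\Delta_\diamond$ directly on the universal two-parameter family of Thm.~\ref{thm:universality}, again via the maximally entangled state, and optimize; this is the same computational point in heavier disguise.) Everything else is bookkeeping around Lemma~\ref{lem:dimind} and the already-established trace-norm (equivalently fidelity) tradeoff.
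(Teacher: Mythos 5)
Your proof is correct and follows the same overall architecture as the paper's: reduce to the non-degenerate case with $d=m$ via Lemma~\ref{lem:dimind}, import the tight trace-norm tradeoff of Cor.~\ref{cor:TVTV} through the general inequality $\Delta_\diamond\geq 2\Delta_{TV}$ (together with monotonicity of the bound), and then show that on the saturating $z=0$ family one actually has $\Delta_\diamond=2\Delta_{TV}$, i.e., entanglement assistance does not improve the distinguishability of $T_1$ from $\id$. The only genuine difference is how this last identity is established. The paper writes $T_1-\id=\gamma_1(\cC-\id)$ with $\cC$ the pinching, reduces the diamond-norm optimization over pure inputs to the symmetric subspace and applies Cauchy--Schwarz to obtain $\Delta_\diamond\leq 2\gamma_1(1-1/d)$, which then matches the lower bound $2\Delta_{TV}$. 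You instead compute $\norm{\cC-\id}_\diamond=2-\tfrac 2m$ exactly, getting the upper bound from the clock-twirl (random-unitary) representation of the pinching and the lower bound by evaluating on the maximally entangled state. Both are valid; your route is a bit more self-contained in that it produces the exact diamond norm of the relevant deviation directly, while the paper's sandwich argument avoids any explicit diamond-norm evaluation. Your handling of the regime $\Delta_\diamond>2-\tfrac 2m$ (L\"uders instrument followed by extra disturbance) and the transport back to degenerate targets through the embedding/compression of Lemma~\ref{lem:dimind} agree with the paper's remarks following the theorem.
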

Note: if the von Neumann measurement is non-degenerate, then equality is again attainable within the one-parameter family of instruments in Eq.~(\ref{eq:optinst}) with $z=0$. In the degenerate case, equality is attainable by such instruments when suitably embedded, as it is done in the proof of Lemma~\ref{lem:dimind}.
\begin{figure}[ht]
	\centering
		\includegraphics[clip, trim=4cm 9cm 4cm 9cm, width=0.900\textwidth]{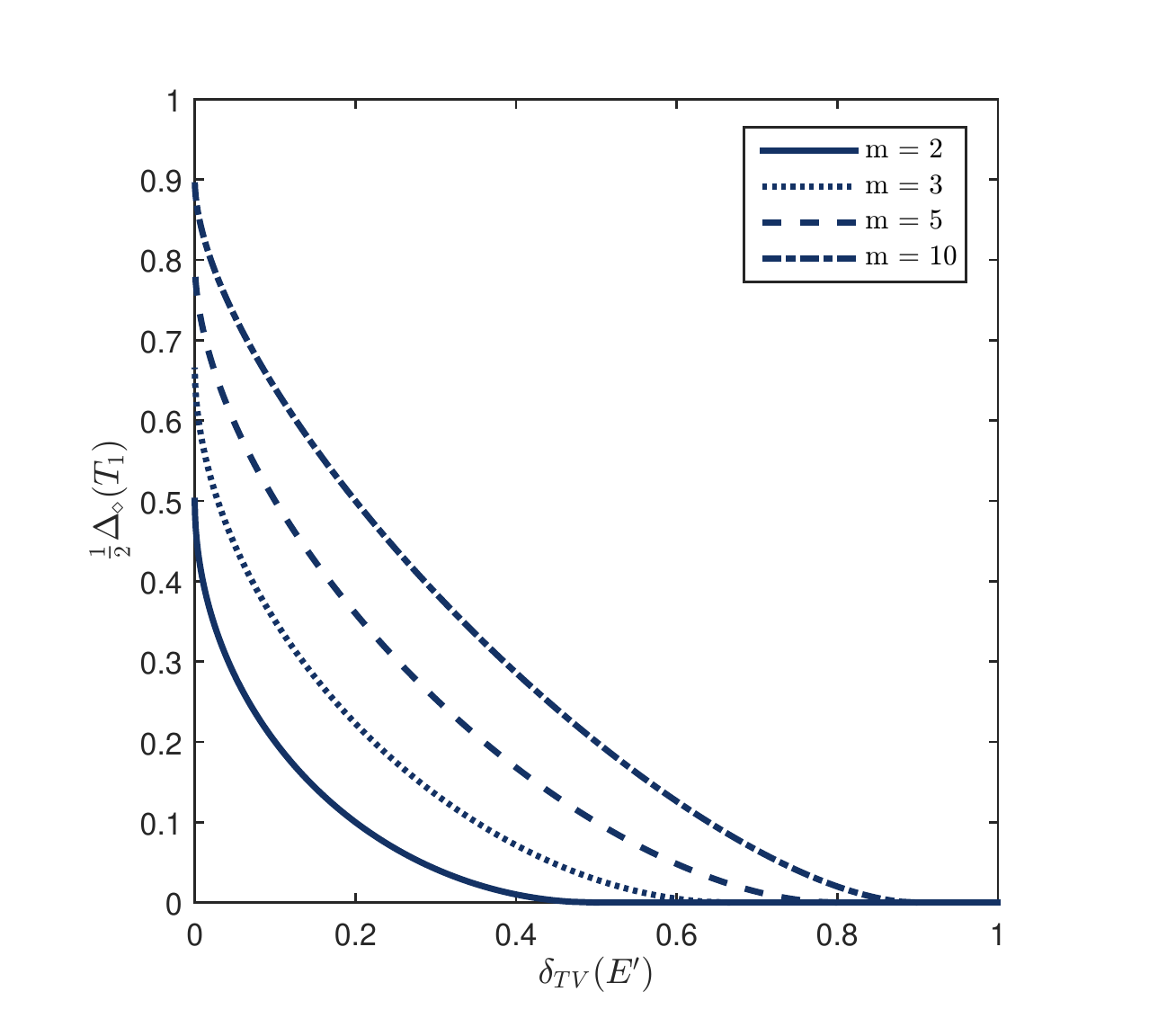}
			\caption{The optimal total variation - diamond norm tradeoff for different numbers of measurement outcome.}
	\label{fig:Plot_TV_Diamond_Tradeoff}
\end{figure}
\begin{proof}
Due to Lemma~\ref{lem:dimind} we can assume that the von Neumann measurement is non-degenerate and acts on a $d=m$  dimensional Hilbert space. We will prove that the accessible region stays the same when replacing $\Delta_\diamond$ with $2\Delta_{TV}$ so that the theorem follows from Cor.~\ref{cor:TVTV}.

 Since $\Delta_\diamond \geq 2\Delta_{TV}$ it suffices to show that this holds with equality for instruments that achieve the optimal $\Delta_{TV} - \delta_{TV}$ curve. Due to  Eq.~(\ref{eq:alpha1z}) and Cor.~\ref{cor:TVTV} we can restrict ourselves to symmetric channels $T_1$ of the form in Eq.~(\ref{eq:commutant}) with $\alpha_1=0$. With $\cC(\cdot):=\sum_{i=1}^d |i\rangle\langle i|\langle i|\cdot|i\rangle$ and using that $(1-\beta_1)=\gamma_1$ we have 
\bea \Delta_\diamond(T_1) &=& \sup_{||\psi||=1} \big|\!\big| \big(T_1\otimes\id_d-\id_{d^2}\big)\big(|\psi\rangle\langle\psi|\big)\big|\!\big|_1\nonumber\\
&=& \sup_{||\psi||=1} \gamma_1 \big|\!\big| |\psi\rangle\langle\psi| -\big(\cC\otimes\id_d\big)\big(|\psi\rangle\langle\psi|\big)\big|\!\big|_1\nonumber\\
&=& 2 \gamma_1 \sup_{||\psi||=||\phi||=1} |\langle\psi|\phi\rangle|^2-\langle\phi|\big(\cC\otimes\id_d\big)\big(|\psi\rangle\langle\psi|\big)|\phi\rangle\nonumber\\
&=& 2\gamma_1 \sup_{||\psi||=1} 1-\langle\psi|\big(\cC\otimes\id_d\big)\big(|\psi\rangle\langle\psi|\big)|\psi\rangle,\nonumber
\eea
where the last two steps follow exactly the argumentation below Eq.~(\ref{eq:tnfid1}). For the remaining optimization problem we write $|\psi\rangle=(\1_d\otimes X)\sum_{i=1}^d |ii\rangle$ where $X\in\cM_d$ is s.t. $\sum_{i=1}^d \langle i|X^*X|i\rangle=||\psi||^2=1$. Then 
$$ \langle\psi|\big(\cC\otimes\id_d\big)\big(|\psi\rangle\langle\psi|\big)|\psi\rangle = \sum_{i=1}^d \big|\langle i|X^*X|i\rangle\big|^2 \geq \frac1d\Big(\sum_{i=1}^d \langle i|X^*X|i\rangle\Big)^2=\frac1d ,$$
where the inequality is an application of Cauchy-Schwarz. Consequently, 
\be\label{eq:cbfinal} \Delta_\diamond (T_1)\leq 2\gamma_1\Big(1-\frac1d\Big)=2\Delta_{TV}(T_1),\ee
where the last inequality uses that $\Delta_{TV}=1-f$ by Lemma~\ref{lem:TV1f} and $f=1-\gamma(1-1/d)$ by Eq.~(\ref{eq:fgamma}). As $\Delta_\diamond$ is also lower bounded by $2 \Delta_{TV}$, equality has to hold in Eq.~(\ref{eq:cbfinal}), which completes the proof. \end{proof}
Note that equality in Eq.~(\ref{eq:cbfinal}) means that entanglement assistance does not increase the distinguishability of the identity channel $\id$ and the channel $T_1$.

%%%%%%%%%%%%%%%%%%%%%%%%%%%%%%%%%%%%%%%%
\section{SDPs for general POVMs}\label{sec:SDP}
%%%%%%%%%%%%%%%%%%%%%%%%%%%%%%%%%%%%%%%%
In this section, we consider the most general case, when the target measurement $E$ is given by an arbitrary POVM. It is then still possible to characterize the achievable region in the $\Delta-\delta$-plane as the set of solutions to some SDP if $\Delta$ and $\delta$ are convex semialgebraic. To this end, let us start with the definition of semialgebraicity.  

A semialgebraic set is a set $S\subseteq\R^n$ defined by a finite sequence of polynomial equations and inequalities or any finite union of such sets. We mainly follow \cite{Bochnak_RealAlgGeo, Karow_2003}.
\begin{definition}[Semialgebraic set {\cite[Definition 3.1.1]{Karow_2003}}]
A semialgebraic subset of $\R^n$ is an element of the Boolean algebra of subsets of $\R^n$ which is generated by the sets 
\begin{equation}
\left\{ \left( x_1, \ldots, x_n \right) \in \R^n \middle\vert p\left(x_1, \ldots, x_n \right) > 0 \right\}, \ \ p\in \R[X_1, \ldots, X_n],
\label{eq:semiset}
\end{equation}
where $\R[X_1, \ldots, X_n]$ denotes the ring of real polynomials in the variables $X_1$, $\ldots$, $X_n$.
\end{definition}
From this definition, it is immediately clear that sets of the form \[\left\{ \left( x_1, \ldots, x_n \right) \in \R^n \middle\vert p\left(x_1, \ldots, x_n \right) \bullet 0 \right\},\] where $\bullet \in \{ <,>, \leq, \geq, =, \neq \}$, $p\in \R[X_1, \ldots, X_n]$, are semialgebraic and that the family of semialgebraic sets is closed under taking complements, finite unions and finite intersections. Moreover, by the Tarski-Seidenberg principle quantification over reals preserves the semialgebraic property \cite[Appendix 1]{Marshall_2008}:
\begin{theorem}[Tarski-Seidenberg, quantifier elimination {\cite[Thm. 1]{Wolf_2011}}]
\label{thm:TS2} 
Given a finite set $\{ p_i (x,z)\}_{i=1}^k$ of polynomial equalities and inequalities with variables $(x,z)\in \R^n \times \R^m$ and coefficients in $\Q$. Let $\phi(x,z)$ be a Boolean combination of the $p_i$'s (using $\vee$, $\wedge$ and $\neg$) and 
\begin{equation}
\Psi(z) := \big( Q_1 x_1 \ldots Q_n x_n : \phi(x,z) \big), \ \ Q_j \in \left\{ \exists, \forall \right\}.
\label{eq:TS1}
\end{equation}
Then there exists a formula $\psi(z)$ which is (i) a quantifier-free Boolean combination of finitely many polynomial (in-)equalities with rational coefficients, and (ii) equivalent in the sense
\begin{equation}
\forall z: \ \ \big( \psi(z) \Leftrightarrow \Psi(z) \big).
\label{eq:TS2}
\end{equation}
Moreover, there exists an effective algorithm which constructs the quantifier-free equivalent $\psi$ of any such formula $\Psi$.
\end{theorem}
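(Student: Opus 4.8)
The plan is to prove the statement in its sharper and more tractable guise, namely \emph{quantifier elimination for the theory of real closed fields}, from which the assertion follows verbatim. First I would reduce to the elimination of a single existential quantifier. A universal quantifier is rewritten via $\forall x\,\phi \equiv \neg\exists x\,\neg\phi$, and since the class of quantifier-free Boolean combinations of polynomial (in-)equalities is closed under negation, conjunction and disjunction, the entire block $Q_1 x_1 \cdots Q_n x_n$ is eliminated innermost-first by induction on $n$, the base case $n=0$ being trivial. When eliminating the innermost quantifier the remaining variables $x_1,\ldots,x_{n-1}$ are simply absorbed into the parameter list alongside $z$, so the output lives in $\Q[x_1,\ldots,x_{n-1},z]$ and the induction closes. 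Thus it suffices to show: for every quantifier-free $\phi(x,z)$ there is a quantifier-free $\psi(z)$, constructed by an explicit procedure over $\Q$, with $\exists x\,\phi(x,z)\Leftrightarrow\psi(z)$.

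For the single quantifier I would regard the finitely many polynomials $p_1,\ldots,p_s$ occurring in $\phi$ as polynomials in the single variable $x$ with coefficients in $\Q[z]$. The decisive observation is that $\exists x\,\phi(x,z)$ holds if and only if, as $x$ ranges over $\R$, the sign vector $(\operatorname{sign} p_1(x),\ldots,\operatorname{sign} p_s(x))$ attains at least one value satisfying the Boolean condition $\phi$. Everything therefore reduces to deciding, for each fixed target sign condition, whether it is \emph{realized} by some real $x$. This realizability is decided by Tarski queries --- signed counts of the real roots of a distinguished polynomial at which the remaining polynomials take prescribed signs --- and, by Sturm's theorem generalized through signed subresultant (Sturm--Habicht) sequences, such counts are computable as sign conditions on the entries of a remainder sequence whose entries are polynomials \emph{with rational coefficients} in the coefficients of the $p_j$, hence in $z$.

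I expect the main obstacle to be the uniform, parametric treatment forced by the coefficients depending on $z$: a leading coefficient of some $p_j(\cdot,z)$ may vanish for particular $z$, dropping the degree and thereby altering the very structure of the subresultant computation. I would handle this by a finite case distinction on the vanishing or non-vanishing of the successive (sub-)leading coefficients, each case being itself a quantifier-free condition on $z$. Within a fixed case all degrees are constant, so the Sturm--Habicht sequence yields genuine polynomials in $z$, and the realizability of each sign condition becomes an explicit Boolean combination of polynomial (in-)equalities in $z$. Disjoining over the finitely many cases and over all sign conditions compatible with $\phi$ produces the desired $\psi(z)$; rationality of coefficients is preserved throughout because resultants, subresultants and Euclidean-type remainders are polynomial operations over $\Q$.

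Finally, effectivity is immediate from the constructive nature of each ingredient: computing subresultant sequences, enumerating the finitely many sign conditions, and performing the case split are all terminating algebraic procedures, so composing them along the induction on the quantifier block yields an algorithm producing $\psi$ from $\Psi$. An alternative route reaching the same conclusion is Collins' cylindrical algebraic decomposition, which systematizes precisely these root-counting and projection steps into a recursive cell decomposition of $\R^n\times\R^m$ on whose cells every $p_i$ has constant sign; I would fall back on it should the bookkeeping of the direct Sturm--Habicht approach become unwieldy.
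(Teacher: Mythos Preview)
Your proposal outlines a standard and essentially correct route to the Tarski--Seidenberg theorem: reduce to a single existential quantifier, then use parametric Sturm--Habicht sequences (with case splits on vanishing leading coefficients) to express realizability of sign conditions as quantifier-free formulas in the parameters, with cylindrical algebraic decomposition as a fallback. This is a legitimate sketch of a classical proof.

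However, the paper does not prove this theorem at all. It is stated as Theorem~\ref{thm:TS2} with a citation to \cite[Thm.~1]{Wolf_2011} and used as a black box; no proof or proof sketch appears anywhere in the paper. The theorem is a foundational result in real algebraic geometry that the authors simply import in order to establish that various distance measures are semialgebraic and hence that the accessible region in the $\Delta$--$\delta$-plane is representable by a semidefinite program. So there is nothing in the paper to compare your argument against: you have supplied a proof where the paper deliberately offers none.
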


\begin{definition}[Semialgebraic function]
Let $S_k \subseteq \R^{n_k}$ be non-empty semialgebraic sets, $k=1,2$. A function $f:S_1 \to S_2$ is said to be semialgebraic if its graph 
\begin{equation}
\left\{(x,z) \in S_1\times S_2 \middle\vert z=f(x)  \right\}
\label{eq:graph}
\end{equation} 
is a semialgebraic subset of $\R^{n_1+n_2}$.
\end{definition}

Using the Tarski-Seidenberg principle, Thm.~\ref{thm:TS2}, it is also possible to prove that the following functions, that are likely to appear in optimization problems, are semialgebraic \cite[Sec. 3.1]{Karow_2003}:
\begin{itemize}
	\item Real polynomial functions are semialgebraic.
	\item Compositions of semialgebraic functions are semialgebraic. Let $S_k \subseteq \R^{n_k}$, $k=1,2,3$, be semialgebraic sets and let $f:S_1 \to S_2$ and $g:S_2 \to S_3$ be semialgebaric functions. Then their composition $g \circ f:S_1 \to S_3$ is semialgebraic.
	%\item Scalar products are semialgebraic.
	%\item Indicator functions of semialgebraic set are semialgebraic.
  \item Let $f: S_1 \to S_2$ be a semialgebraic function, and let $A \subseteq S_1$  (resp. $B \subseteq S_2$) be a semialgebraic set. Then $f(A)$ (resp. $f^{-1}(B)$) is semialgebraic. 
	\item Finite sums and products of semialgebraic functions are semialgebraic. Let $f_1, f_2: S_1 \to \R$ be semialgebraic functions. Then $f_1+f_2, f_1f_2:S_1 \to \R$ are semialgebraic.
	\item Let $f_1, f_2: S_1 \to \R$ be semialgebraic functions. If $f_2^{-1}(\{0\}) \neq S_1$, then $f_1/f_2:S_1\backslash f_2^{-1}(\{0 \}) \to \R$ is semialgebraic.
	\item Let $\cM^{\text{Herm}}_n$ denote the set of all Hermitian $n\times n$-matrices, and for $H \in \cM^{\text{Herm}}_n$ let $\lambda_k(H)$, $k \in \{1,\ldots, n\}$, denote the eigenvalues of $H$ in decreasing order. The functions $\lambda_k(\cdot):\cM^{\text{Herm}}_n \to \R$ are semialgebraic.
	\item The singular value functions $\sigma_k: \C^{m\times n} \to [0, \infty)$, $1\leq k \leq \min \{m,n\}$ are semialgebraic.
\end{itemize}

For the last point, we identify a subset of $\C^{n}$ with a subset of $\R^{2n}$ by separating the real and imaginary parts. Therefore, the notion of a semialgebraic subset of $\C^{m\times n}$  is well defined. 

Furthermore, one can show the following regarding the supremum or infimum of a function:
\begin{lemma}[{\cite[Cor. 3.1.15]{Karow_2003}}]
\label{lem:supinf}
Let $S_k \subseteq \R^{n_k}$ be non-empty semialgebraic sets, $k=1,2$, and $f:S_1 \times S_2 \to \R$ a semialgebraic function. Then $\hat{f}, \check{f}:S_1 \to \R \cup \{-\infty, \infty\}$,
\begin{eqnarray}
\hat{f}(x) &:=& \sup_{y\in S_2} f(x,y) \ \ \text{ and} \\
\check{f}(x) &:=& \inf_{y \in S_2} f(x,y) 
\end{eqnarray}
are both semialgebraic.
\end{lemma}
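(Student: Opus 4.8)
The plan is to reduce the statement to the Tarski--Seidenberg principle (quantifier elimination; cf.\ Thm.~\ref{thm:TS2} and, more to the point, the fact that a projection of a semialgebraic set is semialgebraic). The crucial observation is that the clause ``$t=\sup_{y\in S_2}f(x,y)$'' admits a first-order description over $\R$: it holds precisely when $t$ is an upper bound, $\forall y\,((x,y)\in S_1\times S_2 \Rightarrow f(x,y)\le t)$, and $t$ is the \emph{least} upper bound, $\forall s\,\big(s<t \Rightarrow \exists y\,((x,y)\in S_1\times S_2 \wedge f(x,y)>s)\big)$. Since $S_1$, $S_2$ and the graph of $f$ are semialgebraic, both clauses are first-order formulas built from polynomial (in)equalities and semialgebraic predicates; the only point requiring care is that $f(x,y)$ is not a polynomial, so one replaces every occurrence of $f(x,y)$ by a fresh variable $u$ together with the constraint $(x,y,u)\in\mathrm{graph}(f)$, turning the formula into a genuine Boolean combination of polynomial (in)equalities with a prescribed quantifier prefix.

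First I would form the set $\Gamma:=\{(x,t)\in S_1\times\R : t=\sup_{y\in S_2}f(x,y)\}$ and invoke Tarski--Seidenberg to conclude that $\Gamma$ is semialgebraic. By construction $\Gamma$ is exactly the graph of the restriction of $\hat f$ to $D:=\{x\in S_1:\hat f(x)<\infty\}$, and $D$ is semialgebraic, being the image of $\Gamma$ under the coordinate projection $(x,t)\mapsto x$. Moreover $\hat f(x)>-\infty$ for every $x\in S_1$, because $S_2\neq\emptyset$ and $f$ is $\R$-valued, so the only extended value occurring is $+\infty$, attained on the semialgebraic set $S_1\setminus D$.

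To present $\hat f$ as a single semialgebraic object taking values in $\R\cup\{-\infty,\infty\}$, I would compose with the semialgebraic homeomorphism $\varphi:\R\cup\{-\infty,\infty\}\to[-1,1]$, $\varphi(t)=t/(1+|t|)$ and $\varphi(\pm\infty)=\pm1$; then $\mathrm{graph}(\varphi\circ\hat f)\subseteq S_1\times[-1,1]$ is the union of $\{(x,\varphi(t)):(x,t)\in\Gamma\}$ (semialgebraic, as the image of a semialgebraic set under a semialgebraic map) and $(S_1\setminus D)\times\{1\}$, hence semialgebraic. The infimum case follows with no extra work from $\check f(x)=-\,\widehat{(-f)}(x)$, using that $-f$ is semialgebraic (product with the constant $-1$) and that negation preserves semialgebraicity; alternatively one repeats the argument verbatim with the roles of upper/lower bounds swapped.

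The step I expect to be the main obstacle is not conceptual but a matter of careful bookkeeping: rewriting ``$f(x,y)\le t$'' and ``$f(x,y)>s$'' through $\mathrm{graph}(f)$ and an auxiliary quantified variable, verifying that the resulting sentence really is of the form to which Tarski--Seidenberg applies (a Boolean combination of polynomial (in)equalities with a quantifier block over the right variables), and handling the extended-real values cleanly. Once that is set up, everything else---projections, images under semialgebraic maps, the homeomorphism $\varphi$---is an immediate consequence of the closure properties of semialgebraic sets and functions recorded above.
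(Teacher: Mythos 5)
Your argument is correct. The paper does not actually prove this lemma---it is imported verbatim with the citation \cite[Cor.~3.1.15]{Karow_2003}---and your route (express $t=\sup_{y\in S_2}f(x,y)$ as a first-order formula over $\R$ by routing every occurrence of $f$ through $\mathrm{graph}(f)$, apply Tarski--Seidenberg/projection to get the graph over the finiteness locus, and absorb the value $+\infty$ via the semialgebraic homeomorphism $t\mapsto t/(1+|t|)$ onto $[-1,1]$, with $\check f=-\widehat{(-f)}$ for the infimum) is exactly the standard quantifier-elimination proof underlying that citation, so there is nothing further to reconcile.
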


Using the fact that singular value functions are semialgebraic, it is immediately possible to show the following corollary:
\begin{corollary}[{\cite[Cor. 3.1.24]{Karow_2003}}]
\label{cor:Schatten}
The Schatten $p$-norms $\norm{\cdot}_{p}:\C^{n\times m} \to [0,\infty)$ are semialgebraic for all $p \in [1,\infty) \cap \Q$ and $p=\infty$.
\end{corollary}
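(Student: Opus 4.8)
The plan is to reduce the statement to two facts already available in this section: that the singular value functions $\sigma_{k}$ are semialgebraic, and that the class of semialgebraic functions is closed under composition and under finite sums. The one new ingredient is a power-map lemma: for every positive rational $q=s/t$ with $s,t\in\N$, the map $x\mapsto x^{q}$ from $[0,\infty)$ to $[0,\infty)$ is semialgebraic. I would prove this first by exhibiting its graph as
\[
\left\{(x,y)\in\R^{2}\;\middle\vert\;x\geq 0,\ y\geq 0,\ y^{t}=x^{s}\right\},
\]
which is defined by finitely many polynomial (in)equalities and is therefore a semialgebraic subset of $\R^{2}$; the identification of this set with the graph of $x\mapsto x^{s/t}$ uses only the uniqueness of nonnegative $t$-th roots of nonnegative reals.

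The case $p=\infty$ is then immediate, since $\norm{A}_{\infty}=\sigma_{1}(A)$ is the largest singular value and the $\sigma_{k}$ are semialgebraic by hypothesis.

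For $p\in[1,\infty)\cap\Q$ I would write $p=s/t$ with $s,t\in\N$ and argue in three steps. First, for each $k\in\{1,\dots,\min\{m,n\}\}$ the function $A\mapsto\sigma_{k}(A)^{p}$ is the composition of the semialgebraic map $\sigma_{k}\colon\C^{n\times m}\to[0,\infty)$ with the semialgebraic power map $x\mapsto x^{p}$, hence semialgebraic. Second, the finite sum $g\colon A\mapsto\sum_{k=1}^{\min\{m,n\}}\sigma_{k}(A)^{p}$ is semialgebraic and takes values in $[0,\infty)$. Third, $\norm{A}_{p}=g(A)^{1/p}$, so composing $g$ with the semialgebraic map $y\mapsto y^{1/p}=y^{t/s}$ on $[0,\infty)$ shows that $\norm{\cdot}_{p}$ is semialgebraic. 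Equivalently, one can avoid the outer power map and observe that the set $\{(A,r):r\geq 0,\ r^{s}=g(A)^{t}\}$ is semialgebraic because $g$ is, and that it coincides with the graph of $\norm{\cdot}_{p}$.

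No step presents a real obstacle; the only point requiring any care is the power-map lemma, whose verification is elementary. I note in passing that the hypothesis $p\geq 1$ is not needed for semialgebraicity — it only ensures that $\norm{\cdot}_{p}$ is genuinely a norm — so the same argument in fact yields semialgebraicity of $A\mapsto\big(\sum_{k}\sigma_{k}(A)^{p}\big)^{1/p}$ for every positive rational $p$ as well as for $p=\infty$.
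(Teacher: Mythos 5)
Your proposal is correct and follows essentially the same route as the paper: the paper's proof defers the details to the cited reference and highlights exactly your key lemma, namely that $x\mapsto x^{s/t}$ on the nonnegative reals is semialgebraic because its graph is cut out by $y^{t}=x^{s}$ together with sign constraints. Your completion of the argument — composing this power map with the semialgebraic singular value functions, summing, and taking the outer $1/p$-th power (or, equivalently, describing the graph of $\norm{\cdot}_{p}$ directly) — is precisely the standard filling-in via the closure properties listed before the corollary, and your side remark that $p\geq 1$ is irrelevant for semialgebraicity is also correct.
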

\begin{proof}
Please see \cite[Cor. 3.1.23 and 3.1.19]{Karow_2003} for a full proof. The main idea is to establish that the function $x \mapsto x^{p/q}$, with $x > 0$ and $p,q$ positive integers, is semialgebraic. Its graph is 
\begin{eqnarray*}
&& \left\{ \left(x,z \right) \in \R^2_+ \middle\vert z = x^{\frac{p}{q}} \right\} \\
&=&  \left\{ \left(x,z \right) \in \R^2 \middle\vert z^q - x^{p} =0 \right\}  \cap \R^2_+,
\end{eqnarray*}
which is semialgebraic.
\end{proof}

\begin{corollary}
\label{cor:Schattenpq}
The Schatten $p$-to-$q$ norm-distances of a quantum channel $\Phi \in \cT_d$ to the identity channel
$$\Phi\ \mapsto ||\Phi-\id||_{p\rightarrow q,n}:=\sup_{\rho\in\cS_{dn}}\frac{||(\Phi-\id)\otimes\id_n(\rho)||_q}{||\rho||_p},\quad  n\in\mathbbm{N},$$ 
are semialgebraic for all  $p,q \in [1,\infty) \cap \Q$ and $p,q=\infty$.

The worst-case fidelity distance of a quantum channel $\Phi \in \cT_d$ to the identity channel
$$\Phi\ \mapsto \inf_{\rho\in\cS_{d}} F\left( \Phi(\rho), \rho\right)^2$$
is semialgebraic.

The worst-case $l_p$-distances of a POVM $E' \in \cE_{d,m}$ to the target POVM $E \in \cE_{d,m}$
$$E'\ \mapsto \sup_{\rho\in\cS_{d}}||\left( \Tr[\rho E_i] - \Tr[\rho E_i']\right)_{i=1}^m ||_{p},$$
are semialgebraic for all  $p \in [1,\infty) \cap \Q$ and $p=\infty$.
\end{corollary}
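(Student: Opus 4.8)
The plan is to obtain each of the three functions by repeatedly applying the closure properties collected above: polynomial maps are semialgebraic, and semialgebraicity is preserved under composition, finite sums and products, quotients with non-vanishing denominator, the eigenvalue and singular-value functions, and—crucially—taking suprema and infima over a semialgebraic index set (Lemma~\ref{lem:supinf}). Before anything else I would record that all the domains involved are themselves semialgebraic subsets of the relevant $\R^N$ (under the identification of $\C^{m\times n}$ with $\R^{2mn}$): $\cT_d$ is cut out by the linear trace-preservation condition together with positive semidefiniteness of the Choi matrix, $\cS_d$ by $\tr{\rho}=1$ and $\rho\geq 0$, and $\cE_{d,m}$ by $\sum_{i=1}^m E_i=\1$ together with $E_i\geq 0$ for each $i$; here positive semidefiniteness is semialgebraic since it is the condition $\lambda_k\geq 0$ on the (semialgebraic) eigenvalue functions. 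This is exactly what allows Lemma~\ref{lem:supinf} to be applied with these sets as the index set.

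For the Schatten $p$-to-$q$ norm-distance, I would fix $n$ and observe that $(\Phi,\rho)\mapsto\big((\Phi-\id)\otimes\id_n\big)(\rho)$ is bilinear in the matrix entries, hence polynomial. Post-composing with the Schatten $q$-norm, which is semialgebraic by Cor.~\ref{cor:Schatten}, and dividing by $\norm{\rho}_p$, which is semialgebraic and strictly positive on $\cS_{dn}$, yields a semialgebraic function on $\cT_d\times\cS_{dn}$. Taking the supremum over $\rho\in\cS_{dn}$ and invoking Lemma~\ref{lem:supinf} shows $\Phi\mapsto\norm{\Phi-\id}_{p\to q,n}$ is semialgebraic.

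For the worst-case fidelity, the only non-polynomial ingredient is the matrix square root, so I would first note that the map sending a positive semidefinite matrix to its positive semidefinite square root is semialgebraic, its graph $\{(\rho,\sigma):\sigma\geq 0,\ \sigma^2=\rho\}$ being the intersection of the semialgebraic set $\{\sigma\geq 0\}$ with the polynomial variety $\{\sigma^2=\rho\}$. Then, using $\Phi(\rho)\geq 0$ since $\Phi$ is completely positive, the map $(\Phi,\rho)\mapsto F\big(\Phi(\rho),\rho\big)=\norm{\sqrt{\Phi(\rho)}\sqrt{\rho}}_1$ is a composition of semialgebraic maps (the semialgebraic square root, polynomial matrix multiplication, and the semialgebraic Schatten $1$-norm of Cor.~\ref{cor:Schatten}), so squaring shows $(\Phi,\rho)\mapsto F(\Phi(\rho),\rho)^2$ is semialgebraic on $\cT_d\times\cS_d$; an infimum over $\rho\in\cS_d$ together with Lemma~\ref{lem:supinf} finishes this case.

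For the worst-case $l_p$-distance of POVMs, for fixed target $E$ the map $(E',\rho)\mapsto\big(\tr{\rho E_i}-\tr{\rho E_i'}\big)_{i=1}^m\in\R^m$ is polynomial (indeed linear) in the entries, and the vector $l_p$-norm on $\R^m$ is semialgebraic by exactly the argument in the proof of Cor.~\ref{cor:Schatten}: $x\mapsto\abs{x}$ and $x\mapsto x^{p/q}$ for positive integers $p,q$ are semialgebraic, finite sums of semialgebraic functions are semialgebraic, and so is $x\mapsto x^{1/p}$, while for $p=\infty$ the norm is the finite maximum of the semialgebraic functions $\abs{x_i}$. Composing and taking the supremum over $\rho\in\cS_d$ via Lemma~\ref{lem:supinf} gives the claim. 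The only points that require any care are confirming that $\cT_d$, $\cS_d$, $\cE_{d,m}$ are semialgebraic (so that Lemma~\ref{lem:supinf} is applicable) and the semialgebraicity of the matrix square root entering the fidelity; beyond that the argument is pure bookkeeping with the established closure properties, and I anticipate no genuine obstacle.
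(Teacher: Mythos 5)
Your proposal is correct and follows essentially the same route as the paper's (much terser) proof: semialgebraicity of the relevant state/channel/POVM sets, semialgebraicity of the norms via Cor.~\ref{cor:Schatten}, and closure under suprema/infima via Lemma~\ref{lem:supinf}. Your explicit treatment of the matrix square root for the fidelity (via the graph $\{\sigma\geq 0,\ \sigma^2=\rho\}$) merely fills in a detail the paper leaves implicit, so no substantive difference remains.
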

\begin{proof}
Given that the set of all quantum states is semialgebraic \cite[Lemma 1]{Wolf_2011}, Cor.~\ref{cor:Schatten} together with Lemma~\ref{lem:supinf} immediately yields the statements.
\end{proof}
In particular, the special case of the \emph{diamond norm} $||\cdot||_\diamond:=||\cdot||_{1\rightarrow1,d}$, which we discuss in more  detail below,  and its dual, the \emph{cb-norm} (with $p=q=\infty, n=d$) are semialgebraic.

\begin{theorem}[Helton-Nie conjecture in dimension two {\cite[Thm.~6.8.]{Scheiderer_2012}}]
\label{thm:HeltonNie}
Every convex semialgebraic subset $S$ of $\R^2$ is the feasible set of a SDP. That is, it can be written as
\begin{equation}
S = \left\{ \xi \in \R^2\middle\vert \exists \eta \in \R^m : A + \sum_{i=1}^2 \xi_i B_i + \sum_{j=1}^m \eta_j C_j \geq 0 \right\},
\label{eq:SDPFeasible}
\end{equation}
where $m \geq 0$ and $A$, $B_i$ as well as $C_j$ are real symmetric matrices of the same size. 
\end{theorem}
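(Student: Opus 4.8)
This is Scheiderer's theorem, so what follows is a sketch of the known line of proof rather than a new argument; ultimately I would cite \cite{Scheiderer_2012}. The first step is a sequence of routine reductions, using that the class of spectrahedral shadows --- the sets of the form~(\ref{eq:SDPFeasible}) --- is closed under finite intersections, Minkowski sums, linear images and linear preimages, and contains every polyhedron and every spectrahedron. A lower-dimensional convex semialgebraic subset of $\R^2$ is a point, a segment, a ray or a line, and each of these is easily a spectrahedral shadow, so we may assume $S$ is closed and full-dimensional. The boundary analysis below applies to bounded and unbounded $S$ alike, the recession cone of $S$ being in any case polyhedral (every closed convex cone in $\R^2$ is).

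The second step analyses the boundary. Since $\partial S$ is a one-dimensional semialgebraic set, it decomposes into finitely many open real-algebraic arcs $\Gamma_1,\dots,\Gamma_r$ joined at finitely many corner points, where each $\Gamma_k$ lies on an irreducible affine curve $C_k=\{f_k=0\}$ and, on a neighbourhood $U_k$ of $\Gamma_k$, the set $S$ equals the closed side $\{x\in U_k : f_k(x)\ge 0\}$ (after fixing the sign of $f_k$). A patching argument then writes $S$ as a finite intersection of convex semialgebraic sets $S_k$, each globally agreeing with $S$ near $\Gamma_k$ and relaxing away from it; since a finite intersection of spectrahedral shadows is again one, this reduces the problem to representing a single $S_k$, i.e.\ to the local model of a convex region cut out along one plane algebraic curve by one polynomial inequality.

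The crux is the third step: these local models are spectrahedral shadows by virtue of the representation theory of sums of squares on real algebraic curves. Concretely, a polynomial nonnegative on the relevant one-sided region admits --- after multiplication by a fixed positive definite form and with an a priori degree bound --- a weighted sum-of-squares certificate; equivalently, the associated Lasserre moment relaxation of bounded order is exact. This exactness is genuinely a one-dimensional phenomenon: it rests on the finiteness of Pythagoras numbers and on the structure of the real spectrum of coordinate rings of curves, with additional care needed at the singular points of the $C_k$ and at infinity. Turning such a bounded-degree sum-of-squares (or moment) description into the form~(\ref{eq:SDPFeasible}) is then standard bookkeeping: the relevant polynomials are written as $v^{\top} G v$ for a fixed vector $v$ of monomials and a positive semidefinite Gram matrix $G$ whose free entries become the auxiliary variables, with Schur complements absorbing the multipliers.

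The main obstacle is precisely this third step. In dimension three and higher the analogous sum-of-squares statements fail, and the Helton--Nie conjecture is in fact false beyond the plane, so no proof can circumvent the one-dimensionality of $\partial S$; the hard technical content is therefore imported wholesale from Scheiderer's results on sums of squares on curves and their application to semidefinite representability in \cite{Scheiderer_2012}.
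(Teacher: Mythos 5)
The paper does not prove Thm.~\ref{thm:HeltonNie} at all; it imports it verbatim from \cite[Thm.~6.8]{Scheiderer_2012}, exactly as you ultimately do. Your sketch of Scheiderer's argument (reduction via closure properties of spectrahedral shadows, decomposition of the one-dimensional boundary into arcs on irreducible curves, and exact sums-of-squares/Lasserre representations on curves) is a fair account of the cited proof, so your treatment is consistent with the paper's.
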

The proof of the Helton-Nie conjecture in dimension two can be found in \cite{Scheiderer_2012}.\footnote{The conjecture for larger dimensions was shown to be false in general in \cite{Scheiderer_2017}.} The main observation of this section is a consequence of the previous theorem and the following simple Lemma:

\begin{lemma}
\label{lem:SemiPlane}
If $\Delta$ and $\delta$ are both semialgebraic, then the accessible region in the $\Delta-\delta$-plane is a semialgebraic set.
\end{lemma}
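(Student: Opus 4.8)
The plan is to exhibit the accessible region as the image of a semialgebraic set under the semialgebraic map $(\Delta,\delta)$. First I would recall from the setup that every admissible pair $(E',T_1)$ arises from an instrument $(I_i)_{i=1}^m$, equivalently from a single channel $T:\cM_d\to\cM_d\otimes\cM_{m}$ of the form $T(\rho)=\sum_i I_i(\rho)\otimes|i\rangle\langle i|$, whose marginals are $T_1=\sum_i I_i$ and $E_i'=T_2^*(|i\rangle\langle i|)$. By the Choi--Jamiolkowski correspondence, the set of such channels is parametrized by the set of Choi matrices $J\in\cM_{d}\otimes(\cM_d\otimes\cM_m)$ satisfying $J\geq 0$ together with the linear (hence polynomial) partial-trace constraint that encodes trace preservation. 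This is a spectrahedron, in particular a semialgebraic subset of $\R^{N}$ for $N=2\,(d\cdot d\cdot m)^2$ after splitting into real and imaginary parts; call it $\mathcal{I}$.

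Next I would observe that the two maps $J\mapsto T_1$ and $J\mapsto E'$ are linear in $J$ (they are compositions of the linear inverse Choi map with partial traces and with the adjoint), hence semialgebraic, and by hypothesis $\Delta$ and $\delta$ are semialgebraic functions on $\cT_d$ and $\cE_{d,m}$ respectively. Composing, the map $\Phi:\mathcal{I}\to\R^2$, $\Phi(J):=\big(\Delta(T_1(J)),\,\delta(E'(J))\big)$, is semialgebraic as a composition and finite tuple of semialgebraic functions (using the closure properties listed before Lemma~\ref{lem:supinf}, and the fact that the graph of a tuple is the intersection of the pulled-back graphs). The accessible region in the $\Delta-\delta$-plane is precisely $\Phi(\mathcal{I})$.

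Finally, since $\mathcal{I}$ is semialgebraic and $\Phi$ is a semialgebraic function, the image $\Phi(\mathcal{I})$ is a semialgebraic subset of $\R^2$ — this is exactly the fourth bullet in the list preceding Lemma~\ref{lem:supinf} (equivalently a direct application of the Tarski--Seidenberg principle, Thm.~\ref{thm:TS2}, since $\Phi(\mathcal{I})=\{\xi\in\R^2\mid \exists J\in\R^N:\ J\in\mathcal{I}\ \wedge\ \xi=\Phi(J)\}$ is the projection of a semialgebraic set). This proves the Lemma.

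The only genuinely delicate point is making sure that the parametrization of instruments is itself given by a semialgebraic set and by semialgebraic coordinate maps; I expect this to be the main obstacle to write carefully, but it is routine: positive semidefiniteness of a Hermitian matrix is a polynomial condition on its real entries (e.g.\ nonnegativity of all principal minors), and trace preservation is a linear condition, so $\mathcal{I}$ is a basic closed semialgebraic set, and the inverse Choi map and partial traces are linear hence polynomial. Everything else is a direct invocation of the closure properties of semialgebraic sets and functions recalled above. Note that convexity of $\Delta$ and $\delta$ is not needed for this Lemma; it enters only when combining it with Thm.~\ref{thm:HeltonNie} to obtain the SDP representation.
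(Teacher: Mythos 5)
Your proposal is correct and follows essentially the same route as the paper's proof: parametrize instruments by a semialgebraic (Choi/positivity plus linear trace constraints) set, note that the maps to $T_1$ and $E'$ are algebraic hence semialgebraic, compose with the semialgebraic $\Delta$ and $\delta$, and conclude via the closure of semialgebraic sets under images/projections (Tarski--Seidenberg). You merely make explicit the Choi-matrix parametrization that the paper compresses into the remark that the set of instruments is semialgebraic, and your observation that convexity is not needed here is likewise consistent with the paper.
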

\begin{proof}
Let us denote the accessible region in the $\Delta-\delta$-plane by $S$, i.e.,
\begin{equation*}
S = \left\{ x\in \R^2 \middle\vert \exists I = \{ I_i \}_{i=1}^m : x_1 = \Delta\left(\sum_{i=1}^{m} I_i \right) \wedge x_2 = \delta \left( \left( I_i^\ast(\1)\right)_{i=1}^m\right) \right\}.
\end{equation*}
First note that the set of instruments is semialgebraic. 
The maps $I\mapsto \sum_{i=1}^m I_i$ as well as $I \mapsto (I_i^\ast(\1))_{i=1}^m$ are algebraic and therefore semialgebraic \cite{Bochnak_RealAlgGeo}.  Given that the composition of two semialgebraic maps is semialgebraic \cite[Prop. 2.2.6 (i)]{Bochnak_RealAlgGeo} and that the image of a semialgebraic set under a semialgebraic map is semialgebraic \cite[Prop. 2.2.7.]{Bochnak_RealAlgGeo}, $\Delta\left(\sum_{i=1}^m I_i\right)$ as well as $\delta\left((I_i^\ast(\1))_{i=1}^m\right)$ are semialgebraic. Using the Tarski-Seidenberg principle, Thm.~\ref{thm:TS2}, we arrive at the claim.
\end{proof}

\begin{theorem}[SDP solution for arbitrary target measurements] 
\label{thm:SDPalg}
If $\Delta$ and $\delta$ are both convex and semialgebraic, then the accessible region in the $\Delta-\delta$-plane is the feasible set of a SDP.
\end{theorem}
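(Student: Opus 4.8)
The plan is to combine Lemma~\ref{lem:SemiPlane} with Theorem~\ref{thm:HeltonNie} (the Helton--Nie conjecture in dimension two). The only gap to bridge is that Lemma~\ref{lem:SemiPlane} gives us that the accessible region $S\subseteq\R^2$ is semialgebraic, while Theorem~\ref{thm:HeltonNie} additionally requires $S$ to be convex; so the whole content of the proof is to verify that convexity of $\Delta$ and $\delta$ forces $S$ to be a convex subset of the plane.

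First I would recall the description of $S$ from the proof of Lemma~\ref{lem:SemiPlane}: a point $x=(x_1,x_2)$ lies in $S$ iff there is an instrument $I=\{I_i\}_{i=1}^m$ with $x_1=\Delta(\sum_i I_i)$ and $x_2=\delta((I_i^\ast(\1))_{i=1}^m)$. The key structural fact is that the set of instruments is itself convex: a convex combination $I=\lambda I'+(1-\lambda)I''$ of two instruments (taken componentwise, $I_i=\lambda I_i'+(1-\lambda)I_i''$) is again an instrument, since each $I_i$ remains completely positive and $\sum_i I_i=\lambda\sum_i I_i'+(1-\lambda)\sum_i I_i''$ is still trace preserving. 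Moreover the maps $I\mapsto\sum_i I_i$ and $I\mapsto (I_i^\ast(\1))_{i=1}^m$ are linear (affine), so they carry convex combinations of instruments to the corresponding convex combinations of channels and of POVMs.

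Now take two points $x',x''\in S$, realized by instruments $I'$ and $I''$ respectively, and let $\lambda\in[0,1]$. Setting $I:=\lambda I'+(1-\lambda)I''$, the first coordinate of the point it realizes is $\Delta(\lambda\sum_i I_i'+(1-\lambda)\sum_i I_i'')\leq\lambda\Delta(\sum_i I_i')+(1-\lambda)\Delta(\sum_i I_i'')=\lambda x_1'+(1-\lambda)x_1''$ by convexity of $\Delta$, and similarly the second coordinate is $\leq\lambda x_2'+(1-\lambda)x_2''$ by convexity of $\delta$. Hence $I$ realizes a point that is componentwise dominated by $\lambda x'+(1-\lambda)x''$. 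Since the accessible region is automatically upward closed in each coordinate (one can always worsen an approximate measurement or increase the disturbance, e.g.\ by post-composing with a depolarizing admixture, without leaving the set of instruments), it follows that $\lambda x'+(1-\lambda)x''\in S$. This was in fact already noted right after Assumption~\ref{assum:2}: ``convexity of the two measures $\Delta$ and $\delta$ implies that the region in the $\Delta-\delta$-plane that is accessible by quantum instruments is a convex set.''

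Having established that $S$ is convex (this step) and semialgebraic (Lemma~\ref{lem:SemiPlane}), Theorem~\ref{thm:HeltonNie} applies verbatim: $S$ is the feasible set of a SDP, i.e.\ it admits a representation of the form~(\ref{eq:SDPFeasible}). I do not anticipate a serious obstacle here; the only point that deserves a sentence of care is the ``upward closed'' remark used to pass from ``$S$ contains a point below $\lambda x'+(1-\lambda)x''$'' to ``$S$ contains $\lambda x'+(1-\lambda)x''$'' — but this is immediate from the definition of the accessible region together with the existence, for any instrument, of a dominating instrument with prescribed larger $\Delta$ and $\delta$ values (for instance composing the system part with $t\,\id+(1-t)\,\mathrm{tr}(\cdot)\1/d$ and mixing the outcome statistics towards uniform).
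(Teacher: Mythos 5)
Your route is the paper's own: semialgebraicity of the accessible region via Lemma~\ref{lem:SemiPlane}, convexity of that region from convexity of $\Delta$ and $\delta$, and then Thm.~\ref{thm:HeltonNie}. The published proof is exactly this two-step combination and, for the convexity part, it does no more than assert it (echoing the remark at the end of Sec.~\ref{sec:dist}); so in substance you have reproduced the paper's argument.

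The one step you add --- an actual derivation of convexity --- is also the one step that does not hold up as written. Mixing the two instruments only produces a point componentwise dominated by $\lambda x'+(1-\lambda)x''$, and your promotion of this to membership of the segment point rests on the claim that any instrument can be ``worsened'' to prescribed larger values of both coordinates. Under the theorem's bare hypotheses (convexity and semialgebraicity only --- no monotonicity, faithfulness or continuity of the measures) the constructions you name do not deliver this. Post-composing the system output with $t\,\id+(1-t)\,\mathrm{tr}[\cdot]\,\1/d$ gives $T_1\mapsto t\,T_1+(1-t)\,\mathrm{tr}[\cdot]\1/d$, so by convexity the $\Delta$-value along this path never exceeds $\max\{\Delta(T_1),\Delta(\mathrm{tr}[\cdot]\1/d)\}$; the paper's own measure $\hat{\Delta}$ introduced before Cor.~\ref{cor:z} vanishes on the completely depolarizing channel, so this admixture need not increase the disturbance at all. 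Likewise, mixing the outcome statistics towards uniform can never raise $\delta_{TV}$ above $\max\{v,1-1/m\}$, while the target value $b$ may be as large as $1$. So ``upward closedness'' of the exact accessible region is not automatic, and a self-contained proof of its convexity requires a genuine argument (or an additional hypothesis) rather than the heuristic you sketch. That said, this soft spot sits in an assertion the paper itself leaves unproved; your use of Lemma~\ref{lem:SemiPlane} and of Thm.~\ref{thm:HeltonNie} is exactly as in the paper, and given convexity the conclusion follows as you say.
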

\begin{proof}
If $\Delta$ and $\delta$ are convex and semialgebraic, then the whole region in the $\Delta-\delta$-plane that is accessible by quantum instruments is a convex semialgebraic subset of $\R^2$ by Lemma~\ref{lem:SemiPlane}. By Thm.~\ref{thm:HeltonNie}, it must thus be the feasible set of a SDP.
\end{proof}

In particular, if we consider a Schatten $p$-to-$q$-norm distance, with $p$ and $q$ rational, to describe the disturbance caused to the quantum system and a worst-case $l_p$-norm distances, with rational $p$, to quantify the measurement error, the accessible region in the $\Delta-\delta$-plane is the feasible set of a SDP. 

Unfortunately, we do not know how to make the results of \cite{Scheiderer_2012} constructive. That is while Thm.~\ref{thm:SDPalg} proves the existence of a SDP, we do not have a way of making the SDP explicit. 
 
\vspace*{5pt}
\paragraph{\bf SDP for the diamond norm tradeoff} 
We now explicitly state the SDP yielding the optimal tradeoff curve in the case of a general POVM for the worst-case $l_\infty$-distance and the diamond norm. 
This particular example does not rely on the general result of Thm~\ref{thm:SDPalg}, since the $l_\infty$-norm as well as the diamond norm are already well-suited to SDP formulation.
 Please note that on the measurement error side, we use the worst-case $l_\infty$-norm to quantify the distance between the two probability distributions,
\begin{equation}
\delta_{l_\infty} := \sup_{\rho}\max_i \Big|\tr{E_i'\rho}- \tr{E_i\rho}\Big|. 
\end{equation}
In this setting the optimization problem, quantifying the information-disturbance tradeoff, is given as: \\ 
Compute for a given target POVM $E = \left\{ E_i \right\}_{i=1}^m$ and $\lambda \in \left[ 0, 1 \right]$
\bea
\label{eq:OptProb}
	\nu (E,\lambda)  := &  & \min_{\left\{ I_i \right\}_{i=1}^m} \norm{\sum^m_{i=1} I_i - \id}_\diamond  \\
	& \text{such that}  & \norm{I^\ast_i(\1)-E_i}_\infty \leq \lambda \ \ \forall i, \nonumber \\
	& & I_i \text{ is completely positive} \ \  \forall i \text{ and }  \nonumber \\
	& & \sum^m_{i=1} I^\ast_i(\1) = \1. \nonumber
\eea
In the following, let us the define the Choi matrix for any linear map $T: \cM_d \to \cM_{d'}$ as 
\begin{equation}
J(T):= \left( T \otimes \id_d \right) \left( \sum_{i,j=1}^d \kb{ii}{jj} \right).
\label{eq:J}
\end{equation}
\begin{theorem}
\label{thm:SDP}
For a given target POVM $E = \left\{ E_i \in \cM_d \right\}_{i=1}^m$ and $\lambda \in \left[0, 1\right]$, the optimization problem $\nu (E, \lambda)$ given in Eq.~(\ref{eq:OptProb}), can be formulated as a SDP $(\phi, C, D)$, where $\phi:\cM_{\hat{d}} \to \cM_{\check{d}}$ is a hermiticity preserving map, $C=C^\ast \in \cM_{\hat{d}}$ and $D=D^\ast \in \cM_{\check{d}}$, with dimensions $\hat{d} = (m+4)d^2+2(m+2)d$ and $\check{d} = 2+(m+2)d^2$. The primal and the dual SDP problem are given as follows: \\
\begin{equation*}
  \begin{split}
                  \text{\emph{Primal SDP problem}} & \\ & \\
    \text{maximize } \  \ & \tr{CX} \\
    \text{subject to } \ \ &
    \begin{aligned}[t]
&\phi(X) = D \\
&X \geq 0
    \end{aligned}
  \end{split}
  \qquad \qquad 
  \begin{split}
                  \text{\emph{Dual SDP problem}} & \\ & \\
    \text{minimize } \ \ & \tr{DY} \\
    \text{subject to } \ \ &
    \begin{aligned}[t]
&\phi^\ast(Y) \geq C \\
&Y = Y^\dagger
    \end{aligned}
  \end{split}
\end{equation*}
where the hermiticity preserving map $\phi:\cM_{\hat{d}} \to \cM_{\check{d}}$ is 
\bea
\phi(X) & = & \tr{w_0} \oplus \tr{w_1} \oplus \left(A+Z_0-\1 \otimes w_0\right) \oplus \left(B+Z_1-\1 \otimes w_1\right) \oplus \nonumber \\
&& \bigoplus_{i=1}^m \left( M+M^\ast + \1 \otimes \left(F_i-\widetilde{F}_i \right) +G_i+\1 \otimes \left(H-\widetilde{H}\right)\right),
\eea
with 
\bea
X & := & \begin{pmatrix}
	A & M \\
	M^\ast & B
\end{pmatrix} 
\oplus w_0 \oplus w_1 \oplus Z_0 \oplus Z_1 \oplus  \nonumber \\
& & \bigoplus_{i=1}^m F_i \oplus  \bigoplus_{i=1}^m \widetilde{F}_i \oplus \bigoplus_{i=1}^m G_i \oplus H \oplus \widetilde{H}.
\eea
The adjoint of the map $\phi$ is 
\bea
\phi^\ast(Y) &:=&  \begin{pmatrix}
	Y_0 & \sum_{i=1}^m J(I_i) \\
	\sum_{i=1}^m J(I_i) & Y_1
\end{pmatrix}  \oplus  \left( \lambda_0 \1 - \Tr_{1} \left[Y_0\right] \right)  \oplus  \left( \lambda_1 \1 - \Tr_{1}\left[ Y_1\right] \right) \oplus \nonumber \\
& &   Y_0  \oplus Y_1  \oplus \bigoplus_{i=1}^m \Tr_{1} \left[J(I_i)\right]  \oplus \bigoplus_{i=1}^m -\Tr_{1} \left[J(I_i)\right] \oplus \bigoplus_{i=1}^m J(I_i) \oplus \nonumber\\
& & \sum^m_{i=1} \Tr_{1}\left[J(I_i)\right]  \oplus  -\sum^m_{i=1} \Tr_{1}\left[J(I_i)\right],
\eea
with
\begin{equation}
Y :=  \lambda_0  \oplus  \lambda_1  \oplus  Y_0  \oplus  Y_1  \oplus \bigoplus_{i=1}^m   J(I_i). 
\end{equation}
Furthermore,
\begin{equation}
D:=  \frac{1}{2}  \oplus  \frac{1}{2}  \oplus  0  \oplus   0   \oplus \bigoplus_{i=1}^m 0
\end{equation}
and
\bea
C & := & \begin{pmatrix}
	0 & J(\id) \\ 
	J(\id) & 0
\end{pmatrix}  \oplus  0  \oplus  0 \oplus  0  \oplus  0  \oplus \bigoplus_{i=1}^m \left(-\lambda \1 +E_i^T \right)  \oplus \nonumber \\ 
& &  \bigoplus_{i=1}^m \left(-\lambda \1 -E_i^T \right)  \oplus \bigoplus_{i=1}^m 0  \oplus  \1  \oplus  -\1. 
\eea
\end{theorem}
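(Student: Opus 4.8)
The plan is to recognise $\nu(E,\lambda)$ as a minimisation over the Choi matrices of the maps $I_i$ that is constrained only by linear matrix inequalities, and then to cast it into the standard conic form of a semidefinite program by combining three standard reductions: the Choi parametrisation of complete positivity, the linearisation of operator-norm constraints, and Watrous' semidefinite formulation of the diamond norm.

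First I would replace each $I_i$ by its Choi matrix $J(I_i)$ from Eq.~(\ref{eq:J}). Then $I_i$ is completely positive iff $J(I_i)\ge 0$, and a short computation with the convention of Eq.~(\ref{eq:J}) gives $\Tr_1[J(I_i)]=(I_i^\ast(\1))^T$, the partial trace being over the output factor. Hence trace preservation $\sum_i I_i^\ast(\1)=\1$ becomes the linear equation $\sum_i\Tr_1[J(I_i)]=\1$, which I split into the two Loewner inequalities $\pm\big(\1-\sum_i\Tr_1[J(I_i)]\big)\ge 0$; and since transposition is an isometry for $\norm{\cdot}_\infty$, the constraint $\norm{I_i^\ast(\1)-E_i}_\infty\le\lambda$ becomes the pair $\pm\big(\Tr_1[J(I_i)]-E_i^T\big)\le\lambda\1$. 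For the objective I would invoke Watrous' primal SDP for the diamond norm of a hermiticity-preserving map $\Phi$: $\norm{\Phi}_\diamond$ equals the minimum of $\tfrac12\norm{\Tr_1 W_0}_\infty+\tfrac12\norm{\Tr_1 W_1}_\infty$ over $W_0,W_1\ge 0$ with $\left(\begin{smallmatrix}W_0 & J(\Phi)\\ J(\Phi) & W_1\end{smallmatrix}\right)\ge 0$, applied to $\Phi=\sum_i I_i-\id$ so that $J(\Phi)=\sum_i J(I_i)-J(\id)$. Finally I would linearise the two scalars $\norm{\Tr_1 W_j}_\infty$ by auxiliary real variables $\lambda_j$ with $\lambda_j\1\ge\Tr_1 W_j$ (legitimate since $W_j\ge 0$), making the objective the linear functional $\tfrac12(\lambda_0+\lambda_1)$.

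At this stage the problem is already the stated \emph{dual} SDP: collecting $\lambda_0,\lambda_1$, $Y_0:=W_0$, $Y_1:=W_1$ and $J(I_1),\dots,J(I_m)$ into a single block-diagonal Hermitian variable $Y$, the constraints above assemble into one inequality $\phi^\ast(Y)\ge C$ whose summands are exactly the Watrous block, the blocks $\lambda_j\1-\Tr_1[Y_j]$, the blocks $Y_j$, the blocks $\pm\Tr_1[J(I_i)]$, the blocks $J(I_i)$ and the blocks $\pm\sum_i\Tr_1[J(I_i)]$, while the fixed symmetric $C$ carries only the off-diagonal $J(\id)$, the $\mp\lambda\1\pm E_i^T$ and the $\pm\1$ entries, and $\tfrac12(\lambda_0+\lambda_1)=\Tr[DY]$ with $D=\tfrac12\oplus\tfrac12\oplus 0\oplus\cdots$. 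The \emph{primal} SDP is then the conic dual of this: one attaches a positive-semidefinite multiplier/slack block to each summand of $\phi^\ast(Y)\ge C$ — these are precisely $\left(\begin{smallmatrix}A & M\\ M^\ast & B\end{smallmatrix}\right)$, $w_0,w_1$, $Z_0,Z_1$, $F_i,\widetilde F_i$, $G_i$, $H,\widetilde H$ in $X$ — and computes $\phi=(\phi^\ast)^\ast$ using that the adjoint of $\Tr_1$ is $A\mapsto\1\otimes A$. This reproduces every term of the stated $\phi(X)$ (the $\1\otimes w_j$ from the $-\Tr_1[Y_j]$ terms, the $\1\otimes(F_i-\widetilde F_i)$ and $\1\otimes(H-\widetilde H)$ from the $\pm\Tr_1[J(I_i)]$ terms, and so on), while the equality $\phi(X)=D$ encodes $\Tr[w_j]=\tfrac12$ and the vanishing of the remaining coefficient blocks; counting the block sizes then gives $\hat d=(m+4)d^2+2(m+2)d$ and $\check d=2+(m+2)d^2$. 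To finish I would observe that the dual SDP, by construction, has feasible set in bijection with the feasible instruments, its value at such a point being $\norm{\sum_i I_i-\id}_\diamond$ by exactness of Watrous' SDP, so its optimum is $\nu(E,\lambda)$; strong duality with the primal then follows from Slater's condition, which holds for $\lambda>0$ by taking the L\"uders instrument $I_i(\rho)=\sqrt{E_i}\,\rho\,\sqrt{E_i}$ (which meets every error constraint with zero slack) together with $W_0=W_1=t\1$ and $\lambda_0=\lambda_1=t'$ for $t,t'$ large, and extends to $\lambda=0$ by a routine limiting argument.

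The main obstacle is not conceptual but the index bookkeeping in writing $\phi$ and $\phi^\ast$ explicitly: one must propagate the transpose forced by the Choi convention of Eq.~(\ref{eq:J}) consistently through $I_i^\ast(\1)$, $E_i$ and every partial trace; keep straight which tensor factor each $\Tr_1$ acts on inside the Watrous block versus inside the error and trace-preservation blocks; and verify that the multiplier/slack pairing reproduces exactly the $\1\otimes w_j$, $\1\otimes(F_i-\widetilde F_i)$ and $\1\otimes(H-\widetilde H)$ terms of $\phi(X)$ with the correct signs in $C$. A secondary point worth stating carefully is that the diamond-norm SDP used as a black box is itself exact, so its substitution into the larger program introduces no gap; the rest is the routine dictionary between conic programs in equality form and in inequality form.
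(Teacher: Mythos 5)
Your proposal is correct and follows essentially the same route as the paper's own proof: Choi parametrisation of the $I_i$ (with $\Tr_1[J(I_i)]=(I_i^\ast(\1))^T$ explaining the transposes), splitting the equality and $\l_\infty$ constraints into pairs of Loewner inequalities, inserting Watrous' exact diamond-norm SDP with the scalars $\lambda_0,\lambda_1$, assembling everything into the stated dual SDP $(\phi^\ast,C,D)$, and recovering the primal map $\phi$ from the adjoint relation $\tr{\phi^\ast(Y)X}=\tr{Y\phi(X)}$. The only difference is cosmetic: you also sketch the Slater/strong-duality argument, which the paper defers to the subsequent proposition rather than including in this theorem.
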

\begin{proof}
The diamond norm can be expressed as a SDP itself \cite{Watrous_2009, Watrous_2012}, 
\begin{align*}
\norm{\id - \sum_{i=1}^m I_i}_\diamond = & && \min_{Y_0, Y_1 \in \cM_d \otimes \cM_d} \frac{1}{2} \left[ \norm{\Tr_{1}\left[ Y_0 \right]}_\infty + \norm{\Tr_1\left[ Y_1 \right]}_\infty  \right] \\
& \text{ such that } &&
\begin{pmatrix}
	Y_0 & J\left( \id - \sum_{i=1}^m I_i \right) \\
	J\left( \id - \sum^m_{i=1} I_i \right) & Y_1 
\end{pmatrix} 
\geq 0 \ \ \text{ and} \nonumber \\
&&& Y_0,Y_1 \geq 0, \nonumber
\end{align*}
where $\Tr_1$ denotes the partial trace over the first system.
Using Watrous SDP for the diamond norm in the form of \cite[p. 11]{Watrous_2012} gives 
\begin{align*}
\nu \left( E, \lambda \right) = & \text{ minimize } && \frac{1}{2}\left[ \lambda_0 +\lambda_1 \right]  \\
& \text{ such that } &&
\begin{pmatrix}
	Y_0 & \sum_{i=1}^m J(I_i) \\
	\sum_{i=1}^m J(I_i) & Y_1
\end{pmatrix} 
\geq
\begin{pmatrix}
	0 & J(\id) \\
	J(\id) & 0
\end{pmatrix} \nonumber \\
&&& \lambda_0 \1 - \Tr_{1} \left[Y_0 \right] \geq 0 \nonumber \\
&&& \lambda_1 \1 - \Tr_{1} \left[Y_1 \right] \geq 0 \nonumber \\
&&& Y_0, Y_1 \geq 0 \nonumber \\
&&& \Tr_{1}\left[J(I_i)\right] \geq -\lambda \1 + E_i^T \ \ \forall i \nonumber \\
&&& -\Tr_{1}\left[J(I_i)\right] \geq -\lambda \1 - E_i^T \ \ \forall i \nonumber \\
&&& J(I_i) \geq 0 \ \ \forall i \nonumber \\
&&& \sum_{i=1}^m \Tr_{1}\left[J(I_i)\right] \geq \1 \nonumber \\
&&& -\sum_{i=1}^m \Tr_{1}\left[J(I_i)\right] \geq -\1.  \nonumber
\end{align*}
We would like to write this as a SDP in the form
\begin{align*}
\text{minimize } \ \ & \tr{DY} \\
    \text{subject to } \ \ 
&\phi^\ast(Y) \geq C, \\
&Y = Y^\dagger.
\end{align*}
 Collecting all variables that we optimize over yields $Y \in \C \oplus \C \oplus \cM_{d^2} \oplus \cM_{d^2} \oplus \bigoplus_{i=1}^m \cM_{d^2}$ as
\begin{equation*}
Y :=  \lambda_0  \oplus  \lambda_1  \oplus  Y_0  \oplus  Y_1  \oplus \bigoplus_{i=1}^m   J(I_i). 
\end{equation*}
Furthermore, we set $D \in \C \oplus \C \oplus \cM_{d^2} \oplus \cM_{d^2} \oplus \bigoplus_{i=1}^m \cM_{d^2}$ as
\begin{equation*}
D:=  \frac{1}{2}  \oplus  \frac{1}{2}  \oplus  0_{d^2}  \oplus   0_{d^2}   \oplus \bigoplus_{i=1}^m 0_{d^2}.
\end{equation*}
Similarly, set $\phi^\ast(Y) \in \cM_{2d^2} \oplus \cM_d \oplus \cM_d \oplus \cM_{d^2}  \oplus \cM_{d^2} \oplus \bigoplus_{i=1}^m \cM_d \oplus \bigoplus_{i=1}^m \cM_d \oplus \bigoplus_{i=1}^m \cM_{d^2} \oplus \bigoplus_{i=1}^m \cM_d  \oplus \bigoplus_{i=1}^m \cM_d$ to be
\bea
\phi^\ast(Y) &:=&  \begin{pmatrix}
	Y_0 & \sum_{i=1}^m J(I_i) \\
	\sum_{i=1}^m J(I_i) & Y_1
\end{pmatrix}  \oplus  \left( \lambda_0 \1_d - \Tr_{1} \left[Y_0\right] \right)  \oplus  \left( \lambda_1 \1_d - \Tr_{1}\left[ Y_1\right] \right) \oplus \nonumber \\
& &   Y_0  \oplus Y_1  \oplus \bigoplus_{i=1}^m \Tr_{1} \left[J(I_i)\right]  \oplus \bigoplus_{i=1}^m -\Tr_{1} \left[J(I_i)\right] \oplus \bigoplus_{i=1}^m J(I_i) \oplus \nonumber\\
& & \sum^m_{i=1} \Tr_{1}\left[J(I_i)\right]  \oplus  -\sum^m_{i=1} \Tr_{1}\left[J(I_i)\right],\nonumber
\eea
and we define $C \in \cM_{2d^2} \oplus \cM_d \oplus \cM_d \oplus \cM_{d^2}  \oplus \cM_{d^2} \oplus \bigoplus_{i=1}^m \cM_d \oplus \bigoplus_{i=1}^m \cM_d \oplus \bigoplus_{i=1}^m \cM_{d^2} \oplus \bigoplus_{i=1}^m \cM_d  \oplus \bigoplus_{i=1}^m \cM_d$ as
\bea
C & := & \begin{pmatrix}
	0 & J(\id) \\ 
	J(\id) & 0
\end{pmatrix}  \oplus  0_d  \oplus  0_d \oplus  0_{d^2}  \oplus  0_{d^2}  \oplus \bigoplus_{i=1}^m \left(-\lambda \1 +E_i^T \right)  \oplus \nonumber \\ 
& &  \bigoplus_{i=1}^m \left(-\lambda \1 -E_i^T \right)  \oplus \bigoplus_{i=1}^m 0_{d^2}  \oplus  \1_d  \oplus  -\1_d. \nonumber
\eea
Therefore, the optimization problem $\nu(E,\lambda)$ is a SDP indeed. In order to state the dual SDP problem, define $X \in \cM_{2d^2} \oplus \cM_d \oplus \cM_d \oplus \cM_{d^2}  \oplus \cM_{d^2} \oplus \bigoplus_{i=1}^m \cM_d \oplus \bigoplus_{i=1}^m \cM_d \oplus \bigoplus_{i=1}^m \cM_{d^2} \oplus \bigoplus_{i=1}^m \cM_d  \oplus \bigoplus_{i=1}^m \cM_d$ to be
\bea
X & := & \begin{pmatrix}
	A & M \\
	M^\ast & B
\end{pmatrix} 
\oplus w_0 \oplus w_1 \oplus Z_0 \oplus Z_1 \oplus  \nonumber \\
& & \bigoplus_{i=1}^m F_i \oplus  \bigoplus_{i=1}^m \widetilde{F}_i \oplus \bigoplus_{i=1}^m G_i \oplus H \oplus \widetilde{H}.\nonumber
\eea
Using the fact that $\tr{\phi^\ast(Y)X} = \tr{Y\phi(X)}$ lets us construct $\phi$ such that $\phi(X) \in \C \oplus \C \oplus \cM_{d^2} \oplus \cM_{d^2} \oplus \bigoplus_{i=1}^m \cM_{d^2}$ is
\bea
\phi(X) & = & \tr{w_0} \oplus \tr{w_1} \oplus \left(A+Z_0-\1 \otimes w_0\right) \oplus \left(B+Z_1-\1 \otimes w_1\right) \oplus \nonumber \\
&& \bigoplus_{i=1}^m \left( M+M^\ast + \1 \otimes \left(F_i-\widetilde{F}_i \right) +G_i+\1 \otimes \left(H-\widetilde{H}\right)\right).\nonumber
\eea
\end{proof}
\begin{proposition}
For the above SDP $\left(\phi, C, D\right)$ the Slater-type strong duality holds, such that
\begin{equation}
\sup_X \tr{CX} = \inf_Y \tr{DY}.
\end{equation}
\end{proposition}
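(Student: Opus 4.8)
The plan is to verify the standard Slater condition for semidefinite programs in the form used in \cite{Watrous_2012}: for an SDP $(\phi,C,D)$ with primal problem ``maximize $\tr{CX}$ subject to $\phi(X)=D$, $X\geq 0$'' and dual problem ``minimize $\tr{DY}$ subject to $\phi^\ast(Y)\geq C$, $Y=Y^\dagger$'', if the dual is feasible and the primal is \emph{strictly} feasible (there is an $X>0$ with $\phi(X)=D$), then the two optimal values coincide and the dual optimum is attained. Weak duality $\sup_X\tr{CX}\leq\inf_Y\tr{DY}$ always holds, since for a primal-feasible $X$ and a dual-feasible $Y$ one has $\tr{CX}\leq\tr{\phi^\ast(Y)X}=\tr{Y\phi(X)}=\tr{DY}$; in particular the primal value is finite once the dual is feasible. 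So the whole task reduces to exhibiting (i) a dual-feasible point and (ii) a strictly primal-feasible point.

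Step (i) is immediate: the dual problem is precisely the instrument optimization $\nu(E,\lambda)$ of Eq.~(\ref{eq:OptProb}), and the measure-and-prepare instrument $I_i(\rho):=\tr{E_i\rho}\,\sigma_i$ with arbitrary fixed states $\sigma_i$ is completely positive, satisfies $I_i^\ast(\1)=E_i$ (hence $\norm{I_i^\ast(\1)-E_i}_\infty=0\leq\lambda$ for every $\lambda\in[0,1]$), and obeys $\sum_i I_i^\ast(\1)=\sum_i E_i=\1$; so it is feasible.

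Step (ii) is the only place where one has to compute. Reading off the blocks of $\phi$ from Thm.~\ref{thm:SDP}, the equation $\phi(X)=D$ says $\tr{w_0}=\tr{w_1}=\frac12$, $A+Z_0=\1\otimes w_0$, $B+Z_1=\1\otimes w_1$, and $M+M^\ast+\1\otimes(F_i-\widetilde F_i)+G_i+\1\otimes(H-\widetilde H)=0$ for all $i$. These can be met with strictly positive blocks by taking $w_0=w_1=\frac{1}{2d}\1_d$, splitting $\1\otimes w_0=\frac{1}{2d}\1_{d^2}$ evenly as $A=Z_0=\frac{1}{4d}\1_{d^2}$ and likewise $B=Z_1=\frac{1}{4d}\1_{d^2}$, setting $M=0$ so that the $2d^2\times 2d^2$ block equals $A\oplus B>0$, and choosing $H=\widetilde H=\1_d$, $F_i=\1_d$, $\widetilde F_i=2\1_d$, $G_i=\1_{d^2}$, which makes $\1\otimes(F_i-\widetilde F_i)+G_i=-\1_{d^2}+\1_{d^2}=0$. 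Every summand of the resulting $X$ is a positive multiple of an identity, hence $X>0$, while $\phi(X)=D$ by construction.

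With both feasibility statements in hand, Slater's theorem yields $\sup_X\tr{CX}=\inf_Y\tr{DY}$, which is the assertion (and, as a byproduct, that $\nu(E,\lambda)$ is attained by an optimal instrument). I do not expect a genuine obstacle, but two points need care. First, the argument must be run on the primal side: trace preservation is encoded in the dual through the pair $\sum_i\Tr_{1}[J(I_i)]\geq\1$ and $-\sum_i\Tr_{1}[J(I_i)]\geq-\1$, which forces the corresponding blocks of $\phi^\ast(Y)-C$ to vanish, so the dual has no strictly feasible point and Slater cannot be invoked on that side. Second, the block-by-block bookkeeping of $\phi$, $C$ and $D$ against the statement of Thm.~\ref{thm:SDP} should be double-checked when extracting the constraint equations in step (ii), since that is where a transcription slip would propagate.
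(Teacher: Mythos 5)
Your proof is correct and takes essentially the same route as the paper, whose own argument simply asserts the existence of an interior point $X>0$ with $\phi(X)=D$ and of a feasible $Y=Y^\ast$ with $\phi^\ast(Y)\geq C$ and then invokes Slater's theorem. In fact your version is more explicit than the paper's: you actually exhibit both points (the strictly positive block-diagonal $X$ and, on the dual side, the measure-and-prepare instrument $I_i(\rho)=\tr{E_i\rho}\sigma_i$, for which the remaining variables $Y_0,Y_1,\lambda_0,\lambda_1$ can trivially be chosen large enough), and your remark that Slater must be applied on the primal side, because the equality $\sum_i\Tr_1[J(I_i)]=\1$ is encoded as two opposing inequalities, is accurate.
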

\begin{proof}
There is an interior point $X > 0$ that fulfills $\phi(X) = D$ and a $Y=Y^\ast$ such that $\phi^\ast(Y) \geq C$. By Slater's theorem strong duality holds for the SDP $\left(\phi, C, D\right)$.
\end{proof}

Using Thm.~\ref{thm:SDP} it is therefore possible to explicitly state the SDP that yields the information-disturbance tradeoff curve for any general POVM in the case where the measurement-error is quantified by the worst-case $l_\infty$-distance and the disturbance is quantified by the diamond norm. 
\begin{figure}[ht]
	\centering
		\includegraphics[clip, trim=4cm 9cm 4cm 9cm, width=0.900\textwidth]{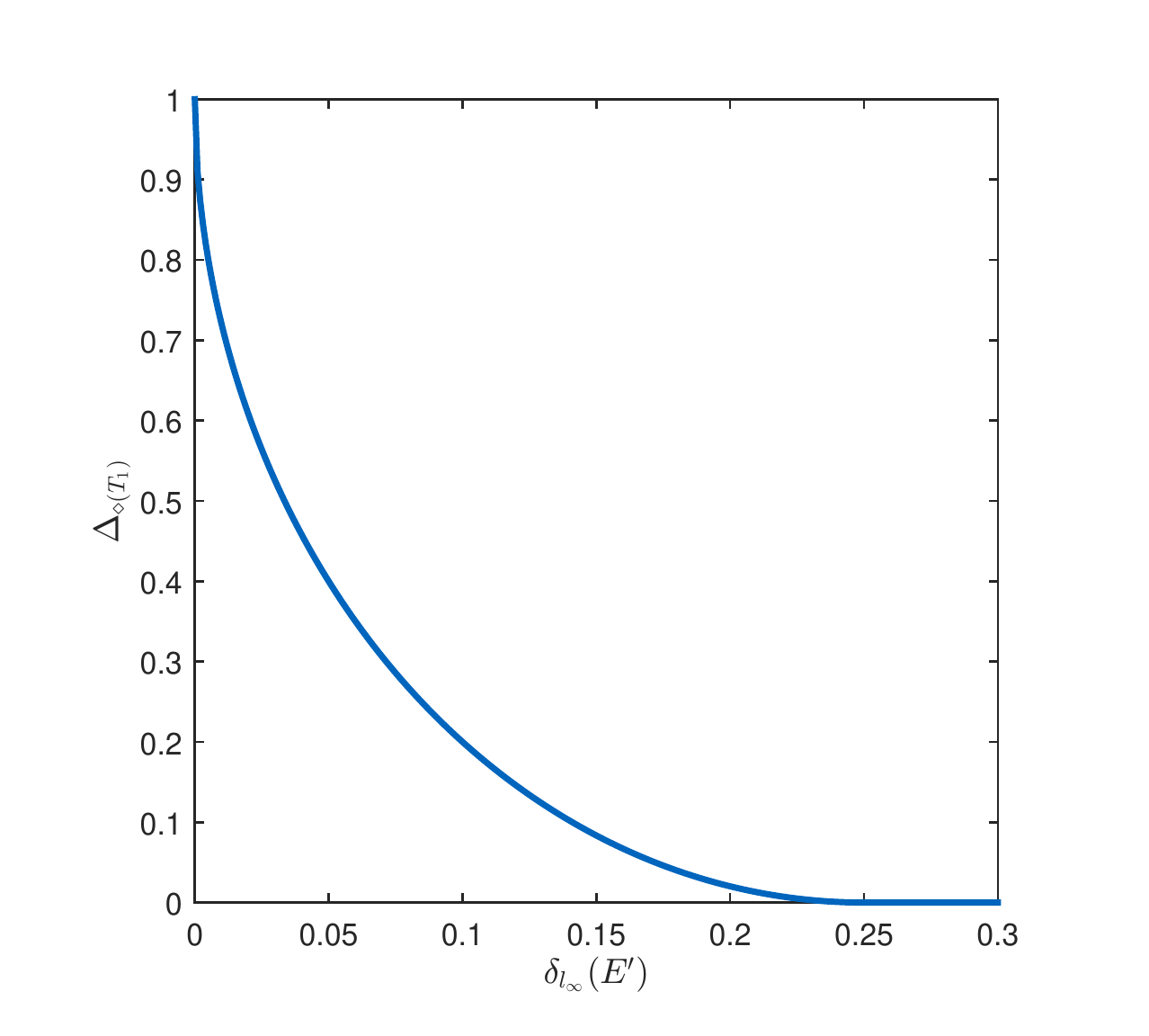}
	\caption{The information-disturbance tradeoff for a qubit SIC POVM target measurement.}
	\label{fig:SDP_dim2_4_SIC_POVM}
\end{figure}
\begin{figure}[ht]
	\centering
		\includegraphics[clip, trim=4cm 9cm 4cm 9cm, width=0.900\textwidth]{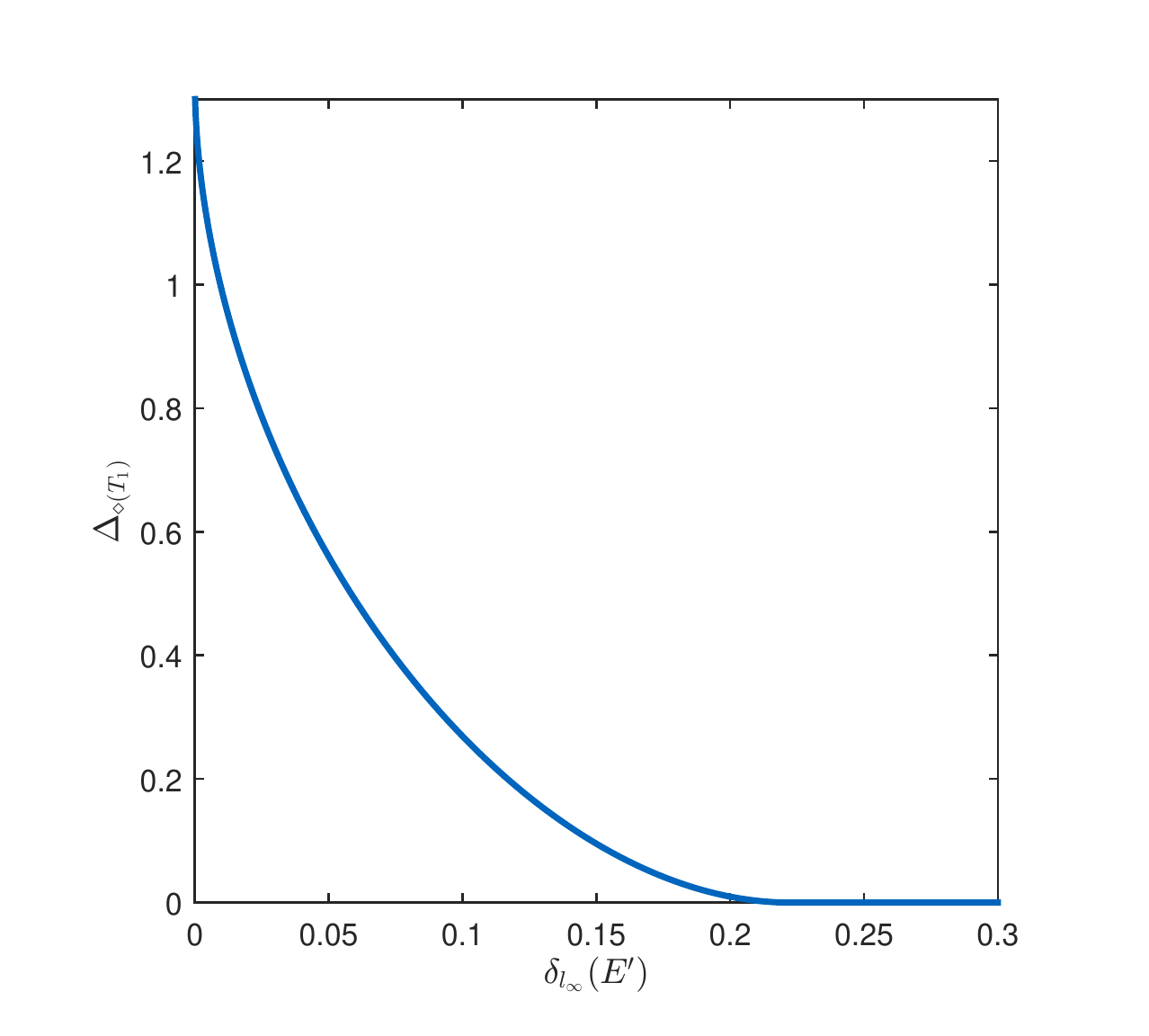}
	\caption{The information-disturbance tradeoff for a qutrit SIC POVM target measurement.}
	\label{fig:SDP_dim3_9_SIC_POVM}
\end{figure}
\vspace*{5pt}
\paragraph{\bf SIC POVM} 
As it is a prominent application in various fields in quantum information theory, this section analyzes the example of a symmetric, informationally complete (SIC) POVM as target measurement. A SIC POVM is defined by a set of $d^2$ subnormalized rank-$1$ projectors $\left\{P_i/d\right\}_{i=1}^{d^2}$, which have equal pairwise Hilbert-Schmidt inner products, $\tr{P_iP_j}/d^2=1/d^2(d+1)$ for $i \neq j$. Figure~\ref{fig:SDP_dim2_4_SIC_POVM} and ~\ref{fig:SDP_dim3_9_SIC_POVM} show the information-disturbance tradeoff for a qubit SIC POVM and qutrit SIC POVM as target measurement respectively. In two dimensions, we considered the following SIC POVM represented by the four Bloch vectors $(0,0,1)$, $(2\sqrt{2}/3, 0,-1/3)$, $(-\sqrt{2}/3, \sqrt{2/3},-1/3)$ and $(-\sqrt{2}/3, -\sqrt{2/3},-1/3)$. In dimension three, the nine explicit (unnormalized) vectors of the SIC POVM under consideration are $(0,1,-1)$, $(0,1,-\eta)$, $(0,1,-\eta^2)$, $(-1,0,1)$, $(-\eta,0,1)$, $(-\eta^2,0,1)$, $(1,-1,0)$, $(1,-\eta,0)$ and $(1, -\eta^2,0)$ with $\eta = \exp{2\pi i/3}$. To solve the SDP stated in Thm.~\ref{thm:SDP} for this particular example, we used cvx, a package for specifying and solving convex programs \cite{cvx, Grant_Boyd_2008} in {MATLAB} \cite{Matlab}. 

The solution of the SDP is compared to an instrument similar to the one found in Thm.~\ref{thm:universality} consisting of an inherit POVM $E' = tE+(1-t) \1/d$, $t\in [0,1]$, together with the L\"uders channel. The symmetry of the SIC POVM most likely leads to this agreement. However, further investigation would be necessary to get a better understanding of this observation.

\section*{Acknowledgment}
The authors would like to thank Teiko Heinosaari for many useful comments. 
AKHs work is supported by the Elite Network of Bavaria through the PhD program of excellence \textit{Exploring Quantum Matter}.
This research was supported in part by the National Science Foundation under Grant No. NSF PHY11-25915.

\newpage
\appendix

\section*{Appendix}

\paragraph{\bf Proof that average- and worst-case construction satisfy Assumption~\ref{assum:1} and Assumption~\ref{assum:2}.}

\begin{lemma}
If $\tilde{\Delta}:\cS_d\times\cS_d\rightarrow[0,\infty]$ satisfies 
\begin{enumerate}[(i)]
	\item $\tilde{\Delta}(\rho,\rho)=0$,
	\item convexity in its first argument and
	\item unitary invariance,
\end{enumerate}
then the  worst-case as well as the average-case construction
\begin{eqnarray*}
\Delta_\infty(\Phi)&:=& \sup_{\rho\in S} \tilde{\Delta}\left(\Phi(\rho),\rho\right) \ \ \text{ and}\\
\Delta_\mu(\Phi)&:=& \int_{\cS_d} \tilde{\Delta}\left(\Phi(\rho),\rho\right)\; \mathrm{d}\mu(\rho), 
\end{eqnarray*}
with $\mu$ a unitarily invariant measure on $\cS_d$ and $S\subseteq\cS_d$ a unitarily closed subset,
satisfy Assumption~\ref{assum:1}.
\end{lemma}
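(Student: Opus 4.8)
The plan is to verify the three clauses of Assumption~\ref{assum:1} --- vanishing on $\id$, convexity, and basis-independence --- separately, reducing each in turn to the corresponding hypothesis (i)--(iii) on $\tilde{\Delta}$ together with one elementary property of the supremum (respectively of the integral). The same three steps handle $\Delta_\infty$ and $\Delta_\mu$ in parallel, so I would run them side by side.

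First I would dispose of clause (a): since $\Delta_\infty(\id)=\sup_{\rho\in S}\tilde{\Delta}(\rho,\rho)$ and $\Delta_\mu(\id)=\int_{\cS_d}\tilde{\Delta}(\rho,\rho)\,\mathrm{d}\mu(\rho)$, hypothesis (i) forces both expressions to be $0$. For clause (b) I would use convexity of $\tilde{\Delta}$ in its first argument: writing $\Phi=\lambda\Phi_1+(1-\lambda)\Phi_2$, one has, pointwise in $\rho$, $\tilde{\Delta}(\Phi(\rho),\rho)\le\lambda\tilde{\Delta}(\Phi_1(\rho),\rho)+(1-\lambda)\tilde{\Delta}(\Phi_2(\rho),\rho)$ by (ii); taking $\sup_{\rho\in S}$ and using subadditivity of the supremum gives $\Delta_\infty(\Phi)\le\lambda\Delta_\infty(\Phi_1)+(1-\lambda)\Delta_\infty(\Phi_2)$, while integrating against $\mathrm{d}\mu$ and using linearity/monotonicity of the integral gives the analogous bound for $\Delta_\mu$.

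For clause (c) the key observation is the identity $\tilde{\Delta}\big(U\Phi(U^*\rho U)U^*,\rho\big)=\tilde{\Delta}\big(\Phi(U^*\rho U),U^*\rho U\big)$, which is precisely hypothesis (iii) applied with the unitary $U^*$ (so that conjugating both arguments of $\tilde{\Delta}$ by $U^*$ leaves it unchanged). Substituting $\sigma:=U^*\rho U$ and invoking the fact that $S$ is unitarily closed --- so $\sigma$ ranges over all of $S$ as $\rho$ does --- turns $\sup_{\rho\in S}\tilde{\Delta}\big(\Phi(U^*\rho U),U^*\rho U\big)$ into $\sup_{\sigma\in S}\tilde{\Delta}(\Phi(\sigma),\sigma)=\Delta_\infty(\Phi)$, which is Eq.~(\ref{eq:assumpDbi}). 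For $\Delta_\mu$ the same change of variables applies, and here one additionally uses that $\mu$ is unitarily invariant: the pushforward of $\mu$ under $\rho\mapsto U^*\rho U$ is again $\mu$, so the integral is unchanged.

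There is no genuine obstacle in this argument; the only points that warrant a moment's care are that the $\sup$/integral constructions really do "see" the hypotheses on $S$ and $\mu$ in the right place (unitary closedness of $S$ in clause (c) for $\Delta_\infty$, unitary invariance of $\mu$ in clause (c) for $\Delta_\mu$), and that $\rho\mapsto\tilde{\Delta}(\Phi(\rho),\rho)$ is $\mu$-measurable so that $\Delta_\mu$ is well-defined --- which one may take as part of the standing setup or note follows from the (lower semi-)continuity of the concrete $\tilde{\Delta}$'s of interest. Finally, as the paper's footnote points out, only the case in which $U$ is a product of a diagonal unitary and a permutation matrix is actually needed for Thm.~\ref{thm:universality}, and restricting (iii) to such $U$ leaves the argument above untouched.
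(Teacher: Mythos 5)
Your proposal is correct and follows essentially the same route as the paper's own proof: clause (a) directly from hypothesis (i), clause (b) from pointwise convexity plus subadditivity of the supremum (resp.\ linearity of the integral), and clause (c) from unitary invariance of $\tilde{\Delta}$ combined with the substitution $\sigma=U^*\rho U$, using unitary closedness of $S$ for $\Delta_\infty$ and unitary invariance of $\mu$ for $\Delta_\mu$. Your added remark on $\mu$-measurability is a harmless refinement the paper leaves implicit.
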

\begin{proof}
Let $\tilde{\Delta}:\cS_d\times\cS_d\rightarrow[0,\infty]$ be such that it
\begin{enumerate}[(i)]
	\item satisfies $\tilde{\Delta}(\rho,\rho)=0$,
	\item is convex in its first argument, i.e., for any quantum state $\sigma, \sigma', \rho \in \cS_d$
	\be
	\tilde{\Delta}\left( \lambda \sigma + (1-\lambda) \sigma' , \rho \right) \leq \lambda \tilde{\Delta} \left(\sigma, \rho \right) + (1-\lambda) \tilde{\Delta} \left( \sigma', \rho \right) \ \ \forall \lambda \in \left[0,1\right], \nonumber
	\ee
	\item and is unitarily invariant, i.e., for any quantum state $\sigma,  \rho \in \cS_d$
	\be
	\tilde{\Delta}\left( U^\ast \sigma U, U^\ast \rho U \right) =  \tilde{\Delta}\left( \sigma, \rho \right) \ \ \forall \text{ unitaries } U \in \cM_d. \nonumber
	\ee
\end{enumerate}
Then its worst case $\Delta_\infty$ satisfies
\begin{enumerate}[(a)]
	\item $\Delta_\infty(\id) = 0$, since
	\begin{equation}
	\Delta_\infty(\id) = \sup_{\rho\in S} \tilde{\Delta}\left(\id(\rho),\rho\right) = \sup_{\rho\in S} \tilde{\Delta}\left(\rho,\rho\right) = 0,\nonumber
	\end{equation}
	\item is convex, i.e., for every quantum channel $\Phi, \Phi' \in \cT_d$
	\begin{equation}
	\Delta_\infty \left( \lambda \Phi + (1-\lambda) \Phi' \right) \leq \lambda \Delta_\infty \left( \Phi \right) + (1-\lambda) \Delta_\infty \left( \Phi' \right) \ \ \forall \lambda \in \left[0,1\right],\nonumber
	\end{equation}
	because
	\begin{eqnarray*}
	\Delta_\infty \left( \lambda \Phi + (1-\lambda) \Phi'\right) &=& \sup_{\rho\in S} \tilde{\Delta}\left(\lambda \Phi(\rho) + (1-\lambda) \Phi'(\rho),\rho\right) \\
	&\leq& \lambda \sup_{\rho\in S} \tilde{\Delta}\left( \Phi(\rho), \rho \right) + (1-\lambda) \sup_{\rho\in S} \left(\Phi'(\rho),\rho\right) \\
	&=& \lambda \Delta_\infty \left( \Phi \right) + (1-\lambda) \Delta_\infty \left( \Phi' \right),
	\end{eqnarray*}
	\item and is basis-independent, i.e., for every unitary $U \in \cM_d$ and every channel $\Phi \in \cT_d$, we have that
	\begin{equation}
	\Delta_\infty \left(U \Phi\left(U^\ast \cdot U \right) U^\ast \right) = \Delta_\infty(\Phi),\nonumber
	\end{equation}
	since
	\begin{eqnarray*}
	\Delta_\infty \left(U \Phi\left(U^\ast \cdot U \right) U^\ast \right) &=& \sup_{\rho\in S} \tilde{\Delta}\left(U \Phi\left(U^\ast \rho U \right) U^\ast ,\rho\right) \\
	&=& \sup_{\rho\in S} \tilde{\Delta}\left( \Phi\left(U^\ast \rho U \right)  ,U^\ast\rho U \right) \\
	&=& \sup_{\rho\in S} \tilde{\Delta}\left( \Phi\left( \rho  \right)  ,\rho  \right) \\
	&=& \Delta_\infty(\Phi).
	\end{eqnarray*}
\end{enumerate}
The average case $\Delta_\mu$ satisfies 
\begin{enumerate}[(a)]
	\item $\Delta_\mu(\id) = 0$, since
	\begin{equation}
	\Delta_\mu(\id) = \int_{\cS_d} \tilde{\Delta}\left(\id(\rho),\rho\right)\; \mathrm{d}\mu(\rho) = \int_{\cS_d}  \tilde{\Delta}\left(\rho,\rho\right)\; \mathrm{d}\mu(\rho) = 0,\nonumber
	\end{equation}
	\item is convex, i.e.,  for every quantum channel $\Phi, \Phi' \in \cT_d$
	\begin{equation}
	\Delta_\mu \left( \lambda \Phi + (1-\lambda) \Phi' \right) \leq \lambda \Delta_\mu \left( \Phi \right) + (1-\lambda) \Delta_\mu \left( \Phi' \right) \ \ \forall \lambda \in \left[0,1\right],\nonumber
	\end{equation}
	because
	\begin{eqnarray*}
	\Delta_\mu \left( \lambda \Phi + (1-\lambda) \Phi'\right) &=& \int_{\cS_d}  \tilde{\Delta}\left(\lambda \Phi(\rho) + (1-\lambda) \Phi'(\rho),\rho\right)\; \mathrm{d}\mu(\rho) \\
	&\leq& \lambda \int_{\cS_d} \tilde{\Delta}\left( \Phi(\rho), \rho \right) \; \mathrm{d}\mu(\rho) + (1-\lambda) \int_{\cS_d} \tilde{\Delta} \left(\Phi'(\rho),\rho\right) \; \mathrm{d}\mu(\rho)\\
	&=& \lambda \Delta_\mu \left( \Phi \right) + (1-\lambda) \Delta_\mu \left( \Phi' \right),
	\end{eqnarray*}
	\item and is basis-independent, i.e., for every unitary $U \in \cM_d$ and every channel $\Phi \in \cT_d$, we have that
	\begin{equation}
	\Delta_\mu \left(U \Phi\left(U^\ast \cdot U \right) U^\ast \right) = \Delta_\mu(\Phi),\nonumber
	\end{equation}
	since
	\begin{eqnarray*}
	\Delta_\mu \left(U \Phi\left(U^\ast \cdot U \right) U^\ast \right) &=& \int_{\cS_d}  \tilde{\Delta}\left(U \Phi\left(U^\ast \rho U \right) U^\ast ,\rho\right)\; \mathrm{d}\mu(\rho) \\
	&=& \int_{\cS_d}  \tilde{\Delta}\left( \Phi\left(U^\ast \rho U \right)  ,U^\ast\rho U \right)\; \mathrm{d}\mu(\rho) \\
	&=& \int_{\cS_d}  \tilde{\Delta}\left( \Phi\left( \rho  \right)  ,\rho  \right)\; \mathrm{d}\mu(\rho) \\
	&=& \Delta_\mu(\Phi),
	\end{eqnarray*}
	where we have used the fact that $\mu$ is a unitarily invariant measure on $\cS_d$. 
\end{enumerate}
The worst-case construction as well as the average-case construction therefore satisfy Assumption~\ref{assum:1} as claimed. 
\end{proof}

\begin{lemma}
If $\tilde{\delta}:\cP_d\times\cP_d\rightarrow[0,\infty]$ on the space of probability distributions $\cP_d:=\big\{q\in\R^d|\sum_{i=1}^d q_i=1\wedge \forall i: q_i\geq 0\big\}$ applied to the target distribution $p_i:=\langle i|\rho|i\rangle$ and the actually measured distribution $p_i':=\tr{\rho E_i'}$ satisfies
\begin{enumerate}[(i)]
	\item $\tilde{\delta}(q,q)=0$,
	\item convexity in its second argument and
	\item invariance under joint permutations,
\end{enumerate}
then the worst-case as well as the average-case construction 
\bea
\delta_{\infty}(E')&:=&\sup_{\rho\in S} \tilde{\delta}(p,p'),\nonumber\\
\delta_{\mu}(E')&:=&\int_{\cS_d} \tilde{\delta}(p,p') \; \mathrm{d}\mu(\rho),\nonumber 
\eea
 both satisfy Assumption~\ref{assum:2}. 
\end{lemma}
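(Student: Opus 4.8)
The plan is to mirror the argument just given for $\Delta$, verifying the four parts of Assumption~\ref{assum:2} in turn for both $\delta_\infty$ and $\delta_\mu$. Throughout, write $p_i(\rho):=\langle i|\rho|i\rangle$ for the target distribution and $p_i'(M,\rho):=\tr{\rho M_i}$ for the distribution produced by a POVM $M\in\cE_d$ on a state $\rho$, so that $\delta_\infty(M)=\sup_{\rho\in S}\tilde{\delta}\big(p(\rho),p'(M,\rho)\big)$ and $\delta_\mu(M)=\int_{\cS_d}\tilde{\delta}\big(p(\rho),p'(M,\rho)\big)\,\mathrm{d}\mu(\rho)$. For part (a), when $M_i=|i\rangle\langle i|$ we have $p_i'(M,\rho)=\langle i|\rho|i\rangle=p_i(\rho)$ for every $\rho$, so the integrand/summand equals $\tilde{\delta}\big(p(\rho),p(\rho)\big)=0$ by hypothesis (i), and hence both the supremum and the integral vanish.

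For part (b), note that $M\mapsto p'(M,\rho)$ is affine for each fixed $\rho$, so by convexity of $\tilde{\delta}$ in its second argument (hypothesis (ii)) the map $M\mapsto\tilde{\delta}\big(p(\rho),p'(M,\rho)\big)$ is convex. A pointwise supremum of convex functions is convex, and an average against a positive measure of convex functions is convex, which gives convexity of both $\delta_\infty$ and $\delta_\mu$.

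For parts (c) and (d), the key point is that the relevant symmetry applied to the POVM can be absorbed into a change of variables on the state, while leaving the target distribution $p(\rho)$ invariant up to the same relabelling. If $M_i'=U_\pi^* M_{\pi(i)}U_\pi$, then $p_i'(M',\rho)=\tr{(U_\pi\rho U_\pi^*)M_{\pi(i)}}=p_{\pi(i)}'(M,U_\pi\rho U_\pi^*)$, whereas $U_\pi|i\rangle=|\pi(i)\rangle$ gives $p_i(\rho)=\langle\pi(i)|U_\pi\rho U_\pi^*|\pi(i)\rangle=p_{\pi(i)}(U_\pi\rho U_\pi^*)$; hence the pair of distributions entering $\tilde{\delta}$ for $M'$ at $\rho$ is a joint permutation of the pair entering for $M$ at $U_\pi\rho U_\pi^*$, and hypothesis (iii) equates their $\tilde{\delta}$-values. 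Since $S$ is unitarily closed and $\mu$ is unitarily invariant, taking the supremum (resp.\ the integral) over $\rho$ then yields $\delta(M')=\delta(M)$. Part (d) is the same but easier: for a diagonal unitary $D$, $M_i'=D^*M_iD$ gives $p_i'(M',\rho)=p_i'(M,D\rho D^*)$, and since $D$ fixes each $|i\rangle\langle i|$ up to a phase that cancels, $p_i(\rho)=p_i(D\rho D^*)$; no relabelling is needed and the same change-of-variables argument closes the case.

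I do not expect a genuine obstacle here. The only point requiring care is bookkeeping: keeping straight which argument of $\tilde{\delta}$ carries which distribution, and checking in each case that the transformation applied to $\rho$ leaves the target distribution $p(\rho)$ unchanged (up to the matching permutation) — which is precisely where the special structure of the target measurement $(|i\rangle\langle i|)_{i=1}^d$, namely that its elements are diagonal rank-one projections permuted by the $U_\pi$, is used.
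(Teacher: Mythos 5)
Your proof is correct and follows essentially the same route as the paper's: verify the four parts of Assumption~\ref{assum:2} directly, with (a) and (b) immediate from hypothesis (i), affinity of $M\mapsto p'(M,\rho)$, and convexity of $\tilde{\delta}$ in its second argument, and with (c) and (d) handled by the same change-of-variables on $\rho$ (using unitary closure of $S$, resp.\ unitary invariance of $\mu$) combined with joint permutation invariance of $\tilde{\delta}$ and the phase-cancellation for diagonal unitaries. The paper merely spells out these trace manipulations line by line; your more compact bookkeeping is the identical argument.
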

\begin{proof}
Let $\tilde{\delta}:\cP_d\times\cP_d\rightarrow[0,\infty]$ be such that it
\begin{enumerate}[(i)]
	\item satisfies $\tilde{\delta}(q,q)=0$,
	\item is convex in its second argument, i.e., for every probability distribution $p,q,q' \in \cP_d$
	\begin{equation}
	\tilde{\delta}( p , \lambda q +  (1-\lambda) q') \leq \lambda \tilde{\delta}(p,q) +(1-\lambda) \tilde{\delta}(p,q')\ \ \forall \lambda \in [0,1],\nonumber
	\end{equation}
	\item and invariant under joint permutations,i.e., for every quantum state $\rho \in \cS_d$ and every POVM $E,E' \in \cE_d$
	\begin{equation}
	\tilde{\delta}\left( \left( \tr{\rho U_{\pi}^\ast E_{\pi(i)} U_\pi  } \right)_{i=1}^d, \left( \tr{\rho U_\pi^\ast  E_{\pi(i)}'} U_\pi \right)_{i=1}^d \right) = 	\tilde{\delta}\left( \left( \tr{\rho  E_{i}  } \right)_{i=1}^d, \left( \tr{\rho  E_{i}'  } \right)_{i=1}^d \right).\nonumber
	\end{equation}
\end{enumerate}
Then its worst case $\delta_\infty$ satisfies 
\begin{enumerate}[(a)]
	\item $\delta_\infty \left( \left( \kb{i}{i} \right)_{i=1}^d \right) = 0$, since
	\begin{equation*}
	\delta_\infty \left( \left( \kb{i}{i} \right)_{i=1}^d \right) = \sup_{\rho\in S} \tilde{\delta}\left(\left( \kb{i}{i} \right)_{i=1}^d ,\left( \kb{i}{i} \right)_{i=1}^d \right)=0,
	\end{equation*}
	\item is convex, i.e., for any POVM $Q,Q' \in \cE_d$
	\begin{equation*}
	\delta_\infty \left( \lambda Q + (1-\lambda)Q' \right) \leq 	\lambda \delta_\infty \left(  Q \right) + (1-\lambda)\delta_\infty \left( Q' \right) \ \ \forall \lambda \in [0,1],
	\end{equation*}
	because 
	\begin{eqnarray*}
		\delta_\infty \left( \lambda Q + (1-\lambda)Q' \right)  &=& \sup_{\rho\in S} \tilde{\delta}(p,\lambda q+(1-\lambda)q') \\
		&\leq& \lambda \sup_{\rho\in S}  \tilde{\delta}(p, q) + (1-\lambda) \sup_{\rho\in S}  \tilde{\delta}(p,q') \\
		&=& \lambda \delta_\infty \left(  Q \right) + (1-\lambda)\delta_\infty \left( Q' \right),
	\end{eqnarray*}
	where we have denoted the corresponding probability distribution as $q_i:= \tr{\rho Q_i}$ and $q_i':= \tr{\rho Q_i'}$. 
	\item is permutation-invariant, i.e.,	for every permutation $\pi\in S_d$ and any POVM $E \in\cE_d$
 \be 
 \delta_\infty \left(\left( U_\pi^\ast E_{\pi(i)} U_\pi \right)_{i=1}^d \right)=\delta_\infty \left( \left(E_{i}\right)_{i=1}^d \right), \nonumber
 \ee
 where $U_\pi$ is the permutation matrix that acts as $U_\pi |i\rangle=|\pi(i)\rangle$, since
 \begin{eqnarray*}
\delta_\infty \left(\left( U_\pi^\ast E_{\pi(i)} U_\pi \right)_{i=1}^d\right) &=& \sup_{\rho\in S} \tilde{\delta}\left(\left(\tr{\rho \kb{i}{i}} \right)_{i=1}^d,\left(\tr{\rho U_\pi^\ast E_{\pi(i)} U_\pi} \right)_{i=1}^d\right) \\
&=& \sup_{\rho\in S} \tilde{\delta}\left(\left(\tr{\rho \kb{i}{i}} \right)_{i=1}^d,\left(\tr{U_\pi \rho U_\pi^\ast E_{\pi(i)} } \right)_{i=1}^d\right) \\
&=& \sup_{\rho\in S} \tilde{\delta}\left(\left(\tr{U_\pi^\ast \rho U_\pi \kb{i}{i}} \right)_{i=1}^d,\left(\tr{ \rho  E_{\pi(i)} } \right)_{i=1}^d\right) \\
&=& \sup_{\rho\in S} \tilde{\delta}\left(\left(\tr{ \rho U_\pi \kb{i}{i}U_\pi^\ast} \right)_{i=1}^d,\left(\tr{ \rho  E_{\pi(i)} } \right)_{i=1}^d\right) \\
&=& \sup_{\rho\in S} \tilde{\delta}\left(\left(\tr{ \rho \kb{\pi(i)}{\pi(i)}} \right)_{i=1}^d,\left(\tr{ \rho  E_{\pi(i)} } \right)_{i=1}^d\right) \\
&=& \sup_{\rho\in S} \tilde{\delta}\left(\left(\tr{ \rho \kb{i}{i}} \right)_{i=1}^d,\left(\tr{ \rho  E_{i} } \right)_{i=1}^d\right) \\
&=& \delta_\infty\left(\left(E_{i}\right)_{i=1}^d\right),
 \end{eqnarray*}
 \item and it satisfies for every diagonal unitary $D\in\cM_d$ and any POVM $E\in\cE_d$
\be  
\delta_\infty \left( (D^\ast E_i D)_{i=1}^d  \right)=\delta_\infty \left(  (E_i)_{i=1}^d  \right), \nonumber 
\ee
because
\begin{eqnarray*}
\delta_\infty \left( (D^\ast E_i D)_{i=1}^d  \right) &=&  \sup_{\rho\in S} \tilde{\delta}\left( \left(\tr{ \rho \kb{i}{i}} \right)_{i=1}^d,\left(\tr{ \rho D^\ast E_{i} D } \right)_{i=1}^d \right) \\
&=&  \sup_{\rho\in S} \tilde{\delta}\left( \left(\tr{ \rho \kb{i}{i}} \right)_{i=1}^d,\left(\tr{ D \rho D^\ast E_{i}  } \right)_{i=1}^d \right) \\
&=&  \sup_{\rho\in S} \tilde{\delta}\left( \left(\tr{ D \rho D^\ast \kb{i}{i}} \right)_{i=1}^d,\left(\tr{  \rho  E_{i}  } \right)_{i=1}^d \right) \\
&=&  \sup_{\rho\in S} \tilde{\delta}\left( \left(\tr{  \rho D^\ast \kb{i}{i}D} \right)_{i=1}^d,\left(\tr{  \rho  E_{i}  } \right)_{i=1}^d \right) \\
&=&  \sup_{\rho\in S} \tilde{\delta}\left( \left(\tr{  \rho \kb{i}{i}} \right)_{i=1}^d,\left(\tr{  \rho  E_{i}  } \right)_{i=1}^d \right) \\
&=& \delta_\infty \left(  (E_i)_{i=1}^d  \right).
\end{eqnarray*}
\end{enumerate}
Similarly, its average case $\delta_\mu$ satisfies 
\begin{enumerate}[(a)]
	\item $\delta_\mu \left( \left( \kb{i}{i} \right)_{i=1}^d \right) = 0$, since
	\begin{equation*}
	\delta_\mu \left( \left( \kb{i}{i} \right)_{i=1}^d \right) = \int_{\cS_d}  \tilde{\delta}\left(\left( \kb{i}{i} \right)_{i=1}^d ,\left( \kb{i}{i} \right)_{i=1}^d \right)\; \mathrm{d}\mu(\rho)=0,
	\end{equation*}
	\item is convex, i.e., for any POVM $Q,Q' \in \cE_d$
	\begin{equation*}
	\delta_\mu \left( \lambda Q + (1-\lambda)Q' \right) \leq 	\lambda \delta_\mu \left(  Q \right) + (1-\lambda)\delta_\mu \left( Q' \right) \ \ \forall \lambda \in [0,1],
	\end{equation*}
	because 
	\begin{eqnarray*}
		\delta_\mu \left( \lambda Q + (1-\lambda)Q' \right)  &=& \int_{\cS_d}  \tilde{\delta}(p,\lambda q+(1-\lambda)q')\; \mathrm{d}\mu(\rho)  \\
		&\leq& \lambda \int_{\cS_d}  \tilde{\delta}(p, q) \; d\mu(\rho) + (1-\lambda)\int_{\cS_d}   \tilde{\delta}(p,q')\; \mathrm{d}\mu(\rho) \\
		&=& \lambda \delta_\mu \left(  Q \right) + (1-\lambda)\delta_\mu \left( Q' \right),
	\end{eqnarray*}
	where we have denoted the corresponding probability distribution as $q_i:= \tr{\rho Q_i}$ and $q_i':= \tr{\rho Q_i'}$.
	\item is permutation-invariant, i.e.	for every permutation $\pi\in S_d$ and any $E\in\cE_d$
 \be 
 \delta_\mu \left(\left( U_\pi^\ast E_{\pi(i)} U_\pi \right)_{i=1}^d \right)=\delta_\mu \left( \left(E_{i}\right)_{i=1}^d \right) \nonumber
 \ee
 where $U_\pi$ is the permutation matrix that acts as $U_\pi |i\rangle=|\pi(i)\rangle$, since
 \begin{eqnarray*}
\delta_\mu \left(\left( U_\pi^\ast E_{\pi(i)} U_\pi \right)_{i=1}^d\right) &=& \int_{\cS_d} \tilde{\delta}\left(\left(\tr{\rho \kb{i}{i}} \right)_{i=1}^d,\left(\tr{\rho U_\pi^\ast E_{\pi(i)} U_\pi} \right)_{i=1}^d\right) \; \mathrm{d}\mu(\rho)  \\
&=& \int_{\cS_d}  \tilde{\delta}\left(\left(\tr{\rho \kb{i}{i}} \right)_{i=1}^d,\left(\tr{U_\pi \rho U_\pi^\ast E_{\pi(i)} } \right)_{i=1}^d\right) \; \mathrm{d}\mu(\rho)\\
&=& \int_{\cS_d}   \tilde{\delta}\left(\left(\tr{U_\pi^\ast \rho U_\pi \kb{i}{i}} \right)_{i=1}^d,\left(\tr{ \rho  E_{\pi(i)} } \right)_{i=1}^d\right)\; \mathrm{d}\mu(\rho) \\
&=& \int_{\cS_d}  \tilde{\delta}\left(\left(\tr{ \rho U_\pi \kb{i}{i}U_\pi^\ast} \right)_{i=1}^d,\left(\tr{ \rho  E_{\pi(i)} } \right)_{i=1}^d\right) \; \mathrm{d}\mu(\rho) \\
&=& \int_{\cS_d}   \tilde{\delta}\left(\left(\tr{ \rho \kb{\pi(i)}{\pi(i)}} \right)_{i=1}^d,\left(\tr{ \rho  E_{\pi(i)} } \right)_{i=1}^d\right)\; \mathrm{d}\mu(\rho) \\
&=& \int_{\cS_d} \tilde{\delta}\left(\left(\tr{ \rho \kb{i}{i}} \right)_{i=1}^d,\left(\tr{ \rho  E_{i} } \right)_{i=1}^d\right)  \; \mathrm{d}\mu(\rho) \\
&=& \delta_\mu \left(\left(E_{i}\right)_{i=1}^d\right),
 \end{eqnarray*}
 \item and it satisfies for every diagonal unitary $D\in\cM_d$ and any $E\in\cE_d$
\be  
\delta_\mu \left( (D^\ast E_i D)_{i=1}^d  \right)=\delta_\mu \left(  (E_i)_{i=1}^d  \right), \nonumber 
\ee
because
\begin{eqnarray*}
\delta_\mu \left( (D^\ast E_i D)_{i=1}^d  \right) &=&  \int_{\cS_d}  \tilde{\delta}\left( \left(\tr{ \rho \kb{i}{i}} \right)_{i=1}^d,\left(\tr{ \rho D^\ast E_{i} D } \right)_{i=1}^d \right)\; \mathrm{d}\mu(\rho)  \\
&=&  \int_{\cS_d}   \tilde{\delta}\left( \left(\tr{ \rho \kb{i}{i}} \right)_{i=1}^d,\left(\tr{ D \rho D^\ast E_{i}  } \right)_{i=1}^d \right) \; \mathrm{d}\mu(\rho)\\
&=&  \int_{\cS_d}  \tilde{\delta}\left( \left(\tr{ D \rho D^\ast \kb{i}{i}} \right)_{i=1}^d,\left(\tr{  \rho  E_{i}  } \right)_{i=1}^d \right) \; \mathrm{d}\mu(\rho) \\
&=&  \int_{\cS_d}   \tilde{\delta}\left( \left(\tr{  \rho D^\ast \kb{i}{i}D} \right)_{i=1}^d,\left(\tr{  \rho  E_{i}  } \right)_{i=1}^d \right) \; \mathrm{d}\mu(\rho)\\
&=&  \int_{\cS_d}   \tilde{\delta}\left( \left(\tr{  \rho \kb{i}{i}} \right)_{i=1}^d,\left(\tr{  \rho  E_{i}  } \right)_{i=1}^d \right) \; \mathrm{d}\mu(\rho)\\
&=& \delta_\mu \left(  (E_i)_{i=1}^d  \right).
\end{eqnarray*}
\end{enumerate}
The worst-case as well as the average-case construction therefore satisfy Assumption~\ref{assum:2}.
\end{proof}

\paragraph{\bf Proof of Corollary~\ref{cor:reduc}.}

\begin{proof}
The eigenvalues of $J_{T_i}$, $i=1,2$, can be obtained from the expectation values of the mutually orthogonal projectors, i.e.,
\be 
x_1=\frac{\tr{(P_x\otimes\1) J_T}}{\tr{P_x}}\quad\text{and}\quad x_2=\frac{\tr{(\1\otimes P_x) J_T}}{\tr{P_x}},\quad x\in\{a,b,c\}. \nonumber
\ee
Since we know that $a,b,c$ are related to $\alpha,\beta,\gamma$ via $\alpha=d^2 a,\ \beta=b-c,\ \gamma=d(c-a)$, we get
\begin{eqnarray}
\alpha_1 &=& d^2 \frac{\tr{(P_a\otimes\1) J_T}}{\tr{P_a}} \nonumber \\
&=& d^2 \frac{\tr{\left( \1_{d^3} - \sum_{i=1}^d |ii\rangle\langle ii| \otimes \1_d \right) J_T}}{d^2-d}. \nonumber
\end{eqnarray}
Similarly, 
\begin{eqnarray}
\beta_1 &=& \frac{\tr{(P_b\otimes\1) J_T}}{\tr{P_b}} - \frac{\tr{(P_c\otimes\1) J_T}}{\tr{P_c}} \nonumber \\
&=& \frac{\tr{\left(\frac1d\sum_{i,j=1}^d |ii\rangle\langle jj|\otimes\1_d \right) J_T}}{1} \nonumber \\
&& - \frac{\tr{\left(\sum_{i=1}^d  |ii\rangle\langle ii| \otimes \1_d - \frac1d\sum_{i,j=1}^d |ii\rangle\langle jj| \otimes \1_d \right) J_T}}{d-1}, \nonumber
\end{eqnarray}
and
\begin{eqnarray}
a_2 &=& \frac{\tr{(\1\otimes P_a) J_T}}{\tr{P_a}} \nonumber \\
&=& \frac{\tr{\left(\1_{d^3}-\1_d \otimes \sum_{i=1}^d |ii\rangle\langle ii|\right) J_T}}{d^2-d}. \nonumber
\end{eqnarray}
Using the diagrammatic notation introduced earlier, i.e.,
\begin{align*}
\1_{d^3} &=:\ \Ga & \1_d\otimes\sum_{i=1}^d |ii\rangle\langle ii| &=:\ \Gb\\
\sum_{i,j=1}^d |ii\rangle\langle jj|\otimes\1_d &=:\ \Gc & \sum_{i=1}^d |ii\rangle\langle ii|\otimes\1_d &=:\ \Gd
\end{align*}
together with the isomorphic representation from Lemma~\ref{lem:iso}, the claim follows immediately. 
\end{proof}

\bibliography{Information_Disturbance_Tradeoff_Literature}{}
\bibliographystyle{ieeetr}

\end{document}